\theoremstyle{plain}
\newtheorem{theorem}{Theorem}[section]
\newtheorem{lemma}[theorem]{Lemma}
\newtheorem{corollary}[theorem]{Corollary}
\theoremstyle{remark}
\newtheorem{remark}[theorem]{Remark}
\newtheorem*{example}{Example}
\newcommand{\A}{\mathcal{A}}
\newcommand{\C}{\mathcal{C}}
\newcommand{\I}{\mathbb{I}}
\newcommand{\E}{\mathbb{E}}
\newcommand{\RR}{\mathbb{R}}
\newcommand{\NN}{\mathbb{N}}
\renewcommand{\P}{\mathbb{P}}
\newcommand{\cond}{\mathrm{cond}}
\newcommand{\cov}{\mathrm{Cov}}
\newcommand{\corr}{\mathrm{Cor}}
\newcommand{\var}{\mathrm{Var}}
\newcommand{\pois}{\mathsf{Poisson}}
\newcommand{\bern}{\mathsf{Bernoulli}}
\newcommand{\bone}{\mathbf{1}}
\newcommand{\CI}{\mathrm{CI}}
\newcommand{\DC}{\mathrm{DC}}
\newcommand{\CC}{\mathrm{CC}}
\newcommand{\bZ}{\mathbf{Z}}
\newcommand{\btheta}{\boldsymbol{\theta}}
\let\hat\widehat
\title{A dynamic mean-field statistical model of academic collaboration}
\author[S. S. Mukherjee]{Soumendu Sundar Mukherjee}
\author[T. Sadhukhan]{Tamojit Sadhukhan}
\address{
    Statistics and Mathematics Unit (SMU) \\
    Indian Statistical Institute \\
    203 B. T. Road, Kolkata 700108 \\
    West Bengal, India
}
\email{ssmukherjee@isical.ac.in}
\email{tamojit96sadhukhan@gmail.com}
\author[S. Chatterjee]{Shirshendu Chatterjee}
\address{
    Department of Mathematics\\
    City University of New York \\
    North Academic Center 8/133 \\
    160 Convent Ave \\
    New York, NY, 10031 \\
}
\email{shirshendu@ccny.cuny.edu}
\begin{document}
\maketitle

\begin{abstract}
There is empirical evidence that collaboration in academia has increased significantly during the past few decades, perhaps due to the breathtaking advancements in communication and technology during this period. Multi-author articles have become more frequent than single-author ones. Interdisciplinary collaboration is also on the rise. Although there have been several studies on the dynamical aspects of collaboration networks, systematic statistical models which theoretically explain various empirically observed features of such networks have been lacking. In this work, we propose a dynamic mean-field model and an associated estimation framework for academic collaboration networks. We primarily focus on how the degree of collaboration of a typical author, rather than the local structure of her collaboration network, changes over time. We consider several popular indices of collaboration from the literature and study their dynamics under the proposed model. In particular, we obtain exact formulae for the expectations and temporal rates of change of these indices. Through extensive simulation experiments, we demonstrate that the proposed model has enough flexibility to capture various phenomena characteristic of real-world collaboration networks. Using metadata on papers from the arXiv repository, we empirically study the mean-field collaboration dynamics in disciplines such as Computer Science, Mathematics and Physics.
\end{abstract}

\section{Introduction}\label{sec:intro}
There is empirical evidence that collaboration in academia has significantly increased during the last two decades. Advancements in communication and technology may be one of the main reasons. Multi-author articles have become more frequent than single-author articles. Collaboration between different disciplines is also on the rise \citep{porter2009science}. Quantitative study of collaboration has therefore attracted a great deal of attention and there is already a substantial body of (mostly empirical) work analyzing various aspects of collaboration, both intra- and inter-disciplinary. An incomplete list of relevant works include \cite{lawani1980quality,subramanyam1983collaborative,ajiferuke1988collaborative,de1997modelling, newman2001structure, bettencourt2008population, bettencourt2009scientific, porter2009science, tomasello2017data, abramo2018comparison, liang2018modeling, lalli2020dynamics, hurtado2021analysis, ebrahimi2021developing}.

Despite the large body of existing works, there is a shortage of flexible statistical generative models which are capable of capturing various empirically observed features of real-world collaboration networks and are accompanied by statistical estimation/inference frameworks that enjoy rigorous theoretical guarantees. In this work, we attempt to address this gap and propose a dynamic \emph{mean-field} model which takes into consideration several aspects of scientific collaboration, such as the temporal intensity of producing research articles and the role of past collaboration while selecting new collaborators. The proposed model is mean-field in the sense that we only consider the collaboration dynamics of a typical author in a pool of researchers. We develop both parametric and non-parametric frameworks for estimating the model parameters and establish  consistency and asymptotic normality results for the proposed estimators.

As a first step, we primarily focus on how the degree of collaboration of a typical author, rather than the local structure of her collaboration network, changes over time. We consider several popular indices of collaboration from the literature \citep{lawani1980quality,subramanyam1983collaborative,ajiferuke1988collaborative} and study their dynamics under the proposed model. In particular, we obtain exact formulae for the expectations and temporal rates of change of these indices. Through extensive simulation experiments, we demonstrate that the proposed model has enough flexibility to capture various phenomena characteristic of real-world collaboration networks. Using metadata on papers from the arXiv repository \citep{clement2019arxiv}, we empirically study the mean-field collaboration dynamics in disciplines such as Computer Science, Mathematics and Physics.

The rest of the paper is organised as follows. In Section~\ref{sec:model} we describe the proposed mean-field model. Section~\ref{sec:est} then describes our estimation framework. In Section~\ref{sec:indices}, we discuss several metrics of collaboration. In Section~\ref{sec:theory} we describe our main theoretical findings. The proofs are deferred to the appendix. In Section~\ref{sec:simulations} we describe some simulation experiments. Then in Section~\ref{sec:real-data}, we describe an analysis of Computer Science, Mathematics and Physics papers from the arXiv repository. Section~\ref{sec:conc} is the final section with some concluding remarks and directions for future work.

\section{Set-up and methodology}\label{sec:set-up}
\subsection{The proposed model}\label{sec:model}
Let us denote the authors by $\A = \{a_0, \ldots, a_L\}$. Fix an individual author, say $a_0$. We assume that $a_0$ writes papers at events of an inhomogeneous Poisson process with intensity functional $\lambda(t)$. We denote the event-times of this process by $E_n, n \ge 1$. Further, we assume that she writes exactly one paper at each event. Recall that the total number of events $N[s, t]$ during interval $[s, t]$ is then a $\pois(\int_s^t \lambda(u) \, du)$ random variable. This is simply the total number of papers that $a_0$ has written during the interval $[s, t]$. We denote by $\C_n$ the set of co-authors of $a_0$ at event $E_n$. Also, for $i \in [L] := \{1, \ldots, L\}$, set
\[
    m_{n, i} = \sum_{\ell = 1}^n \bone_{\C_{\ell}}(i),
\]
which is the number of papers $a_{0}$ has co-authored with $a_{i}$ by event $E_n$. Clearly, $0 \le m_{n, i} \le n$. We set $m_{0, i} = 0$ for all $i \in [L]$. Given $\C_{1}, \ldots, \C_n$, at event $E_{n + 1}$ a new paper is written in the following way:
\begin{enumerate}
    \item[(a)] For any $i \in [L]$,
\begin{equation}\label{eq:model-new-paper}
    \bone_{\C_{n + 1}}(i) \mid \C_1, \ldots, \C_n \sim \bern(F_{n + 1}(m_{n, i})),
\end{equation}
where $F_{n + 1}: \{0\} \cup \NN \rightarrow [0, 1]$. Since $m_{n, i} \le n$, we take $F_{n + 1}$ to be supported on $\{0\} \cup [n]$. 
    \item[(b)] Given $\C_1, \ldots, \C_n$, co-authors are included in $\C_{n + 1}$ independently of each other, i.e. the indicators $\{\bone_{\C_{n}}(i)\}_{i \in [L]}$ are conditionally independent given $\C_1, \ldots, \C_n$.
\end{enumerate}
 Thus, $\#\C_{n + 1} \mid \C_1, \ldots, \C_n$ is 
a sum of $L$ independent Bernoulli variables with parameters $F_{n + 1}(m_{n, i}), i \in [L]$. A cartoon of event $E_{n + 1}$ is shown in Figure~\ref{fig:cartoon}.
 
\begin{figure}[!t]
    \centering
    \includegraphics[scale = 0.8]{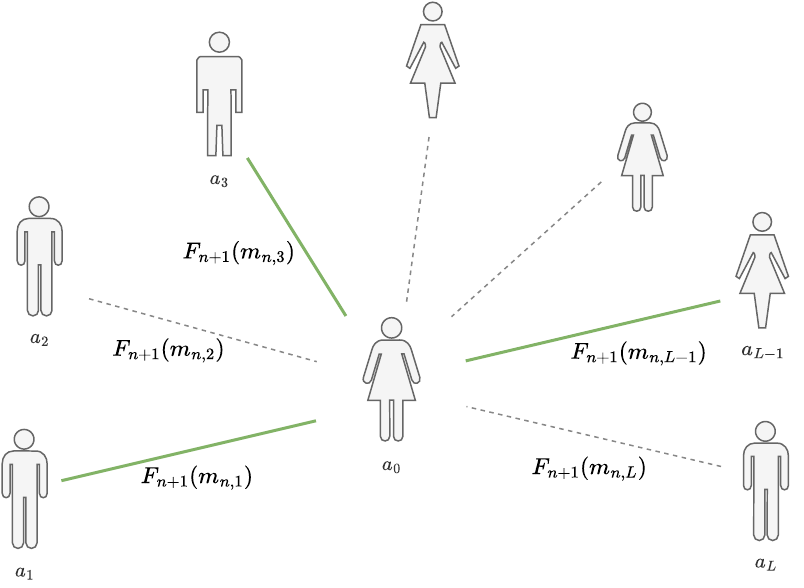}
    \caption{A cartoon of the $(n + 1)$-the paper writing event for author $a_0$: Solid green lines indicate co-authorship; thus $\C_{n + 1} = \{a_1, a_3, a_{L - 1}\}$.}
    \label{fig:cartoon}
\end{figure}

Note that in the above model, there are two (infinite dimensional) unknown parameters which have to be estimated: the intensity functional $\lambda(t)$ of the paper writing process $N$, and the event-specific co-authorship probabilities $\{F_n(k)\}_{n \ge 1, k \ge 0}$. One may of course consider parametric sub-models. For instance, for short intervals of time, one may use a homogeneous Poisson process, thereby reducing the intensity functional to a scalar parameter $\lambda > 0$. More importantly, one parametric model for $F_n(k)$ that we will be of particular interest to us is the following:
\begin{equation}\label{eq:param-submodel}
    F_n(k) = \begin{cases}
        a_n k + b_n & \text{if } 0 \le k \le n - 1, \\
        0 & \text{otherwise},
    \end{cases}
\end{equation}
where $a_n$ and $b_n$ are unknown parameters. Since $F_n$ is supported on $\{0\} \cup [n - 1]$, we must have that $b_n = F_n(0) \in [0, 1]$ and for $n \ge 2$, $-\frac{b_n}{n - 1} \le a_n \le \frac{1 - b_n}{n  - 1}$.

Our main objects of interest will be the observables $X_k[s, t] =$ the number of $k$-author papers written by $a_0$ during $[s, t]$, $k \ge 1$. Notice that
\[
    X_{k}[s, t] = \sum_{n \ge 1} \I(\#\C_{n} = k - 1)\bone_{[s, t]}(E_n)
\]
and
\[
    \sum_{k \ge 1} X_k[s, t] = N[s, t] \sim \pois\bigg(\int_s^t \lambda(u) \, du\bigg).
\]

\subsection{Estimation framework}\label{sec:est}
We now discuss non-parametric estimation of the parameters $\lambda(t)$ and
$(F_n(k))_{n \ge 1, k \ge 0}$. We also consider estimation of $F_n(k)$ under the parametric sub-model described in \eqref{eq:param-submodel}.
\subsubsection{Estimation of the intensity functional $\lambda(t)$.} 
Since we usually have knowledge of approximately when an article is written (e.g., when a paper is first submitted to a journal, or uploaded to a repository such as the arXiv), it is reasonable to assume that we know the timings of the paper writing events $E_n$. Thus we may employ standard techniques for estimating intensity functionals of Poisson processes to estimate $\lambda(t)$. In this article, we will use the kernel smoothing approach of \cite{diggle1985kernel}. To elaborate, let us denote $\int_0^t \lambda(u) \, du$ by $\Lambda(t)$. For sufficiently small $h>0$, we can approximately write
\[
    \lambda(t) \approx \frac{\Lambda(t + h)-\Lambda(t - h)}{2h} .
\]
As $N[t - h, t + h] \sim \pois(\Lambda(t + h)-\Lambda(t - h))$, a natural estimator of $\lambda(t)$ is  
\[
    \hat{\lambda}(t) = \frac{1}{2h}N[t - h, t + h] = \frac{1}{2h}\sum_{n \ge 1} \bone_{[t - h, t + h]}(E_n) = \frac{1}{h} \sum_{n \ge 1} K_{1}\bigg(\frac{E_{n} - t}{h}\bigg),
\]
where $K_{1}(u) = \frac{1}{2} \bone_{[-1, 1]}(u)$. A natural generalisation of this estimator is therefore
\[
    \hat{\lambda}(t)=\frac{1}{h}\sum_{n \ge 1}K\bigg(\frac{E_{n} - t}{h}\bigg),
\]
where $K : \RR \rightarrow \RR_+$ is an appropriate \emph{kernel} (i.e. a non-negative integrable function satisfying $\int_{\RR} K(u) \,du = 1$).

\subsubsection{Estimation of \texorpdfstring{$F_n(k)$}{}}
We estimate $F_n(k)$ by maximising a conditional likelihood. For $n \ge 1$, the conditional likelihood of the parameters $\{F_n(k)\}_{0 \le k \le n - 1}$ given $\C_1, \ldots, \C_{n - 1}$ is
\begin{align*}
    \mathcal{L}_{\cond}\big(\{F_n(k)\}_{0 \le k \le n - 1}\big) &= \prod_{i \in [L]} [F_n(m_{n - 1,i})]^{\bone_{\C_{n}}(i)}[1 - F_n(m_{n - 1, i})]^{1 - \bone_{\C_{n}}(i)} \\ 
                                                      &= \prod_{k = 0}^{n - 1} \prod_{\substack{i \in [L] \\  m_{n - 1, i} = k}}
                                           [F_n(k)]^{\bone_{\C_{n}}(i)} [1 - F_n(k)]^{1 - \bone_{\C_{n}}(i)}.
\end{align*}
We maximise $\mathcal{L}_{\cond}$ with respect to $F_n(k)$ by setting
\[
    0 = \frac{\partial \log \mathcal{L}_{\cond}}{\partial F_{n}(k)} = \sum_{i \in [L]} \I(m_{n - 1, i} = k) \bigg(\frac{\bone_{\C_n}(i)}{F_n(k)} - \frac{1 - \bone_{\C_n}(i)}{1 - F_n(k)}\bigg).
\]
Solving for $F_n(k)$, we get
\begin{equation}\label{eq:npest}
    \hat{F}_n(k) = \frac{\sum_{i \in [L]} \I(m_{n - 1, i} = k) \bone_{\C_{n}}(i)}{\sum_{i \in [L]} \I(m_{n - 1, i} = k)},
\end{equation}
provided that $\sum_{i \in [L]} \I(m_{n - 1, i} = k) \ne 0$. That the latter condition holds almost surely as $L$ goes to infinity is shown in Theorem~\ref{thm:ests-well-defined}. For definiteness, when this condition does not hold, we set $\hat{F}_n(k) = 0$.

\subsubsection{Parametric estimation of \texorpdfstring{$F_n(k)$}{}} We now consider the parametric setup of \eqref{eq:param-submodel}. Then, for $n \ge 1$, the conditional likelihood of the parameters $a_n, b_n$ given $\C_1, \ldots, \C_{n - 1}$ is
\begin{align*}
    \mathcal{L}_{\cond}(a_n, b_n) &= \prod_{i \in [L]} (a_n m_{n - 1, i} + b_n)^{\bone_{\C_{n}}(i)} (1 - a_n m_{n - 1, i} - b_n)^{1 - \bone_{\C_{n}}(i)}.
\end{align*}
We estimate $a_n, b_n$ by maximising $\mathcal{L}_{\cond}$ as before, by setting
\begin{align*}
    0 = \frac{\partial \log \mathcal{L}_{\cond}}{\partial a_n} &= \sum_{i \in [L]}\bigg[\frac{\bone_{\C_n}(i) m_{n - 1, i}}{a_n m_{n - 1, i} + b_n} - \frac{(1 - \bone_{\C_n}(i))m_{n - 1, i}}{1 - a_n m_{n - 1, i} - b_n}\bigg], \\
    0 = \frac{\partial \log \mathcal{L}_{\cond}}{\partial b_n} &= \sum_{i \in [L]}\bigg[\frac{\bone_{\C_n}(i)}{a_n m_{n - 1, i} + b_n} - \frac{1 - \bone_{\C_n}(i)}{1 - a_n m_{n - 1, i} - b_n}\bigg].
\end{align*}
Solving for $a_n$ and $b_n$ we obtain
\begin{align}\label{eq:pesta}
    \hat{a}_n &= \frac{\frac{1}{L}\sum_{i \in [L]}\bone_{C_n}(i) m_{n - 1,i} - \frac{1}{L}\sum_{i \in [L]}\bone_{C_n}(i) \frac{1}{L}\sum_{i \in [L]} m_{n - 1, i}}{\frac{1}{L}\sum_{i \in [L]}m^2_{n - 1, i} - \big(\frac{1}{L}\sum_{i \in [L]}m_{n - 1, i}\big)^2}, \\ \label{eq:pestb}
    \hat{b}_n &= \frac{1}{L}\sum_{i \in [L]}\bone_{C_n}(i) - \frac{\hat{a}_n}{L}\sum_{i \in [L]} m_{n - 1, i},
\end{align}
provided that $\frac{1}{L}\sum_{i \in [L]}m^2_{n - 1, i} \ne \big(\frac{1}{L}\sum_{i \in [L]}m_{n - 1, i}\big)^2$. That the latter condition holds almost surely as $L$ goes to infinity is shown in Theorem~\ref{thm:ests-well-defined}. For definiteness, when this condition does not hold, we set $\hat{a}_n = 0$. Incidentally, the estimators turn out to be the same as the least squares estimators of $a_n, b_n$ in the linear model $\bone_{\C_n}(i) = a_n m_{n - 1, i} + b_n$.

\section{Indices for measuring collaboration}\label{sec:indices}
In the scientometrics literature, several indices of collaboration have been proposed. Suppose that $f_{k}$ denotes frequency of $k$-authors papers in a body of literature consisting of $J = \sum_{k \ge 1} f_k$ papers in total. For example, the \emph{Collaborative Index} (CI) defined in \cite{lawani1980quality} is given by
\[
    \CI = \frac{\sum_{k \ge 1} k f_k}{J}.
\]
The \emph{Degree of Collaboration (DC)} \citep{subramanyam1983collaborative} is defined as
\[
    \DC = 1 - \frac{f_1}{J}.
\]
Note that $\DC$ is the fraction of multi-author papers.

Another measure of interest is the so-called \emph{Collaborative Coefficient (CC)} \citep{ajiferuke1988collaborative}:
\[
    \CC = 1 - \sum_{k \ge 1} \frac{1}{k} \frac{f_k}{J}.
\]
To understand this index, imagine that each paper comes with a credit of \$1, which is shared equally by all its authors. Thus for a $k$-author paper, each author receives a credit of \$$\frac{1}{k}$ each. Thus $1 - \CC$ is the average credit received for writing a paper. The more the average credit, the less the extent of collaboration. Note that $0 \le \CC < 1$. $\CC = 0$ if and only if only we have only single-author papers. 

Note that all of the indices discussed so far are static in nature. We now describe dynamic mean-field versions of these indices. To that end, note that if we impose a mean-field assumption on the underlying collaboration network, then it is enough to consider local versions of these indices, i.e. compute them locally on the collaboration network of a single author $a_0$. Thus the collaborative index during $[s, t]$ becomes the average number of co-authors of $a_0$ during $[s, t]$:
\begin{equation}
    \label{eq:def-CI}
    I_{\CI}[s, t] := \sum_{k \ge 1} (k - 1) \frac{X_k[s, t]}{N[s, t]}.
\end{equation}
Note that here we slightly deviate from the definition of $\CI$ by subtracting $1$ from the analogous expression.

Similarly, the dynamic mean-field version of the degree of collaboration may be defined as follows:
\begin{equation}
    \label{eq:def-DC}
    I_{\DC}[s, t] = 1 - \frac{X_1[s, t]}{N[s, t]}.
\end{equation}
In words, $I_{\DC}[s, t]$ is the the fraction of multi-author papers out of all papers written by $a_0$ during time-interval $[s, t]$.

Finally, the dynamic mean-field version of the collaborative coefficient is given by
\begin{equation}
    \label{eq:def-CC}
    I_\CC[s, t] = 1 - \sum_{k \ge 1} \frac{1}{k} \frac{X_k[s, t]}{N[s, t]}.
\end{equation}

In fact, we can define a generalization of these indices by noting that they all can be expressed as 
$\sum_{k \ge 1} \varphi(k) \frac{X_k[s, t]}{N[s, t]}$, for a suitable non-decreasing function $\varphi : \NN \rightarrow \RR$, satisfying $\varphi(1) = 0$. In particular,  $\varphi(k) = (k - 1)$ for $I_\CI$, $\varphi(k) = \I(k \ge 2)$ for $I_\DC$, and $\varphi(k) = 1 - \frac{1}{k}$ for $I_\CC$.
Therefore, given any non-decreasing function $\varphi : \NN \rightarrow \RR$, we define the \emph{generalised index of collaboration} as 
\begin{equation}\label{eq:def-comb ci}
    I_\varphi[s, t] := \sum_{k \ge 1} \varphi(k) \frac{X_k[s, t]}{N[s, t]}. 
\end{equation}
In Section~\ref{sec:theory-indices}, we derive formulae for the expectation and temporal rate of change of $I_\varphi$. In Section~\ref{sec:simulations}, we empirically study the impact of intensity functional and the co-authorship probability parameters $\{F_n(k)\}_{n \ge 1, k \ge 0}$ on the indices $I_{\CI}$, $I_{\DC}$ and $I_{\CC}$.

\section{Theory}\label{sec:theory}
In this section, we describe our main theoretical findings. The proofs are deferred to the appendix.
\subsection{Instantaneous behaviour of \texorpdfstring{$X_k, k \ge 1$}{}}
For small $h$, we can obtain explicit first-order approximations for the mean and variance of $X_k[t, t + h]$, $k \ge 1$, and also for the covariance and correlation of $X_k[t, t + h]$ and $X_{k'}[t, t+ h]$, $k \ne k'$. These depend crucially on the joint distribution of $\#\C_{N[0, t] + 1}$ and $\#\C_{N[0, t] + 2}$, and the marginal distribution of $\#\C_{N[0, t] + 1}$. To that end, define for any $k, k' \ge 1$, 
\begin{align}
    G_t(k, k') &= \P(\#\C_{N[0, t] + 1} = k, \#\C_{N[0, t] + 2} = k'), \text{ and } \\
    H_t(k) &= \P(\#\C_{N[0, t] + 1} = k).
\end{align}
\begin{theorem}\label{thm:xk-moments}
We have, for any $k \ge 0$,
\begin{equation}\label{eq:xk-mean-var}
    \lim_{h \downarrow 0} \frac{\E(X_{k + 1}[t, t + h])}{h} = \lim_{h \downarrow 0} \frac{\var(X_{k + 1}[t, t + h])}{h} = \lambda(t) H_t(k). 
\end{equation}
On the other hand, for any $k \ne k' \ge 0$,
\begin{equation}\label{eq:xk-cov}
    \lim_{h \downarrow 0} \frac{\cov(X_{k + 1}[t, t + h], X_{k' + 1}[t, t + h])}{h^2} = (\lambda(t))^2 \bigg(\frac{G_t(k, k') + G_t(k', k)}{2} - H_t(k) H_t(k')\bigg).
\end{equation}
As a consequence, for any $k \ne k' \ge 0$, 
\begin{equation}\label{eq:xk-corr}
    \lim_{h \downarrow 0} \frac{\corr(X_{k + 1}[t, t + h], X_{k' + 1}[t, t + h])}{h} = \lambda(t)\frac{\frac{1}{2}(G_t(k, k') + G_t(k', k)) - H_t(k) H_t(k')}{\sqrt{H_t(k) H_t(k')}}.
\end{equation}
\end{theorem}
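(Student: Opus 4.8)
The plan is to reduce everything to the behaviour of the underlying Poisson process on the short window $[t,t+h]$, exploiting two independence facts built into the model. First, the co-authorship sequence $\{\C_n\}_{n\ge 1}$ (indexed by event number) depends only on the past co-authorship counts $m_{n-1,i}$ and never on the event times $E_n$, so it is independent of the timing process. Second, by the independent-increments property of the Poisson process, the count $M := N[t,t+h]$ is independent of $N[0,t]$, and hence of the pair $(N[0,t], \{\C_n\})$. Writing $N := N[0,t]$, the events falling in $[t,t+h]$ are exactly $E_{N+1},\dots,E_{N+M}$, which yields the representation
\[
    X_{k+1}[t,t+h] = \sum_{j\ge 1} \I(M\ge j)\,\I(\#\C_{N+j}=k).
\]

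With this representation in hand, I would first record the Poisson tail asymptotics. Setting $\mu_h := \Lambda(t+h)-\Lambda(t) = \lambda(t)h + o(h)$ (using differentiability of $\Lambda$ at $t$), we have $\P(M\ge 1) = \lambda(t)h + o(h)$, $\P(M\ge 2) = \tfrac12\lambda(t)^2 h^2 + o(h^2)$, and $\P(M\ge 3)=O(h^3)$; moreover the factorial moments satisfy $\E[M] = \mu_h$, $\E[M(M-1)]=\mu_h^2$, and $\E[M(M-1)(M-2)]=\mu_h^3$, which will control all tail sums uniformly since the co-authorship probabilities are bounded by $1$. For the mean, independence lets me factor $\E(X_{k+1}) = \sum_{j\ge 1}\P(M\ge j)\,\P(\#\C_{N+j}=k)$; the $j=1$ term equals $\P(M\ge 1)H_t(k)$ by the definition of $H_t$, while $\sum_{j\ge 2}\P(M\ge j) = \E[M]-\P(M\ge 1) = O(h^2)$ bounds the remainder, giving the mean part of \eqref{eq:xk-mean-var}.

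For the variance I would use that the summands are $\{0,1\}$-valued, so squaring the sum produces a diagonal equal to $X_{k+1}$ itself (since $\I^2=\I$) plus an off-diagonal $\sum_{j\ne j'}\I(M\ge\max(j,j'))\I(\#\C_{N+j}=k)\I(\#\C_{N+j'}=k)$; the latter forces $M\ge2$, so its expectation is bounded by $\E[M(M-1)]=O(h^2)$. Hence $\E(X_{k+1}^2)=\E(X_{k+1})+O(h^2)$, and subtracting $(\E X_{k+1})^2=O(h^2)$ gives the variance formula. The covariance is where the order jumps from $h$ to $h^2$: for $k\ne k'$ the diagonal $j=j'$ vanishes identically because $\I(\#\C_{N+j}=k)\I(\#\C_{N+j}=k')=0$, so $\E(X_{k+1}X_{k'+1})=\sum_{j\ne j'}\P(M\ge\max(j,j'))\,\P(\#\C_{N+j}=k,\#\C_{N+j'}=k')$. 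The only $O(h^2)$ contributions come from $(j,j')\in\{(1,2),(2,1)\}$, yielding $\P(M\ge2)(G_t(k,k')+G_t(k',k))$, while the remaining terms require $M\ge3$ and are $o(h^2)$ by the third factorial moment. Subtracting $\E(X_{k+1})\E(X_{k'+1})=\lambda(t)^2h^2H_t(k)H_t(k')+o(h^2)$ gives \eqref{eq:xk-cov}, and \eqref{eq:xk-corr} follows by dividing the covariance asymptotics by $\sqrt{\var(X_{k+1})\var(X_{k'+1})}=\lambda(t)h\sqrt{H_t(k)H_t(k')}+o(h)$.

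I expect the main obstacle to be the bookkeeping that justifies the decoupling rigorously and controls the infinite tail sums uniformly in $h$: one must confirm that $M$ is genuinely independent of the random index-shift $N$ and of the co-authorship marks, and that interchanging expectation with the infinite sums is legitimate, after which every remainder is dominated by a factorial moment $\mu_h^r=O(h^r)$ of the desired order.
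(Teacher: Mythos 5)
Your proposal is correct and follows essentially the same route as the paper: you decompose $X_{k+1}[t,t+h]$ over the paper-writing events, use the independence of $\{\C_n\}_{n\ge1}$ from the Poisson process to factor the expectations, extract the leading contributions from the first one (resp.\ two) events after time $t$ --- which is exactly what produces $H_t(k)$ and $G_t(k,k')+G_t(k',k)$ in the paper's Lemmas~\ref{lem:est_En} and~\ref{lem:est_xk-moments} --- and bound the remainders by higher-order Poisson tails. The only differences are cosmetic: you index events relative to $N[0,t]$ and control the tail sums via factorial moments of $N[t,t+h]$, where the paper keeps absolute indices and uses explicit geometric-series bounds.
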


\subsection{Indices of collaboration}\label{sec:theory-indices}
We first give an expression for the expectation of the generalised index of collaboration $I_\varphi$.
\begin{lemma}\label{lem:ind-expression}
We have
\[
    \E I_\varphi[s, t] = \sum_{n \ge 1} \E \bigg[ \frac{\bone_{[s, t]}(E_n)}{N[s, t]} \bigg] \E \varphi(\# \C_n + 1).
\]
\end{lemma}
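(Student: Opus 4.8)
The plan is to compute $\E I_\varphi[s,t]$ by first expanding $I_\varphi$ in terms of the underlying event-indicators and then conditioning cleverly to decouple the timing of events from the sizes of the co-author sets. Recall from \eqref{eq:def-comb ci} that
\[
    I_\varphi[s,t] = \sum_{k \ge 1} \varphi(k) \frac{X_k[s,t]}{N[s,t]},
\]
and that $X_k[s,t] = \sum_{n \ge 1} \I(\#\C_n = k-1)\bone_{[s,t]}(E_n)$. Substituting this in and interchanging the two sums (justified by non-negativity via Tonelli, after reducing to a telescoping or absolutely convergent form), I would rewrite
\[
    I_\varphi[s,t] = \sum_{n \ge 1} \frac{\bone_{[s,t]}(E_n)}{N[s,t]} \sum_{k \ge 1}\varphi(k)\,\I(\#\C_n = k-1) = \sum_{n \ge 1} \frac{\bone_{[s,t]}(E_n)}{N[s,t]}\,\varphi(\#\C_n + 1).
\]
The inner sum collapses because for each $n$ exactly one value of $k$ is selected, namely $k = \#\C_n + 1$. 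This is the key algebraic simplification that produces the factor $\varphi(\#\C_n + 1)$ appearing in the statement.

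Having reached $\E I_\varphi[s,t] = \sum_{n\ge 1}\E\big[\tfrac{\bone_{[s,t]}(E_n)}{N[s,t]}\,\varphi(\#\C_n+1)\big]$, the remaining task is to factor each summand into the product $\E\big[\tfrac{\bone_{[s,t]}(E_n)}{N[s,t]}\big]\cdot\E\varphi(\#\C_n+1)$. The natural route is to invoke an independence (or conditional-independence) property between the collection of event-times $\{E_m\}_{m\ge 1}$ and the collection of co-author-set sizes $\{\#\C_m\}_{m\ge 1}$. In the model of Section~\ref{sec:model}, the paper-writing times are generated by the inhomogeneous Poisson process with intensity $\lambda(t)$, while the co-authorship structure at each event is governed by the recursive Bernoulli mechanism \eqref{eq:model-new-paper}; these two ingredients are specified independently of one another. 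Consequently $\#\C_n$ (which is a functional of $\C_1,\dots,\C_n$, hence of the co-authorship randomness alone) is independent of the event-times, and in particular independent of both $\bone_{[s,t]}(E_n)$ and $N[s,t]$. This independence lets me pull $\varphi(\#\C_n+1)$ out as a separate expectation.

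I would carry out the steps in this order: (i) substitute the expression for $X_k$ and swap the order of summation; (ii) collapse the inner sum over $k$ to obtain $\varphi(\#\C_n+1)$; (iii) take expectations term by term; (iv) apply the independence between event-times and co-author sizes to factor each expectation. The main obstacle I anticipate is twofold. First, the interchange of the infinite sum over $n$ with the expectation must be justified rigorously — I would argue that $N[s,t]$ is almost surely finite (being $\pois(\int_s^t\lambda)$), that on the event $N[s,t]\ge 1$ at most $N[s,t]$ of the indicators $\bone_{[s,t]}(E_n)$ are nonzero, and that $\tfrac{\bone_{[s,t]}(E_n)}{N[s,t]}\le 1$, so dominated or monotone convergence applies provided $\varphi(\#\C_n+1)$ has controlled growth (one may assume $\varphi$ bounded, or bounded on the relevant range since $\#\C_n \le L$). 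Second, and more delicately, I must pin down precisely \emph{why} $\{\#\C_m\}$ and $\{E_m\}$ are independent under the model as stated; this requires reading the generative description in Section~\ref{sec:model} as specifying the Poisson timing process and the Bernoulli co-authorship recursion on separate, independent sources of randomness, so that the $\sigma$-algebra generated by the co-author sets is independent of that generated by the event-times. Establishing this cleanly — rather than merely asserting it — is the crux of the proof.
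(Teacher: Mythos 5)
Your proposal is correct and follows essentially the same route as the paper: expand $X_k[s,t]$, swap the sums so the inner sum over $k$ collapses to $\varphi(\#\C_n+1)$, then take expectations term by term and factor using the independence of the Poisson timing process from the co-authorship mechanism. The extra care you devote to justifying the sum--expectation interchange (via $N[s,t]<\infty$ a.s.\ and $\#\C_n \le L$) and to articulating the independence of the two sources of randomness only makes explicit what the paper's proof asserts in one line.
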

In Lemma~\ref{lem:ind-expression}, the effect of the Poisson process is decoupled from the co-author inclusion mechanism. The terms corresponding to the Poisson process can be computed explicitly.
\begin{lemma}\label{lem:UV}
We have
\[
    \E \bigg[\frac{\bone_{[s, t]}(E_n)}{N[s, t]}\bigg] = \E \bigg[\frac{\I(U \le n - 1, U + V \ge n)}{V}\bigg] = \sum_{k = 0}^{n - 1} \P(U = k) \E \bigg[ \frac{\I(V \ge n - k)}{V} \bigg],
\]
where $U = N[0, s]$ and $V = N[s, t]$.
\end{lemma}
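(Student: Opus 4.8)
The plan is to evaluate the expectation $\E\big[\frac{\bone_{[s,t]}(E_n)}{N[s,t]}\big]$ by expressing the two events involved entirely in terms of the counting variables $U = N[0,s]$ and $V = N[s,t]$. The key observation is that the $n$-th event time $E_n$ falls in $[s,t]$ precisely when at most $n-1$ events have occurred by time $s$ and at least $n$ events have occurred by time $t$; in symbols, $\bone_{[s,t]}(E_n) = \I(N[0,s] \le n-1,\ N[0,t] \ge n)$. Since $N[0,t] = N[0,s] + N[s,t] = U + V$, this indicator equals $\I(U \le n-1,\ U + V \ge n)$. Substituting, and noting that on the event $\{E_n \in [s,t]\}$ we have $N[s,t] = V$, gives the middle expression $\E\big[\frac{\I(U \le n-1,\ U+V \ge n)}{V}\big]$.

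First I would record the identity $\bone_{[s,t]}(E_n) = \I(U \le n-1,\ U+V \ge n)$ carefully, checking the boundary conventions (whether the interval is open or closed at the endpoints, which for a Poisson process is immaterial since event times are almost surely distinct and the probability of an event landing exactly at $s$ or $t$ is zero). I would also note that on this event $N[s,t] = V \ge 1$, so the ratio $\bone_{[s,t]}(E_n)/N[s,t]$ is well-defined and equals $\I(U \le n-1,\ U+V \ge n)/V$ almost surely. This establishes the first equality.

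For the second equality I would use the independence of increments of the (inhomogeneous) Poisson process: $U = N[0,s]$ and $V = N[s,t]$ are independent, with $U \sim \pois(\Lambda(s))$ and $V \sim \pois(\Lambda(t) - \Lambda(s))$. Conditioning on $U$ and using the tower property,
\[
    \E\bigg[\frac{\I(U \le n-1,\ U+V \ge n)}{V}\bigg] = \sum_{k=0}^{n-1} \P(U = k)\, \E\bigg[\frac{\I(V \ge n-k)}{V}\bigg],
\]
where the restriction to $k \le n-1$ comes directly from the indicator $\I(U \le n-1)$, and the event $\{U + V \ge n\}$ becomes $\{V \ge n-k\}$ once $U = k$ is fixed. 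The factorisation of the summand into $\P(U=k)$ times an expectation in $V$ alone is exactly the independence of $U$ and $V$.

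The computation is essentially routine once the indicator rewriting is in place; the only point requiring care is the decomposition of the event $\{E_n \in [s,t]\}$ into a condition on $U$ and a condition on $U+V$, and ensuring that dividing by $V$ causes no problem (which is guaranteed because the indicator forces $V \ge n-k \ge 1$ on its support). I do not expect a genuine obstacle here; the main subtlety is bookkeeping with the endpoints and confirming that the independence of $U$ and $V$ is invoked legitimately, which it is for any Poisson process with independent increments.
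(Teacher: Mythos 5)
Your proposal is correct and follows essentially the same route as the paper, whose entire proof is the one-line observation that $\bone_{[s,t]}(E_n)=1$ if and only if $U\le n-1$ and $U+V\ge n$; you merely spell out the routine conditioning on $U$ and the independence of the increments $U$ and $V$, which the paper leaves implicit.
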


On the other hand, obtaining explicit expressions for $\E \varphi(\# \C_n + 1)$ is difficult. However, in some special cases we can have explicit formulae. For instance, in case of $I_{\CI}$, $\varphi(k) = k - 1$. If we further assume $F_n$ to be linear as in \eqref{eq:param-submodel}, then we have the following recursion for $\E \varphi(\#\C_n + 1) = \E \#\C_n$.
\begin{lemma}\label{lem:linear_F}
Consider the parametric sub-model \eqref{eq:param-submodel}. Then
\[
    \E \#\C_{n} = a_n \sum_{\ell = 1}^{n - 1} \E \#\C_{\ell} + L b_n.
\]
\end{lemma}
We can explicitly solve the recurrence developed in Lemma~\ref{lem:linear_F}.
\begin{lemma}\label{lem:linear_F_soln}
Consider the parametric sub-model \eqref{eq:param-submodel}. Then $\E\#\C_1 = Lb_1, \E\#\C_2 = L(b_2 + a_2 b_1)$, and for $n \ge 3$,
\[
    \E\#\C_n = L \bigg(b_n + a_n b_{n - 1} + \sum_{j = 1}^{n - 2} b_j \prod_{\ell = j + 1}^{n - 1} (1 + a_\ell)\bigg).
\]
\end{lemma}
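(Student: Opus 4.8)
The plan is to pass to partial sums, where the recurrence of Lemma~\ref{lem:linear_F} becomes a textbook first-order linear recurrence that can be solved in closed form, and then to recover $\E \#\C_n$ from that solution. Throughout write $c_n := \E \#\C_n$ and $S_n := \sum_{\ell=1}^n c_\ell$, with the convention $S_0 := 0$, so that Lemma~\ref{lem:linear_F} reads $c_n = a_n S_{n-1} + L b_n$.

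First I would convert the recurrence into one for $S_n$. Since $c_n = S_n - S_{n-1}$, substituting into $c_n = a_n S_{n-1} + L b_n$ and rearranging gives
\[
    S_n = (1 + a_n) S_{n-1} + L b_n, \qquad S_0 = 0 .
\]
Solving this scalar recurrence---by unrolling it, or by a one-line induction on $n$---produces
\[
    S_n = L \sum_{j=1}^{n} b_j \prod_{\ell = j+1}^{n} (1 + a_\ell),
\]
with the convention that the empty product at $j = n$ equals $1$. The induction step is transparent: applying the map $S \mapsto (1+a_n) S$ to the expression for $S_{n-1}$ promotes every upper product limit from $n-1$ to $n$, and the additive term $L b_n$ furnishes exactly the new $j = n$ summand.

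Finally I would read off $c_n$. The most direct route is to plug the closed form for $S_{n-1}$ back into $c_n = a_n S_{n-1} + L b_n$, giving
\[
    c_n = L b_n + L a_n \sum_{j=1}^{n-1} b_j \prod_{\ell = j+1}^{n-1} (1 + a_\ell);
\]
pulling out the $j = n-1$ summand (whose product is empty, hence $1$) separates the $a_n b_{n-1}$ contribution from the remaining sum, while evaluating the same expression at $n = 1, 2$ recovers the base cases $c_1 = L b_1$ and $c_2 = L(b_2 + a_2 b_1)$. The one place demanding care---and the step most prone to error---is precisely this last bit of bookkeeping: correctly handling the empty-product convention when the $j = n-1$ term is split off, and keeping the factor $a_n$ attached to \emph{every} term inherited from $S_{n-1}$ (equivalently, factoring $a_n$ out of the whole bracket) so that the resulting expression aligns exactly with the displayed closed form. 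Everything else is routine algebra.
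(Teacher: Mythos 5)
Your proposal is correct and follows essentially the same route as the paper: pass to the partial sums $S_n$, turn the recurrence of Lemma~\ref{lem:linear_F} into the first-order linear recurrence $S_n = (1+a_n)S_{n-1} + Lb_n$, unroll it to get $S_n = L\sum_{j=1}^{n}b_j\prod_{\ell=j+1}^{n}(1+a_\ell)$, and then recover $\E\#\C_n$ (the paper computes $S_n - S_{n-1}$ where you substitute $S_{n-1}$ back into $c_n = a_nS_{n-1}+Lb_n$, which is the same computation). The bookkeeping point you flag is exactly right and worth emphasizing: the factor $a_n$ must multiply \emph{every} term inherited from $S_{n-1}$, so the closed form your argument produces is $L\bigl(b_n + a_nb_{n-1} + a_n\sum_{j=1}^{n-2}b_j\prod_{\ell=j+1}^{n-1}(1+a_\ell)\bigr)$ --- this is also what the paper's own proof ends with, whereas the displayed statement of Lemma~\ref{lem:linear_F_soln} omits that $a_n$ in front of the sum (a typo in the statement, not a flaw in your derivation).
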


\begin{example}
    Take $F_n(0) = b_n = q(1 - \frac{1}{\log(n + 2)})$, and $a_n = \frac{p}{n}$, where $p, q \in [0, 1]$. In Figure~\ref{fig:rec-sol}-(a), we plot $\E\#\C_n$ as a function of $n$, for $p = 0.4$, $q = 0.05$. In Figure~\ref{fig:rec-sol}-(b), we plot $F_n(k)$ as a function of $n$ for $k = 0, 1, 2, 3$. This plot is qualitatively quite similar to what we observe in our real-data analyses in Section~\ref{sec:real-data}.
\end{example}
\begin{figure}[!t]
     \centering
     \begin{tabular}{cc}
     \includegraphics[width = 0.45\textwidth]{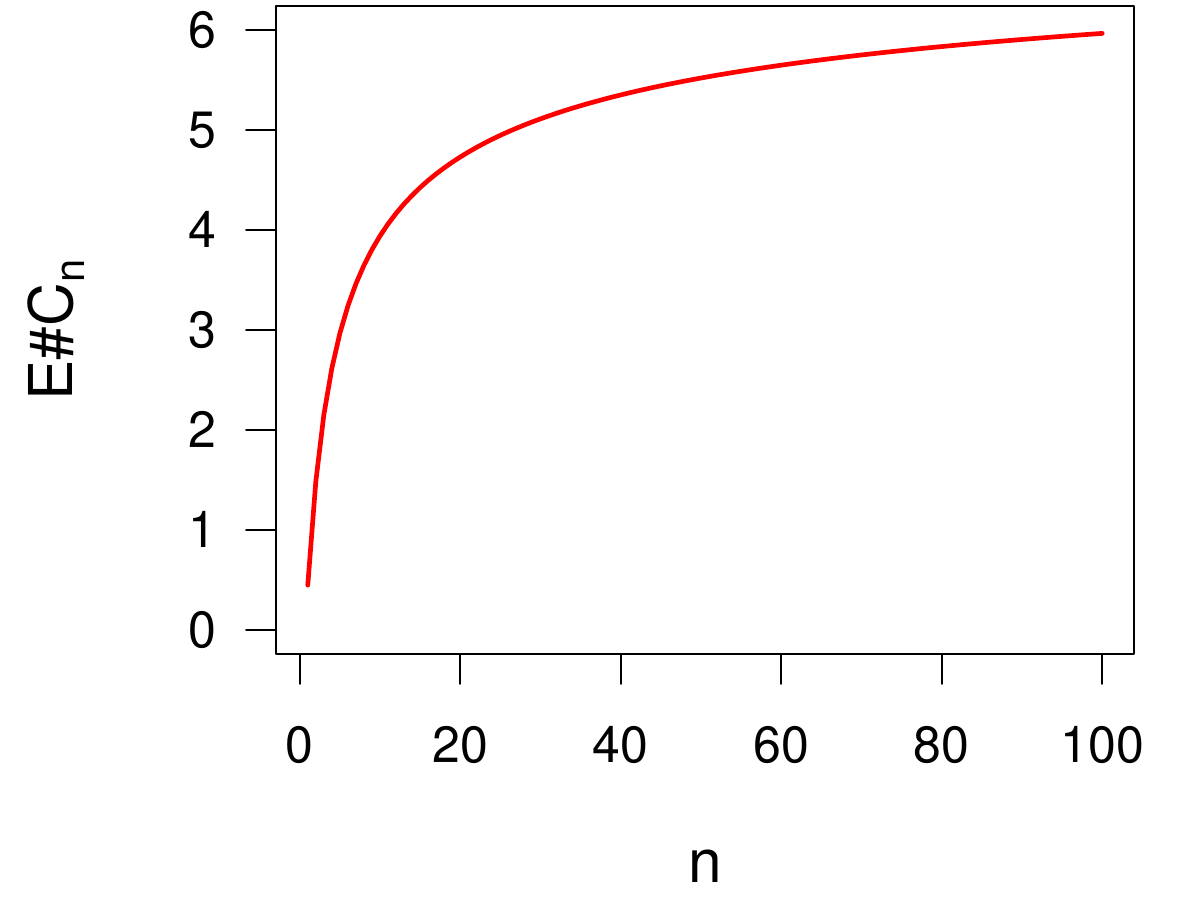} & \includegraphics[width =  0.45\textwidth]{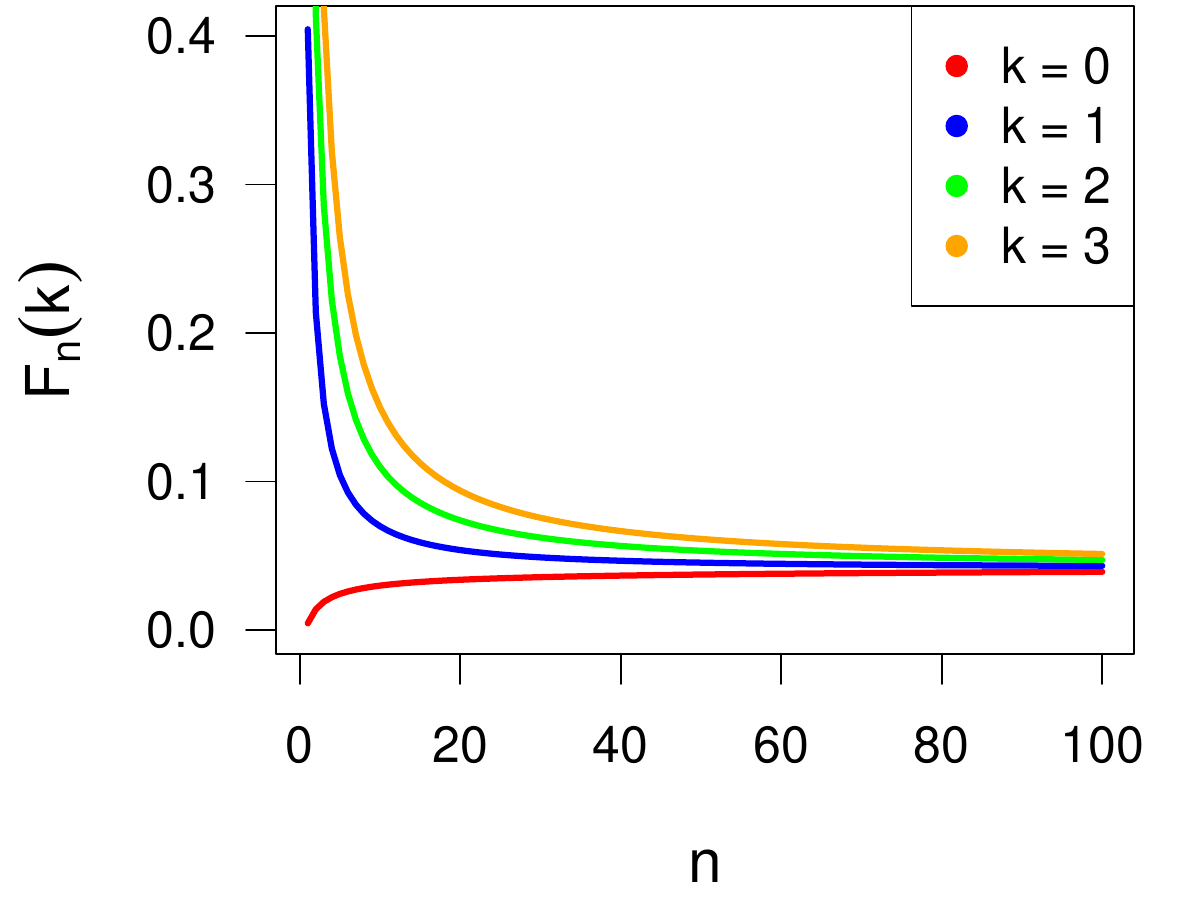} \\
     (a) & (b)
     \end{tabular}
     \caption{(a) $\E\#\C_n$ versus $n$, (b) $F_n(k) = a_n k + b_n$ versus $n$ for $k = 0, 1, 2, 3$, for $a_n = \frac{0.4}{n}$, $b_n = 0.05 (1  - \frac{1}{\log(n + 2)})$, with $L = 100$.}
      \label{fig:rec-sol}
\end{figure}

For small $h$, we can obtain an expression for $\E I_\varphi[t, t + h]$ in terms of $\varphi$ and $\#\C_{N[0, t] + 1}$, up to the first order, in the spirit of Theorem~\ref{thm:xk-moments}.
\begin{theorem}\label{thm:ind_rate}
We have
\[
    \lim_{h \downarrow 0} \frac{\E I_\varphi[t, t+h]}{h} = \lambda(t) \E \varphi(\#\C_{N[0, t] + 1} + 1).
\]
\end{theorem}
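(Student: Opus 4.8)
The plan is to start from the decoupled expression in Lemma~\ref{lem:ind-expression}, specialised to the interval $[t,t+h]$, and substitute the Poisson-side expectation computed in Lemma~\ref{lem:UV}. Writing $U = N[0,t]$ and $V = N[t, t+h] \sim \pois(\mu_h)$ with $\mu_h = \Lambda(t+h) - \Lambda(t)$, this gives
\[
    \E I_\varphi[t,t+h] = \sum_{n\ge 1}\E\varphi(\#\C_n + 1)\sum_{k=0}^{n-1}\P(U=k)\,\E\Big[\tfrac{\I(V\ge n-k)}{V}\Big].
\]
Since $\varphi$ is non-decreasing with $\varphi(1)=0$ and $\#\C_n + 1 \ge 1$, every summand is nonnegative, so I can invoke Tonelli and reindex by $m = n-k$ to obtain the product form
\[
    \E I_\varphi[t,t+h] = \sum_{k\ge 0}\P(U=k)\sum_{m\ge 1}\E\varphi(\#\C_{k+m}+1)\,\E\Big[\tfrac{\I(V\ge m)}{V}\Big].
\]

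Next I would isolate the $m=1$ contribution, which carries the leading order. Because the co-author process $(\C_n)_{n\ge 1}$ is generated from the event index alone and is therefore independent of the Poisson event-times (and hence of $U$), the tower property gives $\sum_{k\ge 0}\P(U=k)\,\E\varphi(\#\C_{k+1}+1) = \E\varphi(\#\C_{N[0,t]+1}+1)$. Thus the $m=1$ term equals $\E[\I(V\ge 1)/V]\cdot\E\varphi(\#\C_{N[0,t]+1}+1)$, and a short Poisson computation (expanding $\E[\I(V\ge 1)/V] = e^{-\mu_h}\sum_{v\ge 1}\mu_h^v/(v\cdot v!)$) shows $\E[\I(V\ge 1)/V] = \mu_h + O(\mu_h^2)$ as $h\downarrow 0$.

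The crux is to show that the $m\ge 2$ tail is negligible, namely $O(\mu_h^2)$. Here I would use two facts: first, that the pool of co-authors is finite, so $\#\C_n\le L$ and hence $\varphi(\#\C_{k+m}+1)\le\varphi(L+1)$ uniformly in $k,m$; second, the telescoping identity $\sum_{m\ge 1}\I(V\ge m) = V$, which yields $\sum_{m\ge 1}\E[\I(V\ge m)/V] = \P(V\ge 1) = 1 - e^{-\mu_h}$. Combining these,
\[
    \sum_{k\ge 0}\P(U=k)\sum_{m\ge 2}\E\varphi(\#\C_{k+m}+1)\,\E\Big[\tfrac{\I(V\ge m)}{V}\Big] \le \varphi(L+1)\Big(\P(V\ge 1) - \E\big[\tfrac{\I(V\ge 1)}{V}\big]\Big),
\]
and since both $\P(V\ge 1)$ and $\E[\I(V\ge 1)/V]$ equal $\mu_h + O(\mu_h^2)$, their difference is $O(\mu_h^2)$, so the entire tail is $O(\mu_h^2)$.

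Finally I would divide by $h$ and let $h\downarrow 0$, using the regularity of $\Lambda$ (differentiability at $t$ with $\Lambda' = \lambda$) so that $\mu_h/h \to \lambda(t)$ and $\mu_h^2/h\to 0$. The $m=1$ term then contributes $\lambda(t)\,\E\varphi(\#\C_{N[0,t]+1}+1)$ while the tail vanishes, giving the claim. I expect the main obstacle to be the rigorous justification of the tail estimate — in particular the interchange of the infinite sum with the $h\downarrow 0$ limit — which is precisely what the uniform bound $\varphi(\#\C_n+1)\le\varphi(L+1)$ together with the identity $\sum_{m\ge 1}\I(V\ge m)=V$ is designed to handle cleanly.
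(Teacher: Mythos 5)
Your argument is correct, and while it starts from the same place as the paper (Lemma~\ref{lem:ind-expression} combined with Lemma~\ref{lem:UV}), the way you control the error is genuinely different. The paper sandwiches $\E I_\varphi[t,t+h]$: a lower bound $e^{-\lambda_{t,h}}\lambda_{t,h}\,\E\varphi(\#\C_{U+1}+1)$ obtained by keeping only the $k=n-1$ term and the event $\{V=1\}$, and an upper bound obtained from $\bone_{[t,t+h]}(E_n)/N[t,t+h]\le\bone_{[t,t+h]}(E_n)$ followed by the Poisson-tail estimate \eqref{eq:bound_En} of Lemma~\ref{lem:est_En} applied to $g(n)=\E\varphi(\#\C_n+1)/\varphi(L+1)$, so the two bounds differ by $\varphi(L+1)\cdot O(\lambda_{t,h}^2)$. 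You instead keep the exact double series, reindex by $m=n-k$ (legitimate by Tonelli since $\varphi(\#\C_n+1)\ge\varphi(1)=0$), identify the $m=1$ term exactly as $\E[\I(V\ge 1)/V]\,\E\varphi(\#\C_{U+1}+1)$, and dispose of the $m\ge 2$ tail with the telescoping identity $\sum_{m\ge 1}\I(V\ge m)=V$, which yields the clean bound $\varphi(L+1)\bigl(\P(V\ge 1)-\E[\I(V\ge 1)/V]\bigr)=\varphi(L+1)\,O(\mu_h^2)$. Your route is self-contained --- it bypasses Lemma~\ref{lem:est_En} entirely and avoids the paper's division by $\varphi(L+1)$ (which silently assumes $\varphi(L+1)>0$, harmless only because the degenerate case is trivial) --- whereas the paper's version buys reuse of the remainder machinery already built for Theorem~\ref{thm:xk-moments}. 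Both proofs ultimately rest on the same two ingredients, namely the independence of $(\C_n)_{n\ge 1}$ from the Poisson process and the uniform bound $\varphi(\#\C_n+1)\le\varphi(L+1)$, and both require only $\lambda_{t,h}/h\to\lambda(t)$ in the final step.
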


\subsection{An uninteresting special case}
A main feature of the model \eqref{eq:model-new-paper} is that the co-author sets $\C_n$ are dependent. We now look at a situation where they become independent. A fortiori, the correlation between $X_k[t, t + h]$ and $X_{k'}[t, t + h]$ becomes second order in $h$, and $\E I_{\varphi}[s, t]$ becomes a constant.
\begin{corollary} \label{cor:const_F}
If for all $n \ge 1$, $F_n(k)$ does not depend on $k$, i.e. $F_n(k) = p_n$ for some $p_n \in [0, 1]$, for all $0 \le k \le n - 1$, then the random sets $\C_n$ are independent. Further, if $F_n(k)$ depends neither on $k$ nor on $n$, i.e. if for all $n \ge 1$, $F_{n}(k) = p$ for some $p \in [0, 1]$, for all $0 \le k \le n - 1$, then
\begin{itemize}
    \item [(i)] $\lim_{h\downarrow 0}\frac{\corr(X_{k + 1}[t, t + h], X_{k' + 1}[t, t + h])}{h} = 0$ for any $k \ne k' \ge 0$.
    \item [(ii)] $\E I_{\varphi}[s, t] = \E\varphi(B + 1)$, where $B \sim \mathrm{Binomial}(L, p)$.
\end{itemize}
\end{corollary}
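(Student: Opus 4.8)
The plan is to prove the two assertions of Corollary~\ref{cor:const_F} separately, first establishing independence of the sets $\C_n$ under the hypothesis $F_n(k) = p_n$, and then deducing (i) and (ii) from this independence together with the constancy $F_n(k) = p$.

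For the \emph{independence of the $\C_n$}, the key observation is that if $F_{n+1}$ does not depend on $k$, then the conditional distribution in \eqref{eq:model-new-paper} becomes $\bone_{\C_{n+1}}(i) \mid \C_1, \ldots, \C_n \sim \bern(F_{n+1}(m_{n,i})) = \bern(p_{n+1})$, which no longer depends on the history through $m_{n,i}$. First I would argue that, by part (b) of the model, the indicators $\{\bone_{\C_{n+1}}(i)\}_{i \in [L]}$ are conditionally independent given the past, each with parameter $p_{n+1}$ that is a deterministic constant; hence the conditional law of $\C_{n+1}$ given $\C_1, \ldots, \C_n$ equals its unconditional law, so $\C_{n+1}$ is independent of $(\C_1, \ldots, \C_n)$. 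A clean way to make this rigorous is to show by induction on $n$ that the joint law factorizes: compute $\P(\C_1 = S_1, \ldots, \C_n = S_n)$ by peeling off the last coordinate using \eqref{eq:model-new-paper} and the conditional-independence in (b), and observe that the resulting conditional factor depends only on $p_n$ and $S_n$, not on $S_1, \ldots, S_{n-1}$, so it splits off as a product. This yields that the $\C_n$ are mutually independent, with $\#\C_n \sim \mathrm{Binomial}(L, p_n)$.

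For part (i), I would invoke formula \eqref{eq:xk-cov} from Theorem~\ref{thm:xk-moments}, which expresses the leading covariance coefficient in terms of $G_t$ and $H_t$. Under the constant model $F_n(k) \equiv p$, the sets $\C_{N[0,t]+1}$ and $\C_{N[0,t]+2}$ are independent of each other (and of $N[0,t]$), each with $\#\C \sim \mathrm{Binomial}(L,p)$, so $G_t(k,k') = H_t(k) H_t(k')$ for all $k, k'$. Substituting this factorization into \eqref{eq:xk-cov} makes the bracketed term vanish, giving $\cov(X_{k+1}, X_{k'+1}) = o(h)$; since the standard deviations are of order $\sqrt{h}$ by \eqref{eq:xk-mean-var}, the correlation is $O(h)$ and in fact its $h$-rate is exactly the numerator of \eqref{eq:xk-corr}, which is zero. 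The one subtlety to handle carefully is the independence of $\#\C_{N[0,t]+1}$ and $\#\C_{N[0,t]+2}$ from the random index $N[0,t]$: because the law of each $\C_n$ is the same ($\mathrm{Binomial}(L,p)$) for every $n$ and they are mutually independent, conditioning on the value of $N[0,t]$ does not alter the joint law of the next two coauthor-set sizes, so $G_t$ and $H_t$ take the stated product form regardless of the Poisson randomness.

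For part (ii), I would start from the exact expression in Lemma~\ref{lem:ind-expression}, namely $\E I_\varphi[s,t] = \sum_{n \ge 1} \E\big[\tfrac{\bone_{[s,t]}(E_n)}{N[s,t]}\big]\, \E\varphi(\#\C_n + 1)$. Under $F_n(k) \equiv p$ we have $\#\C_n \sim \mathrm{Binomial}(L,p) =: B$ for \emph{every} $n$, so the factor $\E\varphi(\#\C_n+1) = \E\varphi(B+1)$ is a constant free of $n$ and can be pulled out of the sum. It then remains to show $\sum_{n \ge 1} \E\big[\tfrac{\bone_{[s,t]}(E_n)}{N[s,t]}\big] = 1$; this is immediate since $\sum_{n \ge 1}\bone_{[s,t]}(E_n) = N[s,t]$ almost surely, so the sum of the expectations equals $\E\big[\tfrac{N[s,t]}{N[s,t]}\big] = \E[\bone_{N[s,t] \ge 1}]$, and after conditioning on $N[s,t] \ge 1$ (the event where $I_\varphi$ is defined) this normalizing sum is $1$. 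Hence $\E I_\varphi[s,t] = \E\varphi(B+1)$, a constant independent of $s$ and $t$. The main obstacle across the whole argument is the bookkeeping around the random evaluation index $N[0,t]$ in the definitions of $G_t$ and $H_t$ (and the $N[s,t]=0$ null event in Lemma~\ref{lem:ind-expression}); once one verifies that constancy of the marginal laws of the $\C_n$ renders these quantities insensitive to the Poisson clock, both (i) and (ii) follow directly from the cited results.
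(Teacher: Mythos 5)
Your proposal is correct and follows essentially the same route as the paper's proof: independence of the $\C_n$ from the constancy of the Bernoulli parameter, the factorization $G_t(k,k') = H_t(k)H_t(k')$ (obtained by conditioning on $N[0,t]$) substituted into Theorem~\ref{thm:xk-moments} for part (i), and Lemma~\ref{lem:ind-expression} with the constant factor $\E\varphi(B+1)$ pulled out and the normalization $\sum_{n\ge1}\bone_{[s,t]}(E_n) = N[s,t]$ for part (ii). Your explicit treatment of the $N[s,t]=0$ null event and your inductive factorization of the joint law of the $\C_n$ are slightly more careful than the paper, which declares the independence ``clear'' and takes $\E[N[s,t]/N[s,t]]=1$ without comment, but the substance is identical.
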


\subsection{Asymptotic properties of the estimators}\label{sec:theory-estimators}

We first show that the estimators given by \eqref{eq:npest}, \eqref{eq:pesta} and \eqref{eq:pestb} are all almost-surely well-defined as $L \to \infty$.
\begin{theorem}\label{thm:ests-well-defined}
We have the following:
\begin{itemize}
    \item[(a)] Fix $n \ge 1$ and $k \in \{0\} \cup [n]$. Then 
    \[
        \P\bigg(\liminf_{L \to \infty}\sum_{i \in [L]}\I\{m_{n, i} = k\} > 0 \bigg) = 1.
    \]
    \item[(b)] Fix $n \ge 1$. Then
    \[
        \P\bigg(\liminf_{L \to \infty} \bigg[\frac{1}{L}\sum_{i \in [L]} m^2_{n, i} - \bigg(\frac{1}{L}\sum_{i \in [L]}m_{n, i}\bigg)^2\bigg] > 0\bigg) = 1.
    \]
\end{itemize}
\end{theorem}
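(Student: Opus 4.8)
The plan is to reduce both parts to the strong law of large numbers (SLLN) after establishing the crucial structural fact that, for any fixed $n$, the random variables $\{m_{n, i}\}_{i \in [L]}$ are independent and identically distributed across $i$. Note first that $m_{n, i}$ depends only on the discrete co-author inclusion mechanism $\C_1, \ldots, \C_n$ and not on the event times $E_\ell$, so the Poisson process plays no role here. To make the i.i.d.\ structure transparent I would give an explicit coupling: introduce an array of i.i.d.\ $\mathrm{Uniform}[0,1]$ variables $\{U_{\ell, i}\}_{\ell \ge 1, i \in [L]}$ and set $\bone_{\C_\ell}(i) = \I(U_{\ell, i} \le F_\ell(m_{\ell - 1, i}))$ with $m_{0, i} = 0$. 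A short induction on $\ell$ shows this reproduces both the conditional law \eqref{eq:model-new-paper} and the conditional independence across $i$ from part~(b) of the model. Since the entire trajectory $(m_{1, i}, \ldots, m_{n, i})$ of author $i$ is a deterministic function of $(U_{1, i}, \ldots, U_{n, i})$ alone, and this function is identical for every $i$, the terminal counts $m_{n, i}$ are i.i.d.\ across $i$. Write $p_n(k) := \P(m_{n, 1} = k)$ for $k \in \{0\} \cup [n]$. Getting this reduction cleanly is the main (conceptual rather than technical) obstacle; once it is in hand, the rest is routine.

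For part~(a), the sum $\sum_{i \in [L]} \I\{m_{n, i} = k\}$ is a sum of i.i.d.\ $\bern(p_n(k))$ variables, so the SLLN gives $\frac{1}{L}\sum_{i \in [L]} \I\{m_{n, i} = k\} \to p_n(k)$ almost surely. Hence, provided $p_n(k) > 0$, the count diverges to $+\infty$ and its $\liminf$ is almost surely positive. To verify $p_n(k) > 0$ I would exhibit a single positive-probability trajectory reaching the value $k$, namely co-authoring at steps $1, \ldots, k$ and abstaining at steps $k + 1, \ldots, n$, which has probability $\prod_{j = 0}^{k - 1} F_{j + 1}(j) \cdot \prod_{\ell = k + 1}^{n} (1 - F_\ell(k))$. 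Under the natural non-degeneracy assumption that the relevant co-authorship probabilities $F_\ell(j)$ lie strictly in $(0, 1)$, this product is strictly positive (the empty products for the boundary cases $k = 0$ and $k = n$ being handled automatically).

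For part~(b), observe that the bracketed quantity is exactly the empirical variance of $\{m_{n, i}\}_{i \in [L]}$. Applying the SLLN to both $m_{n, i}$ and $m_{n, i}^2$ yields $\frac{1}{L}\sum_{i \in [L]} m_{n, i}^2 \to \E[m_{n, 1}^2]$ and $\frac{1}{L}\sum_{i \in [L]} m_{n, i} \to \E[m_{n, 1}]$ almost surely, so the empirical variance converges almost surely to $\var(m_{n, 1})$. Since $n \ge 1$, both $\{m_{n, 1} = 0\}$ and $\{m_{n, 1} \ge 1\}$ carry positive probability (again by non-degeneracy of the $F_\ell$, since $\P(m_{n,1} \ge 1) \ge F_1(0) > 0$ and $\P(m_{n,1} = 0) = \prod_{\ell = 1}^{n}(1 - F_\ell(0)) > 0$), so $m_{n, 1}$ is non-degenerate and $\var(m_{n, 1}) > 0$. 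Therefore the $\liminf$ in part~(b) is almost surely positive, completing the proof.
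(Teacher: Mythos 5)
Your proposal is correct and takes essentially the same route as the paper: both parts reduce to the strong law of large numbers for the i.i.d.\ family $\{m_{n,i}\}_{i \in [L]}$ together with the positivity of $p_{n,k} = \P(m_{n,1} = k)$. The only (minor) differences are that the paper establishes positivity by induction via the one-step recursion $p_{n,k} = F_n(k-1)\,p_{n-1,k-1} + (1 - F_n(k))\,p_{n-1,k}$ while you lower-bound $p_{n,k}$ by a single explicit trajectory, and that you make explicit both the coupling behind the i.i.d.\ claim and the non-degeneracy condition $0 < F_\ell(j) < 1$ that the paper's induction also implicitly requires.
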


Note that for fixed $n$, $\{m_{n,i}\}_{i \in [L]}$ are i.i.d. Let $p_{n,k} := \P(m_{n,1} = k)$ for $k \in \{0\} \cup [n]$.
\begin{theorem}[Consistency and asymptotic normality of $\hat{F}_n(k)$]\label{thm:consistency-nonparam}
    Fix $n \ge 1$ and $k \in \{0\} \cup [n - 1]$. Then, as $L \to \infty$, 
\begin{align}\label{eq:asconv}
    \hat{F}_n(k) \stackrel{p}{\to} F_{n}(k)
\end{align}
and
\begin{align}\label{eq:wconv}
    \sqrt{L} \big(\hat{F}_n(k) - F_n(k)\big) \stackrel{d}{\to} N(0, \sigma^2),
\end{align}
where
\begin{align}\label{eq:av}
    \sigma^2 = \frac{F_n(k)(1 - F_n(k))}{p_{n - 1, k}}.
\end{align}
\end{theorem}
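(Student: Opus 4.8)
The plan is to recognize $\hat F_n(k)$ as a ratio of two empirical averages over $i \in [L]$ and then apply the law of large numbers and the central limit theorem. Write the numerator of \eqref{eq:npest} as $S_L := \sum_{i \in [L]} \I(m_{n-1,i} = k)\bone_{\C_n}(i)$ and the denominator as $T_L := \sum_{i \in [L]} \I(m_{n-1,i} = k)$. The key structural fact I would use is that, for fixed $n$, the triples $(m_{n-1,i}, \bone_{\C_n}(i))$ are i.i.d. across $i \in [L]$ — this follows from the mean-field construction, since each co-author $a_i$ is treated symmetrically and the indicators are built from the same laws $F_1, \ldots, F_n$ applied to independent individual histories. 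Then $\frac{1}{L} T_L \to p_{n-1,k}$ and $\frac{1}{L} S_L \to \E[\I(m_{n-1,1}=k)\bone_{\C_n}(1)]$ almost surely by the strong law. Using the model assumption \eqref{eq:model-new-paper}, $\E[\bone_{\C_n}(1) \mid m_{n-1,1} = k] = F_n(k)$, so the latter limit equals $F_n(k)\, p_{n-1,k}$. By Theorem~\ref{thm:ests-well-defined}(a), $T_L > 0$ eventually almost surely and $p_{n-1,k} > 0$, so the ratio is well-defined in the limit and the continuous mapping theorem gives $\hat F_n(k) \stackrel{p}{\to} F_n(k)$, establishing \eqref{eq:asconv}.

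For the asymptotic normality \eqref{eq:wconv}, I would apply the bivariate central limit theorem to the i.i.d.\ vectors $\big(\I(m_{n-1,i}=k)\bone_{\C_n}(i),\ \I(m_{n-1,i}=k)\big)$ and then use the delta method on the map $(u,v) \mapsto u/v$. Concretely, write
\[
    \hat F_n(k) = \frac{\frac1L S_L}{\frac1L T_L},
\]
and center: $\sqrt{L}\big(\frac1L S_L - F_n(k) p_{n-1,k},\ \frac1L T_L - p_{n-1,k}\big)$ converges to a bivariate normal with an explicit $2\times 2$ covariance matrix $\Sigma$. Differentiating $g(u,v) = u/v$ at the point $(F_n(k)p_{n-1,k}, p_{n-1,k})$ gives gradient $\nabla g = \big(1/p_{n-1,k},\ -F_n(k)/p_{n-1,k}\big)$, and the delta method yields $\sqrt{L}(\hat F_n(k) - F_n(k)) \stackrel{d}{\to} N(0, \nabla g^\top \Sigma \nabla g)$.

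The main computational step — and the one I expect to require the most care — is verifying that $\nabla g^\top \Sigma \nabla g$ simplifies to the claimed variance in \eqref{eq:av}. The entries of $\Sigma$ are moments of Bernoulli-type indicators: using $\I(m_{n-1,i}=k)^2 = \I(m_{n-1,i}=k)$ and $\bone_{\C_n}(i)^2 = \bone_{\C_n}(i)$, one finds $\var(\I(m_{n-1,1}=k)\bone_{\C_n}(1)) = F_n(k)p_{n-1,k} - (F_n(k)p_{n-1,k})^2$, $\var(\I(m_{n-1,1}=k)) = p_{n-1,k}(1 - p_{n-1,k})$, and the cross-covariance equals $F_n(k)p_{n-1,k}(1 - p_{n-1,k})$. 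Substituting these into the quadratic form, the $(1-p_{n-1,k})$ terms cancel and the expression collapses to $F_n(k)(1-F_n(k))/p_{n-1,k}$, matching \eqref{eq:av}. Alternatively — and this is the cleaner route I would actually write up — I would condition on the event $\{m_{n-1,i}=k\}$ from the start: given $T_L = t$, the numerator $S_L$ is a sum of $t$ i.i.d.\ $\bern(F_n(k))$ variables, so $\hat F_n(k)$ is exactly a sample mean of $T_L$ independent $\bern(F_n(k))$ draws. Since $T_L/L \to p_{n-1,k}$, a conditional CLT (or Anscombe's theorem for randomly-indexed sums) gives normality with variance $F_n(k)(1-F_n(k))$ scaled by $L/T_L \to 1/p_{n-1,k}$, directly producing \eqref{eq:av} without matrix algebra.
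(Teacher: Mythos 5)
Your primary argument is essentially the paper's own proof: the paper likewise writes $\hat{F}_n(k)$ as a ratio of the i.i.d.\ sums of $X_i = \I(m_{n-1,i}=k)\bone_{\C_n}(i)$ and $Y_i = \I(m_{n-1,i}=k)$, computes the same Bernoulli moments, and invokes a ratio LLN/CLT (Lemma~\ref{lem:wlln-clt}) whose asymptotic variance $(\mu_y^2\sigma_x^2 + \mu_x^2\sigma_y^2 - 2\rho\mu_x\mu_y\sigma_x\sigma_y)/\mu_y^4$ is exactly your delta-method quadratic form and collapses to $F_n(k)(1-F_n(k))/p_{n-1,k}$ as you verify. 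Your alternative route via conditioning on $T_L$ and Anscombe's theorem is a correct and arguably cleaner way to obtain the variance, but it is not the path the paper takes.
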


\begin{corollary}[Asymptotic confidence interval for ${F}_n(k)$]\label{cor:confint}
Let
\begin{align}\label{eq:ev}
    \hat{\sigma}^2 := \frac{\hat{F}_n(k)(1 - \hat{F}_n(k))}{\frac{1}{L}\sum_{i \in [L]}\I(m_{n - 1, i} = k)}.
\end{align}
Then an $(1 - q)\%$ asymptotic confidence interval for ${F}_n(k)$ is given by
\[
   \bigg(\hat{F}_n(k) - \frac{\hat{\sigma}}{\sqrt{L}} z_{q/2}, \hat{F}_n(k) + \frac{\hat{\sigma}}{\sqrt{L}} z_{q/2}\bigg),
\]
where $z_{q/2}$ is the $(1 - q/2)$-th quantile of the standard normal distribution.
\end{corollary}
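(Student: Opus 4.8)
The plan is to combine the asymptotic normality from Theorem~\ref{thm:consistency-nonparam} with a consistency result for the variance estimator $\hat\sigma^2$, and then invoke Slutsky's theorem to obtain a studentized pivot whose limiting law is standard normal.

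First I would establish that $\hat\sigma^2 \stackrel{p}{\to} \sigma^2$, where $\sigma^2$ is as in \eqref{eq:av}. For the denominator, recall that for fixed $n$ the variables $\{m_{n-1,i}\}_{i\in[L]}$ are i.i.d., so the indicators $\I(m_{n-1,i}=k)$ are i.i.d.\ $\bern(p_{n-1,k})$; by the weak law of large numbers, $\frac{1}{L}\sum_{i\in[L]}\I(m_{n-1,i}=k) \stackrel{p}{\to} p_{n-1,k}$, and by Theorem~\ref{thm:ests-well-defined}(a) this limit is strictly positive. For the numerator, the consistency statement \eqref{eq:asconv} together with the continuous mapping theorem gives $\hat F_n(k)(1-\hat F_n(k)) \stackrel{p}{\to} F_n(k)(1-F_n(k))$. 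Since $p_{n-1,k}>0$, the ratio map is continuous at the joint limit, so a further application of the continuous mapping theorem yields $\hat\sigma^2 \stackrel{p}{\to} \sigma^2$, and hence $\sigma/\hat\sigma \stackrel{p}{\to} 1$.

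Next I would studentize and pass to the limit. Writing
\[
    \frac{\sqrt{L}(\hat F_n(k) - F_n(k))}{\hat\sigma} = \frac{\sqrt{L}(\hat F_n(k) - F_n(k))}{\sigma}\cdot\frac{\sigma}{\hat\sigma},
\]
the first factor converges in distribution to $N(0,1)$ by \eqref{eq:wconv} and \eqref{eq:av}, while the second converges in probability to $1$ by the previous paragraph; Slutsky's theorem then gives
\[
    \frac{\sqrt{L}(\hat F_n(k)-F_n(k))}{\hat\sigma} \stackrel{d}{\to} N(0,1).
\]
Since the limiting law is continuous, it follows that
\[
    \P\bigg(\bigg|\frac{\sqrt{L}(\hat F_n(k)-F_n(k))}{\hat\sigma}\bigg| \le z_{q/2}\bigg) \to 1-q,
\]
where $z_{q/2}$ is the $(1-q/2)$-th standard normal quantile. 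Rearranging the event inside the probability into the two stated endpoints then gives the claimed interval.

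The argument is entirely standard; the only point requiring genuine care is the consistency $\hat\sigma^2 \stackrel{p}{\to} \sigma^2$, and within that the verification that the denominator's probability limit $p_{n-1,k}$ is strictly positive so that the ratio map is continuous at the limit. This positivity is precisely what Theorem~\ref{thm:ests-well-defined}(a) supplies, so no obstacle beyond routine bookkeeping is expected.
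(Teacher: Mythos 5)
Your proposal is correct and follows essentially the same route as the paper: establish $\hat\sigma^2 \stackrel{p}{\to} \sigma^2$ via the law of large numbers for the denominator (using $p_{n-1,k}>0$, which the paper records in \eqref{eq:slln}) and the consistency \eqref{eq:asconv} for the numerator, then apply Slutsky's theorem to the studentized quantity and invert the resulting limiting probability statement. No gaps.
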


\begin{theorem}[Consistency and asymptotic normality of $\hat{a}_n$ and $\hat{b}_n$]
\label{thm:consistency-param}
Fix $n \ge 1$. Then, as $L \to \infty$, 
\[
    \hat{a}_n \stackrel{p}{\to} a_n, \quad \hat{b}_n \stackrel{p}{\to} b_n
\]
and
\begin{equation}\label{eq:anpest}
    \sqrt{L} \big(\hat{a}_n - {a}_n\big) \stackrel{d}{\to} N(0, \sigma_{a}^2), \quad \sqrt{L} \big(\hat{b}_n - {b}_n\big) \stackrel{d}{\to} N(0, \sigma_{b}^2),
\end{equation}
for some $\sigma^2_{a}, \sigma^2_{b} > 0$.
\end{theorem}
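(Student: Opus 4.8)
The plan is to cast both $\hat a_n$ and $\hat b_n$ as smooth functions of a vector of sample averages and then invoke the multivariate central limit theorem together with the delta method. Fix $n$ and recall that, conditionally on $\C_1,\dots,\C_{n-1}$, the quantities $\{m_{n-1,i}\}_{i\in[L]}$ are i.i.d.\ across $i$, and given these, the indicators $\bone_{\C_n}(i)$ are independent $\bern(a_n m_{n-1,i}+b_n)$. Unconditionally, the pairs $(m_{n-1,i},\bone_{\C_n}(i))$ are i.i.d.\ across $i\in[L]$. I would therefore introduce the sample means
\[
    \bar m = \tfrac1L\sum_{i\in[L]} m_{n-1,i},\quad
    \overline{m^2} = \tfrac1L\sum_{i\in[L]} m_{n-1,i}^2,\quad
    \bar c = \tfrac1L\sum_{i\in[L]}\bone_{\C_n}(i),\quad
    \overline{cm} = \tfrac1L\sum_{i\in[L]}\bone_{\C_n}(i) m_{n-1,i},
\]
so that, comparing with \eqref{eq:pesta} and \eqref{eq:pestb}, we have $\hat a_n = g_a(\bar c,\overline{cm},\bar m,\overline{m^2})$ and $\hat b_n = g_b(\bar c,\overline{cm},\bar m,\overline{m^2})$ for the explicit rational maps $g_a(x_1,x_2,x_3,x_4)=(x_2-x_1x_3)/(x_4-x_3^2)$ and $g_b=x_1-g_a\cdot x_3$.

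First I would establish consistency. The strong law of large numbers gives almost-sure convergence of the four-vector $S_L:=(\bar c,\overline{cm},\bar m,\overline{m^2})$ to its population mean $\mu:=(\E\bar c,\E\overline{cm},\E\bar m,\E\overline{m^2})$. One checks by a direct computation using $\E[\bone_{\C_n}(i)\mid m_{n-1,i}]=a_n m_{n-1,i}+b_n$ that $g_a(\mu)=a_n$ and $g_b(\mu)=b_n$; this is essentially the population version of the least-squares identity already noted after \eqref{eq:pestb}. Theorem~\ref{thm:ests-well-defined}(b) guarantees that the denominator $x_4-x_3^2$ evaluated at the limit equals $\var(m_{n-1,1})>0$, so $g_a$ and $g_b$ are continuous at $\mu$, and the continuous mapping theorem yields $\hat a_n\to a_n$ and $\hat b_n\to b_n$ almost surely, hence in probability.

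Next I would prove asymptotic normality. The multivariate CLT applied to the i.i.d.\ summands gives $\sqrt L(S_L-\mu)\stackrel{d}{\to}N(0,\Sigma)$, where $\Sigma$ is the $4\times4$ covariance matrix of the single-observation vector $(\bone_{\C_n}(1),\bone_{\C_n}(1)m_{n-1,1},m_{n-1,1},m_{n-1,1}^2)$; its entries are finite because $m_{n-1,1}\le n-1$ is bounded. Since $g_a,g_b$ are continuously differentiable in a neighborhood of $\mu$ (the denominator being bounded away from zero there), the delta method gives
\[
    \sqrt L(\hat a_n-a_n)\stackrel{d}{\to}N\big(0,\,\nabla g_a(\mu)^\top\Sigma\,\nabla g_a(\mu)\big),\qquad
    \sqrt L(\hat b_n-b_n)\stackrel{d}{\to}N\big(0,\,\nabla g_b(\mu)^\top\Sigma\,\nabla g_b(\mu)\big),
\]
which identifies $\sigma_a^2$ and $\sigma_b^2$ as the two quadratic forms above. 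The main obstacle is verifying strict positivity of these limiting variances rather than the convergence itself: one must rule out the degenerate possibility that $\nabla g_a(\mu)$ or $\nabla g_b(\mu)$ lies in the kernel of $\Sigma$. I would handle this by arguing that $\Sigma$ is nondegenerate in the relevant directions—intuitively because the Bernoulli noise $\bone_{\C_n}(1)-(a_n m_{n-1,1}+b_n)$ is nondegenerate whenever $0<a_n m_{n-1,1}+b_n<1$ and the distribution of $m_{n-1,1}$ places mass on at least two points (again supplied by Theorem~\ref{thm:ests-well-defined}(b))—so that the residual variance of the underlying linear model is strictly positive and both gradient contractions are bounded below by a positive constant.
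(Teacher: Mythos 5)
Your proposal takes essentially the same route as the paper: its Lemma~\ref{lem:pest} likewise writes $\hat a_n$ and $\hat b_n$ as smooth rational functions of the sample mean of the i.i.d.\ vectors $(\bone_{\C_n}(i),\, m_{n-1,i},\, m_{n-1,i}^2,\, \bone_{\C_n}(i)m_{n-1,i})$, applies the multivariate CLT together with the delta method, and identifies the population slope and intercept as $a_n$ and $b_n$ exactly as you do. The only substantive difference is that you explicitly flag the need to verify strict positivity of $\sigma_a^2$ and $\sigma_b^2$ (a point the paper's proof computes expressions for but does not actually check), which is a reasonable extra care rather than a divergence in method.
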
 
\begin{remark}
    Explicit expressions for $\sigma_a^2$ and $\sigma_b^2$ are derived in the proof of Theorem~\ref{thm:consistency-param}. Unfortunately, these expressions are not very pleasant to look at!
\end{remark}

\begin{corollary}[Asymptotic confidence intervals for $\hat{a}_n$ and $\hat{b}_n$]\label{cor:confint-param}
There exists consistent estimators $\hat{\sigma}^2_a$ and $\hat{\sigma}^2_b$ of $\sigma^2_a$ and $\sigma^2_b$, respectively. An $(1 - q)\%$ asymptotic confidence intervals for $a_n$ is given by
\[
   \bigg(\hat{a}_n - \frac{\hat{\sigma}_a}{\sqrt{L}} z_{q/2}, \hat{a}_n + \frac{\hat{\sigma}_a}{\sqrt{L}} z_{q/2}\bigg),
\]
and an $(1 - q)\%$ asymptotic confidence intervals for $b_n$ is given by
\[
   \bigg(\hat{b}_n - \frac{\hat{\sigma}_b}{\sqrt{L}} z_{q/2}, \hat{b}_n + \frac{\hat{\sigma}_b}{\sqrt{L}} z_{q/2}\bigg),
\]
where, $z_{q/2}$ is the $(1 - q/2)$-th quantile of the standard normal distribution.
\end{corollary}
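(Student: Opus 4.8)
The plan is to exhibit explicit plug-in estimators of $\sigma_a^2, \sigma_b^2$ and verify their consistency via the law of large numbers and the continuous mapping theorem, after which the confidence intervals follow from Slutsky's theorem. As already noted in Section~\ref{sec:est}, for fixed $n$ the estimators $\hat a_n, \hat b_n$ are the ordinary least squares coefficients in the regression of $y_i := \bone_{\C_n}(i)$ on $m_i := m_{n - 1, i}$, so by \eqref{eq:pesta}--\eqref{eq:pestb} they are a fixed rational function $g = (g_a, g_b)$ of the four sample averages collected in $\bar W := \frac{1}{L}\sum_{i \in [L]} W_i$, where $W_i := (y_i, m_i, m_i^2, y_i m_i)$. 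The vectors $\{W_i\}_{i \in [L]}$ are i.i.d.\ across $i$ (the $m_{n-1,i}$ are i.i.d., and conditionally on them the $y_i$ are independent $\bern(a_n m_i + b_n)$ by assumption (b)) and bounded, since $0 \le m_i \le n - 1$. Writing $\mu := \E W_1$ and $\Sigma_0 := \cov(W_1)$, the proof of Theorem~\ref{thm:consistency-param} obtains the asymptotic variances as the diagonal entries of the delta-method covariance $J \Sigma_0 J^\top$ with $J := \nabla g(\mu)$; that is, $\sigma_a^2 = [J \Sigma_0 J^\top]_{aa}$ and $\sigma_b^2 = [J \Sigma_0 J^\top]_{bb}$.

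This representation immediately suggests the estimators
\[
    \hat\sigma_a^2 := [\hat J\, \hat\Sigma_0\, \hat J^\top]_{aa}, \qquad \hat\sigma_b^2 := [\hat J\, \hat\Sigma_0\, \hat J^\top]_{bb},
\]
where $\hat\Sigma_0 := \frac{1}{L}\sum_{i \in [L]}(W_i - \bar W)(W_i - \bar W)^\top$ is the sample covariance matrix and $\hat J := \nabla g(\bar W)$. Observe that because $W_i$ already records the product $y_i m_i$, the (heteroskedastic) conditional variance of $y_i$ given $m_i$ is automatically absorbed into $\Sigma_0$, so no separate estimation of the error-variance function $F_n(m)(1 - F_n(m))$ is needed.

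For consistency I would argue as follows. Since $\{W_i\}$ are i.i.d.\ and bounded, the strong law of large numbers gives $\bar W \to \mu$ and $\hat\Sigma_0 \to \Sigma_0$ almost surely. The map $g$, hence $\nabla g$, is continuously differentiable on the open set where the denominator $\overline{m^2} - \bar m^2$ is nonzero; Theorem~\ref{thm:ests-well-defined}(b) guarantees $\liminf_{L \to \infty}(\frac{1}{L}\sum_{i \in [L]} m_{n-1,i}^2 - (\frac{1}{L}\sum_{i \in [L]} m_{n-1,i})^2) > 0$ almost surely, so $\bar W$ eventually lies in this set and $\nabla g$ is continuous there. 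By the continuous mapping theorem $\hat J \to J$ almost surely, and combining the three limits yields $\hat\sigma_a^2 \stackrel{p}{\to} \sigma_a^2$ and $\hat\sigma_b^2 \stackrel{p}{\to} \sigma_b^2$ (indeed almost surely). For the intervals themselves, Theorem~\ref{thm:consistency-param} supplies $\sqrt{L}(\hat a_n - a_n) \stackrel{d}{\to} N(0, \sigma_a^2)$ with $\sigma_a^2 > 0$; together with $\hat\sigma_a \stackrel{p}{\to} \sigma_a > 0$ (continuity of the square root) Slutsky's theorem gives $\sqrt{L}(\hat a_n - a_n)/\hat\sigma_a \stackrel{d}{\to} N(0, 1)$, so $\P(|\hat a_n - a_n| \le \hat\sigma_a z_{q/2}/\sqrt{L}) \to 1 - q$, which is exactly the claimed coverage for $a_n$. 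The argument for $b_n$ is identical.

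The one point requiring care—and the main obstacle—is ensuring both that the plug-in is carried out on the region where $g$ is differentiable and that the limiting variances are strictly positive. The former is handled by invoking Theorem~\ref{thm:ests-well-defined}(b) to keep $\overline{m^2} - \bar m^2$ bounded away from $0$, and the latter by the positivity $\sigma_a^2, \sigma_b^2 > 0$ already asserted in Theorem~\ref{thm:consistency-param}, which is precisely what legitimizes dividing by $\hat\sigma_a, \hat\sigma_b$ in the Slutsky step. Everything else reduces to routine applications of the law of large numbers, the continuous mapping theorem, and Slutsky's theorem.
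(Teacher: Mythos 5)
Your proposal is correct and follows essentially the same route as the paper: the paper's Lemma~\ref{lem:pest}(c) constructs exactly your plug-in sandwich estimator $\nabla g_i(\bar{\bZ}_n)^\top \hat{\Sigma} \nabla g_i(\bar{\bZ}_n)$ with $\hat{\Sigma}$ the sample covariance of $(X_i, Y_i, Y_i^2, X_i Y_i)$, proves its consistency via the law of large numbers and the continuous mapping theorem, and the corollary then follows by Slutsky's theorem just as you describe. The only cosmetic difference is that the paper handles the differentiability region by noting $\nabla g_i$ is continuous at $\btheta$ (since $\sigma_y^2 > 0$) rather than by invoking Theorem~\ref{thm:ests-well-defined}(b).
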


\section{Simulation studies}\label{sec:simulations}
In this section, we study the behaviour of the indices of collaboration for various choices of the intensity functional $\lambda(t)$ and co-authorship probability parameters $(F_n(k))_{n \ge 1, k \ge 0}$ through a series of simulation experiments.

We plot the dynamics of the three collaboration coefficients ($I_\CI$, $I_\DC$, and $I_\CC$) in several different settings.
In each setting, we use $L = 100$ authors and report the values of the indices per year (i.e. $t - s = 1$ year), averaged over $10$ Monte Carlo runs. These experiments are all run on a laptop with 8GB of RAM and an Intel Core i3-6006U CPU. 

\subsection{Effect of \texorpdfstring{$F_n(k)$}{}.}\label{sec:effect-F}
In the first three settings (shown in Figures~\ref{fig:simulation_1}-\ref{fig:simulation_3}), we take $\lambda(t)$ to be constant, so that the effect of the collaboration probabilities $F_n(k), n \ge 1, k \ge 0$ can be understood cleanly.

In the simplest case, shown in Figure~\ref{fig:simulation_1}, we take $F_n(k), 0 \le k \le n - 1, n \ge 1,$ to be constant free of both $n$ and $k$. Note that all the three collaboration coefficients are essentially constant. This is what was shown in Corollary~\ref{cor:const_F}.

In the setting of Figure~\ref{fig:simulation_2}, we set $F_n(k)$ to be a function of $k$ only. In this case, the model does take the effect of past collaboration into account and this highly influences the dynamics of the various indices as seen from Figure~\ref{fig:simulation_2}.

Finally, in the setting of Figure~\ref{fig:simulation_3}, we set $F_n(\cdot)$ to be a (non-decreasing) function of $n$ only. Although the model does not take past history of collaboration into account in this case, the probabilities of collaboration (i.e. $F_n(k)$ for $k \ge 1$) do increase over the course of writing papers. This explains the non-decreasing nature of the curves shown in Figure~\ref{fig:simulation_3}.

\subsection{Effect of \texorpdfstring{$\lambda(t)$}{}}\label{sec:effect-lambda}
In the settings of Figures~\ref{fig:suppl_simulation_1}-\ref{fig:suppl_simulation_6}, we use non-constant $\lambda(t)$ and the same $F_n(k)$'s as in Section~\ref{sec:effect-F}. We use a piecewise constant $\lambda(t)$ in Figures~\ref{fig:suppl_simulation_1}-\ref{fig:suppl_simulation_3} and a piecewise linear one in Figures~\ref{fig:suppl_simulation_4}-\ref{fig:suppl_simulation_6}.

We find from these experiments that the co-authorship probability functions $F_n(k)$ have a much greater impact on the indices of collaboration than $\lambda(t)$. Also, the simulations in Figures~\ref{fig:suppl_simulation_1} and \ref{fig:suppl_simulation_2} confirm part (ii) of Corollary~\ref{cor:const_F}.

\begin{figure}[!htbp]
     \centering
     \begin{tabular}{ccc}
     \includegraphics[width = 0.3\textwidth]{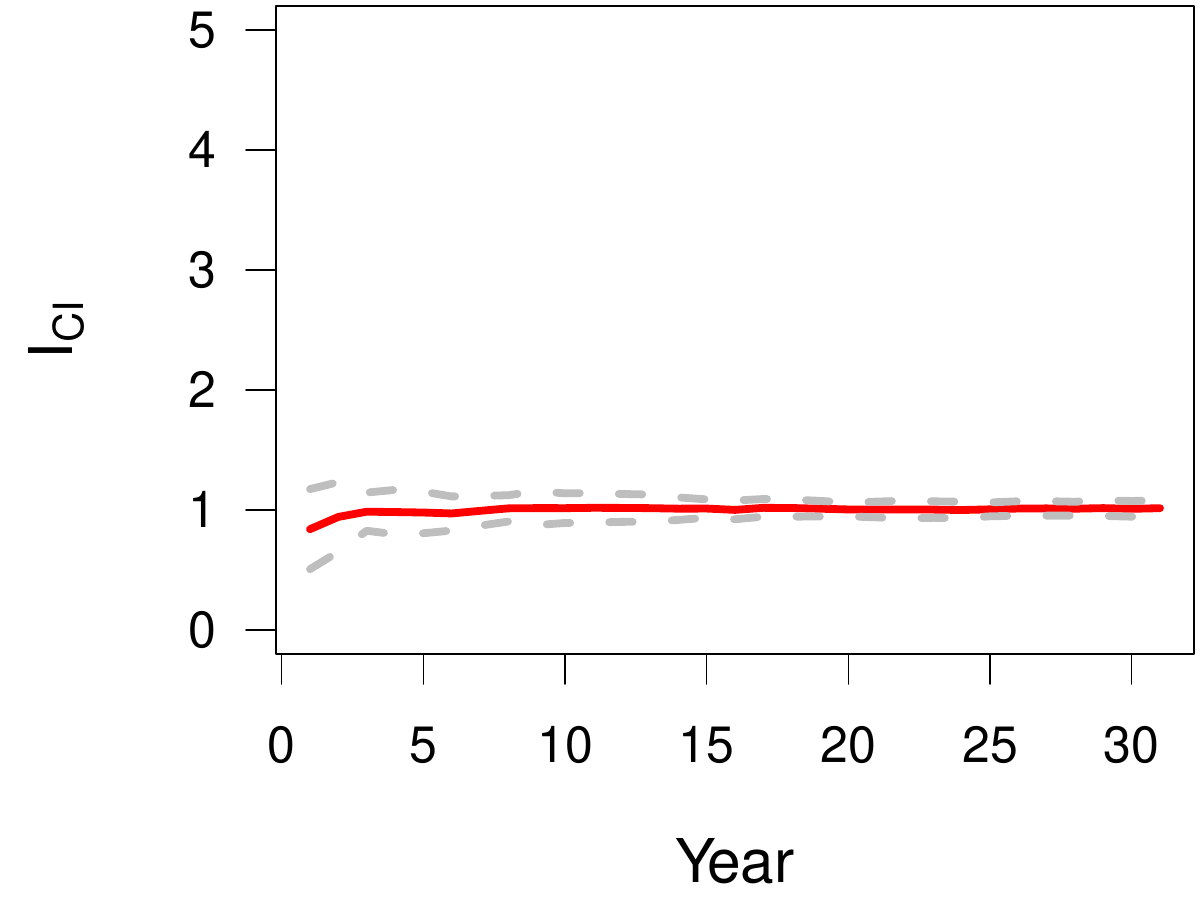} & \includegraphics[width = 0.3\textwidth]{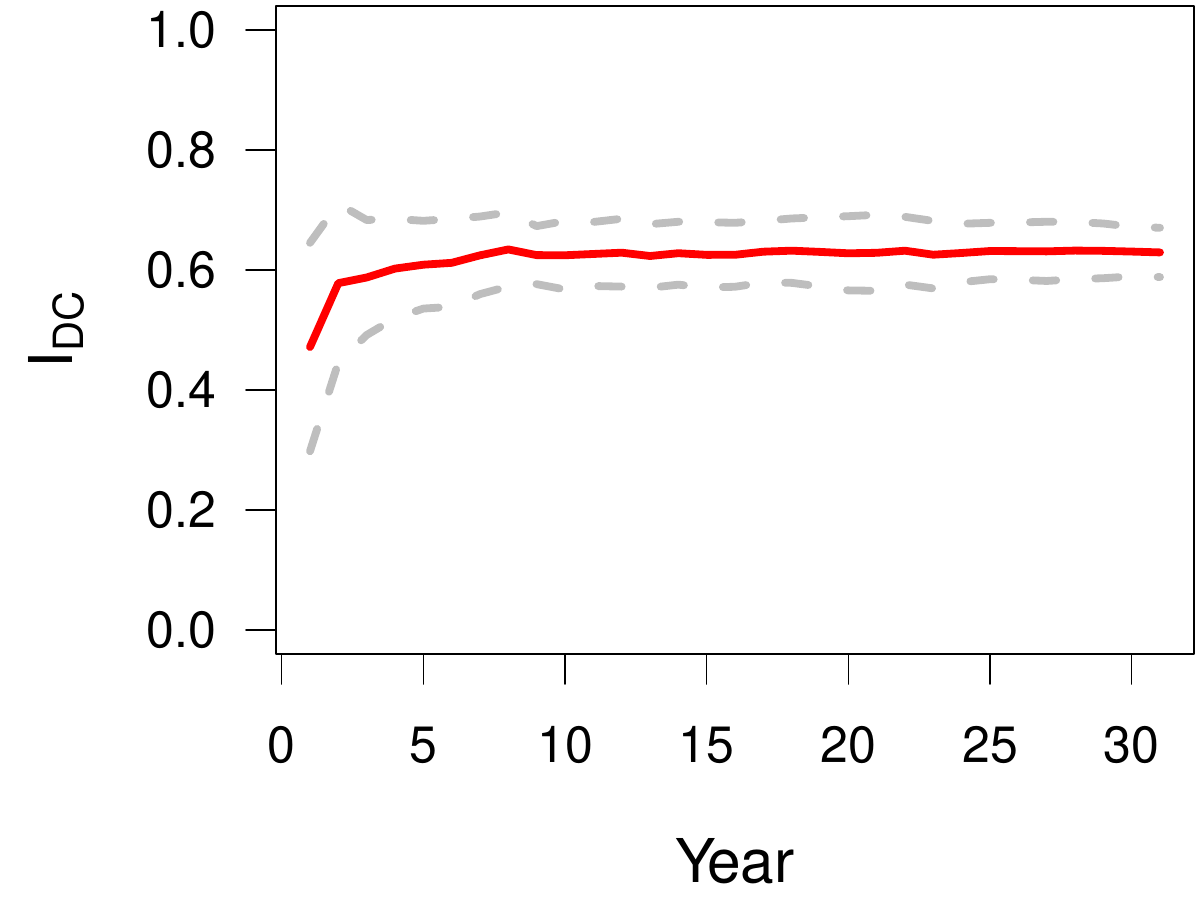} & \includegraphics[width = 0.3\textwidth]{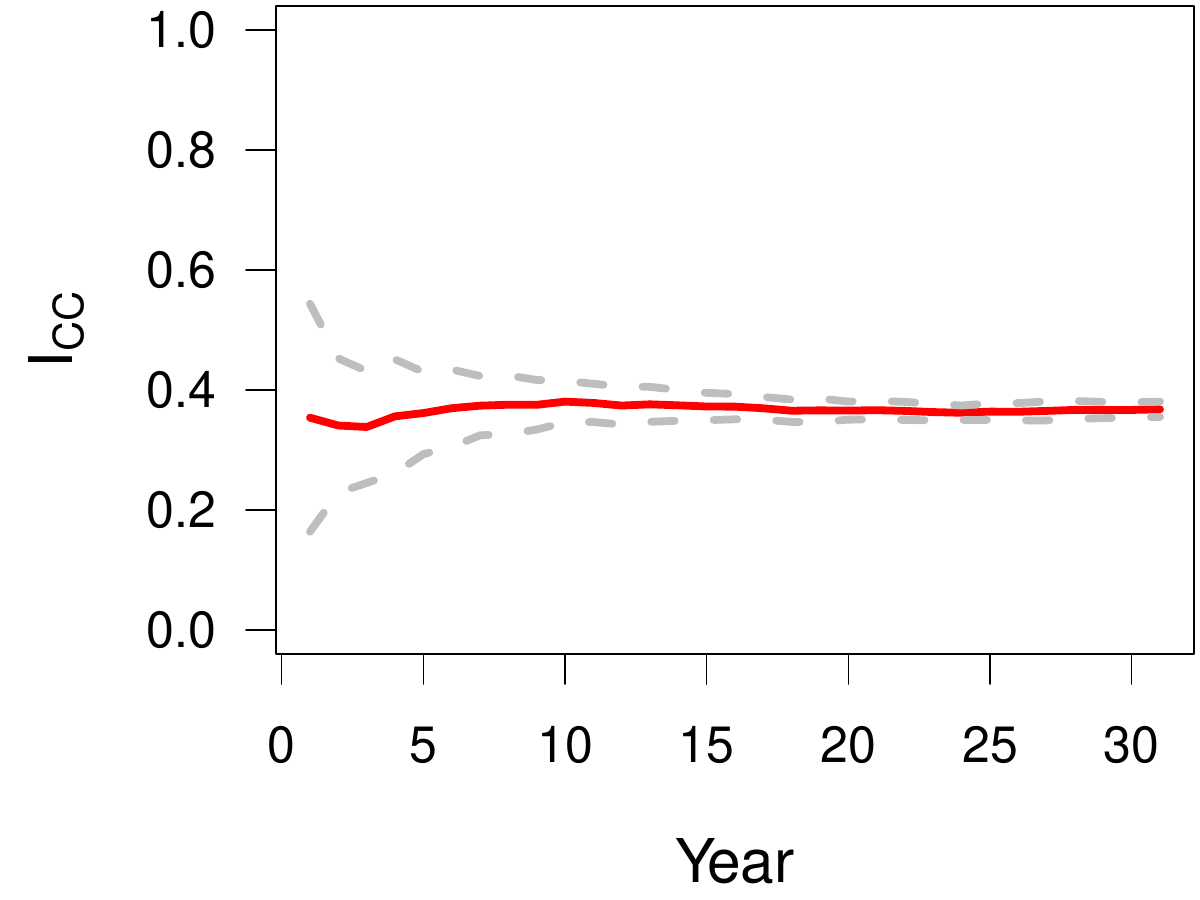} 
     \end{tabular}
     \caption{Different indices of collaboration for $\lambda(t) = 0.5$ per month, and $F_n(k) = 0.01$ for all $n \ge 1, k \ge 0$. In each subplot, the red line depicts yearly values of the corresponding index averaged over $10$ Monte Carlo runs. The dashed gray lines show $\pm 1$ standard errors.}
     \label{fig:simulation_1}
\end{figure}

\begin{figure}[!htbp]
     \centering
     \begin{tabular}{ccc}
     \includegraphics[width = 0.3\textwidth]{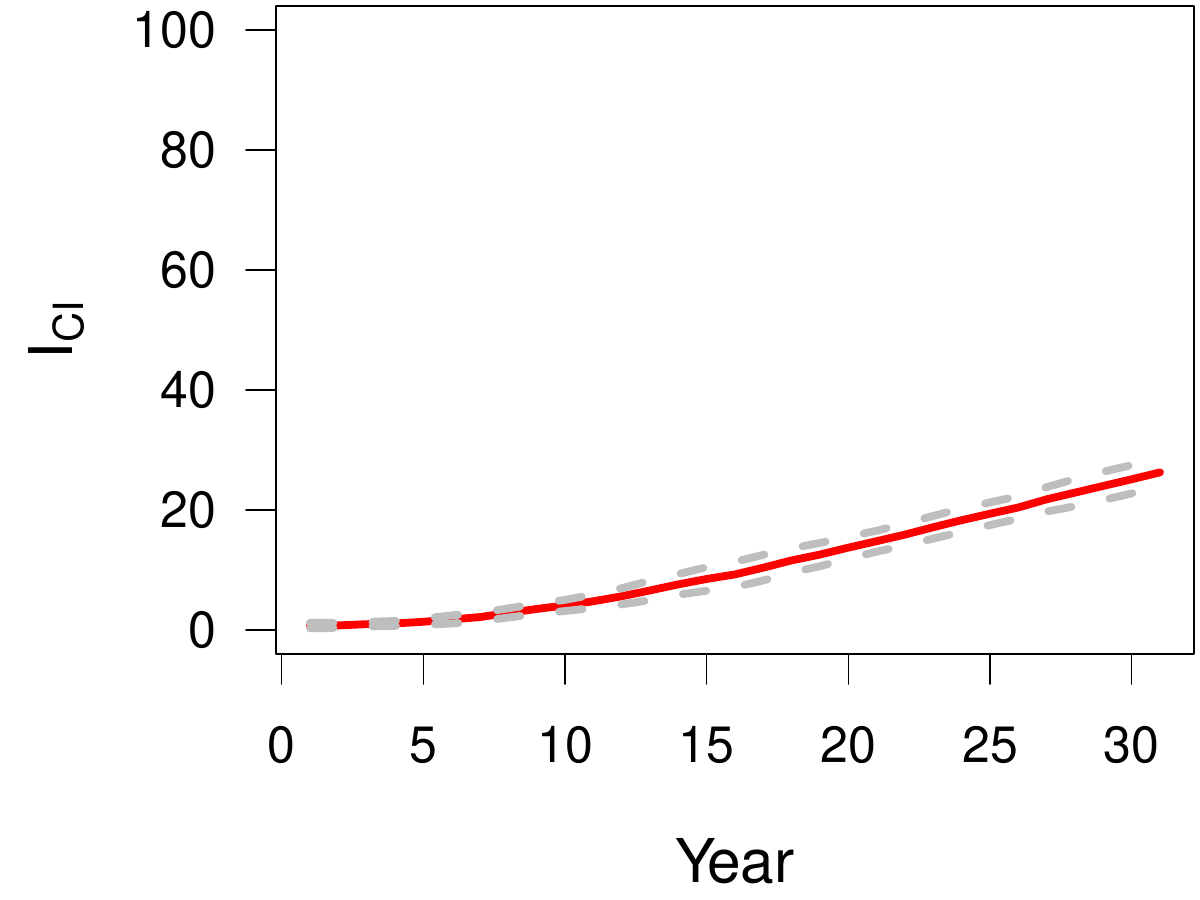} & \includegraphics[width = 0.3\textwidth]{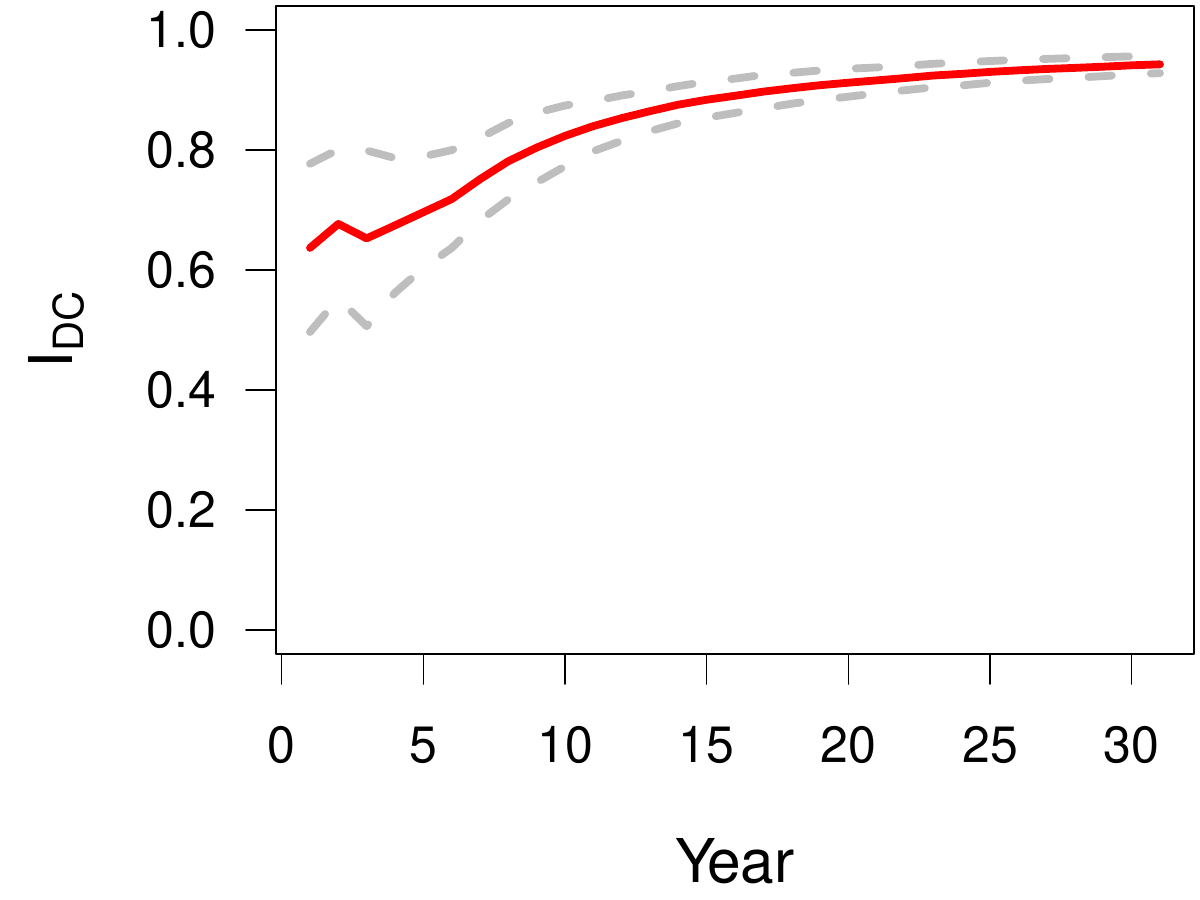} & \includegraphics[width = 0.3\textwidth]{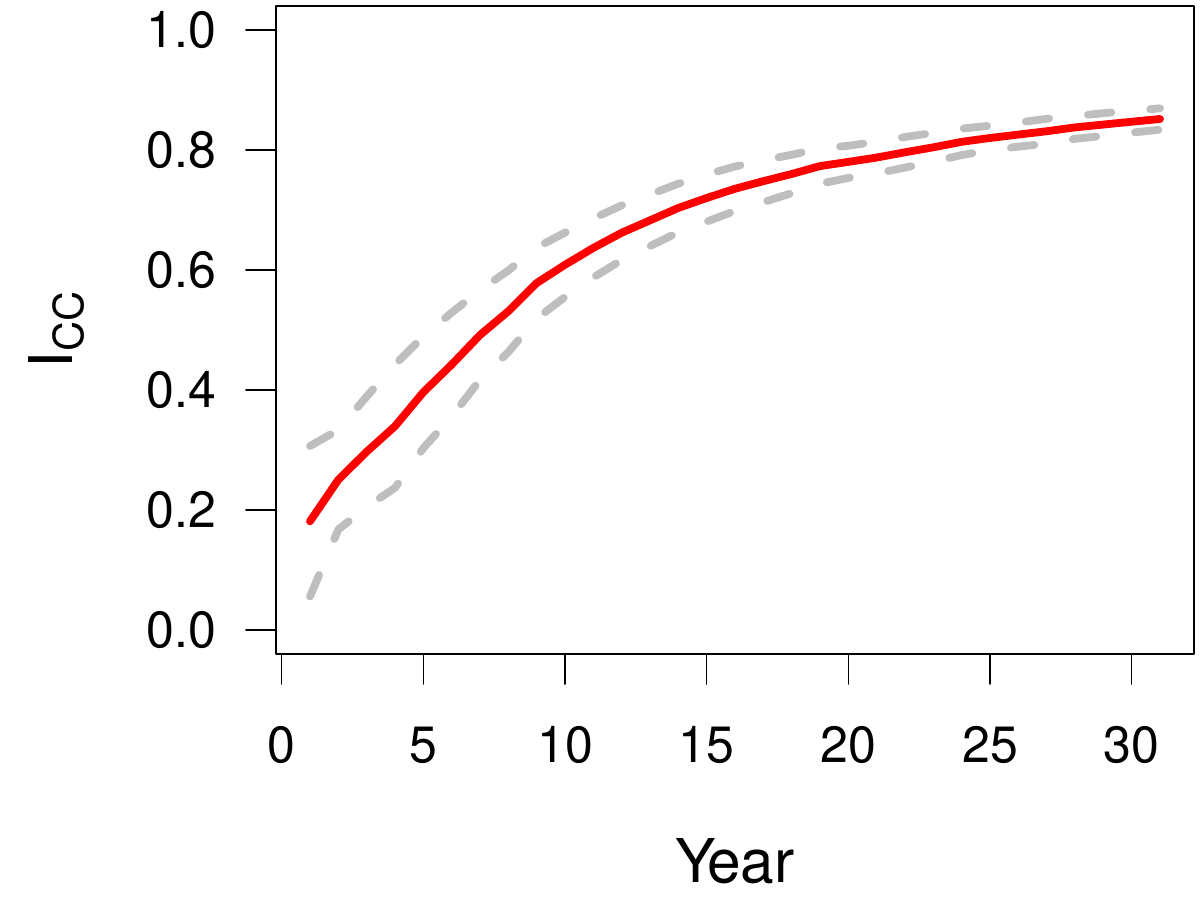}
     \end{tabular}
     \caption{Different indices of collaboration for $\lambda(t) = 0.5$ per month, and $F_n(k) = (0.05k + 0.005) \wedge 1$ for all $n \ge 1$.}
     \label{fig:simulation_2}
\end{figure}

\begin{figure}[!htbp]
     \centering
     \begin{tabular}{ccc}
         \includegraphics[width = 0.3\textwidth]{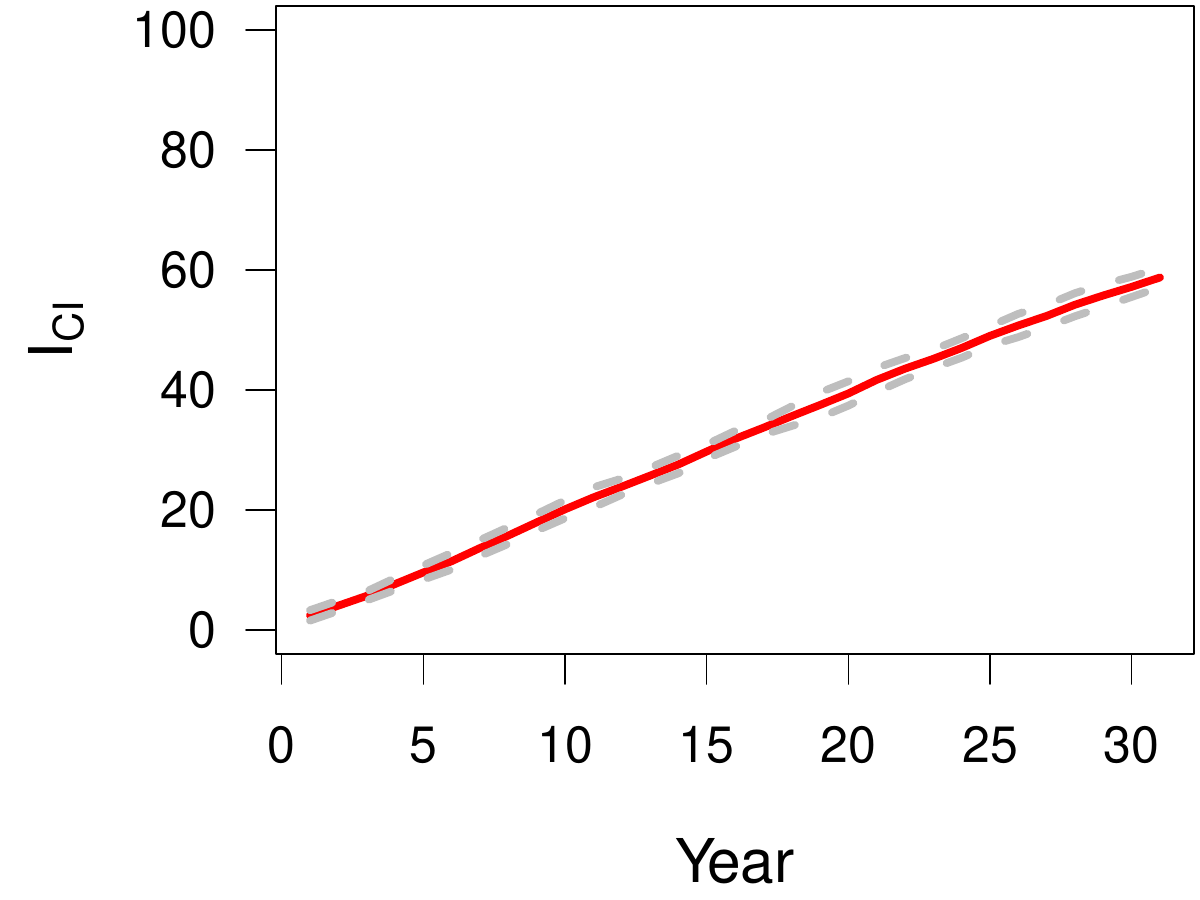} & \includegraphics[width = 0.3\textwidth]{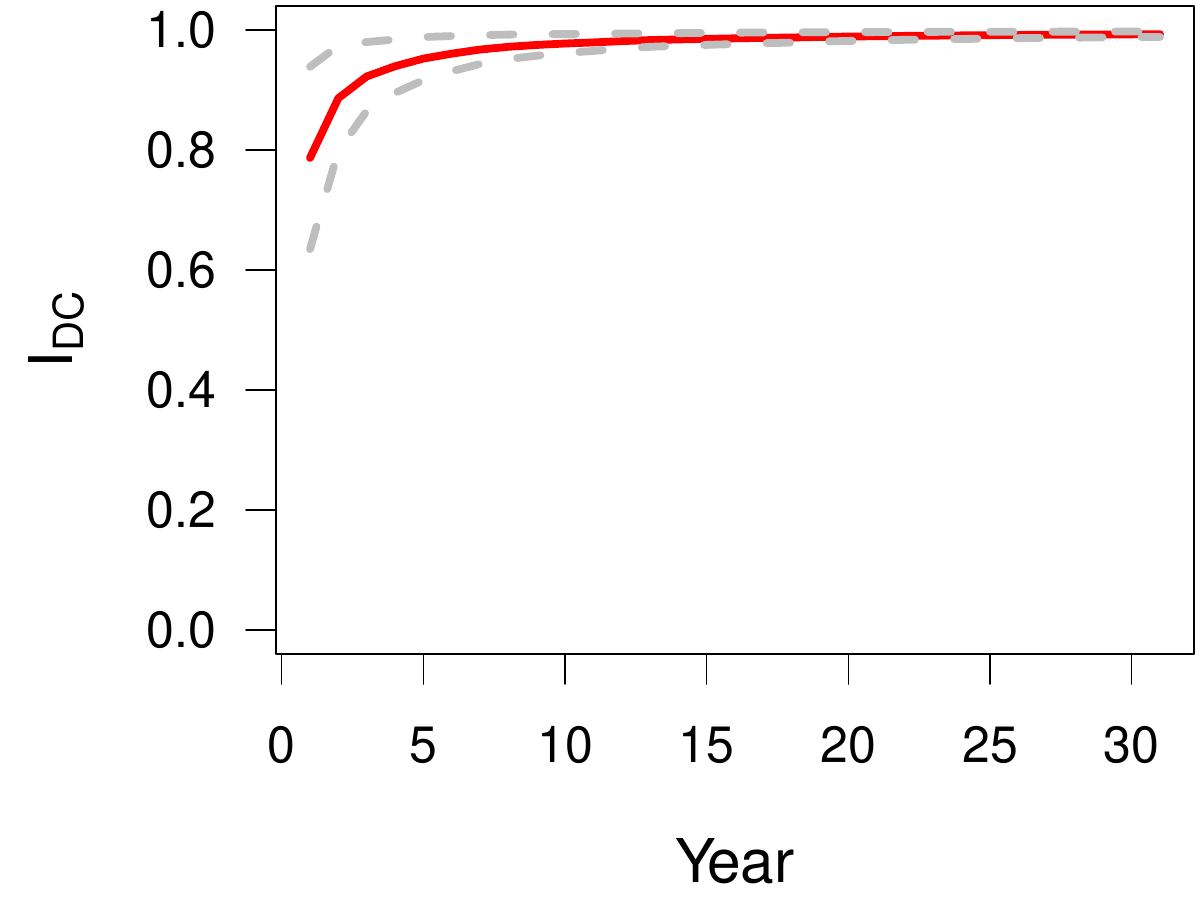} & \includegraphics[width = 0.3\textwidth]{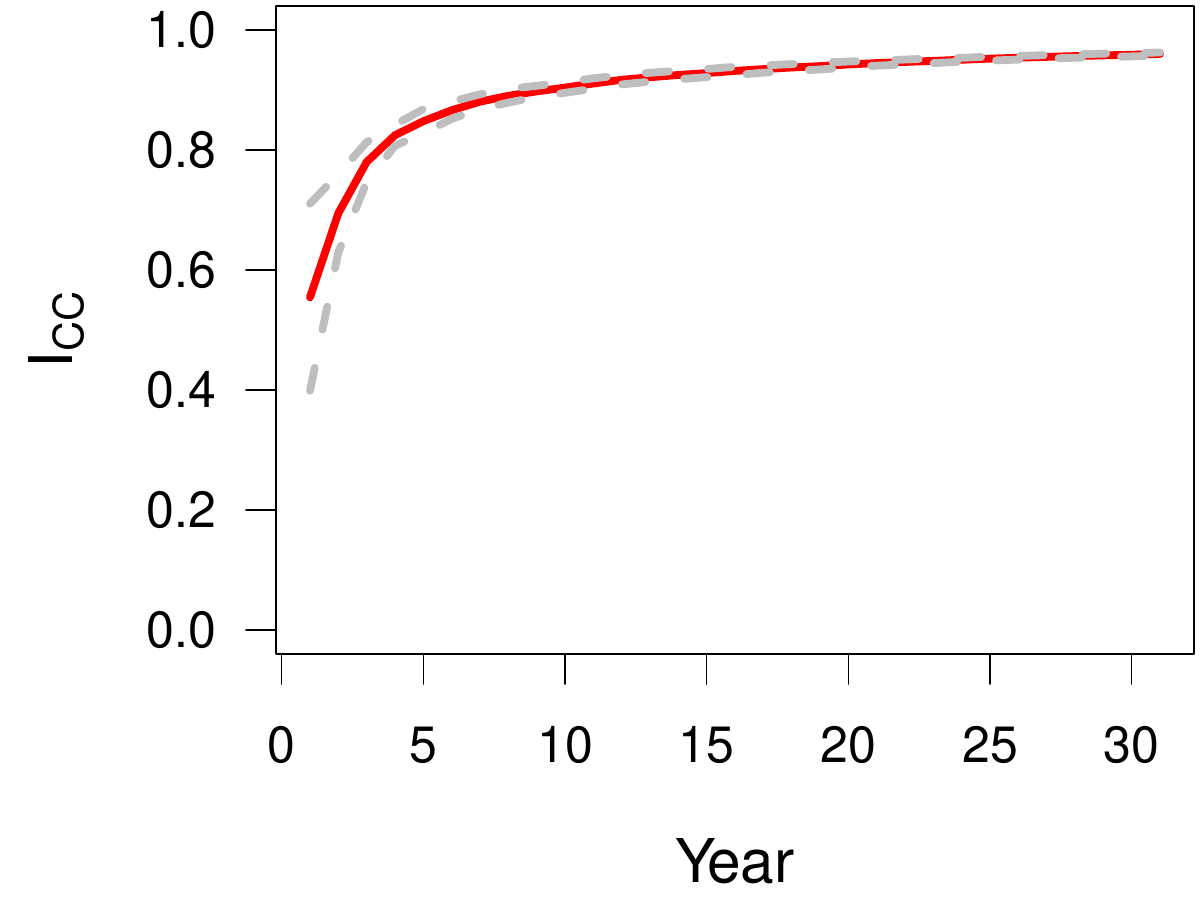}
     \end{tabular}
     \caption{Different indices of collaboration for $\lambda(t) = 0.5$ per month, and $F_n(k) = \frac{n}{180} \wedge 1$ for all $k \ge 0$.}
     \label{fig:simulation_3}
\end{figure}


\begin{figure}[!htbp]
     \centering
     \begin{tabular}{ccc}
     \includegraphics[width = 0.3\textwidth]{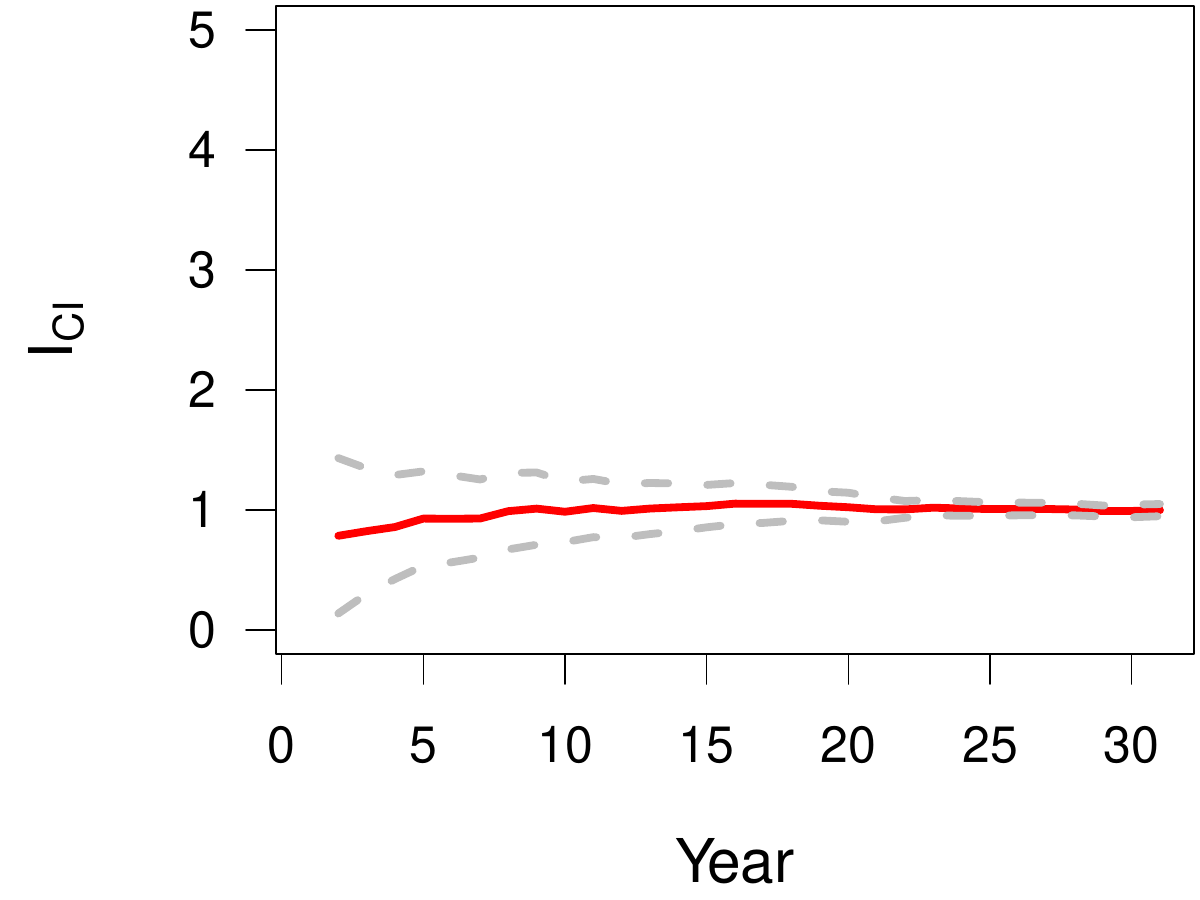} & \includegraphics[width = 0.3\textwidth]{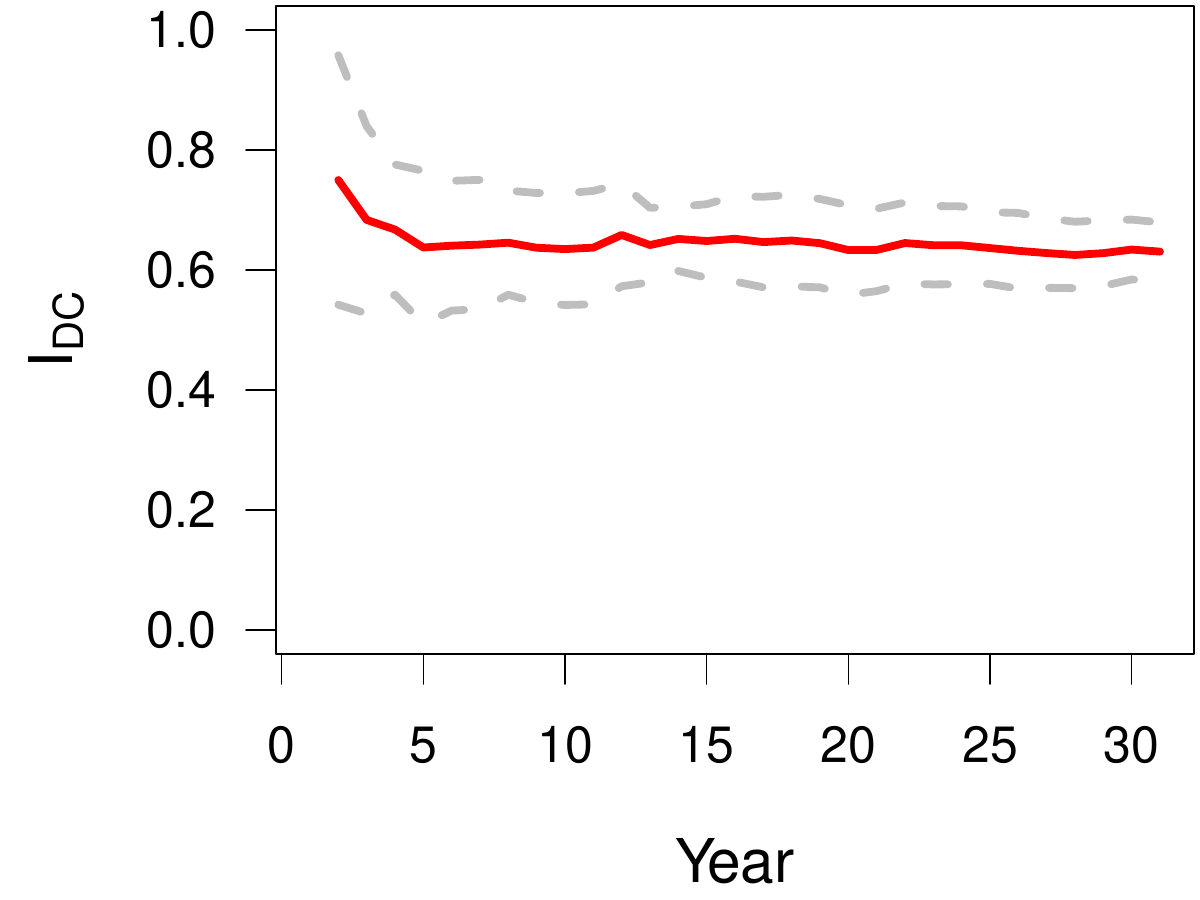} & \includegraphics[width = 0.3\textwidth]{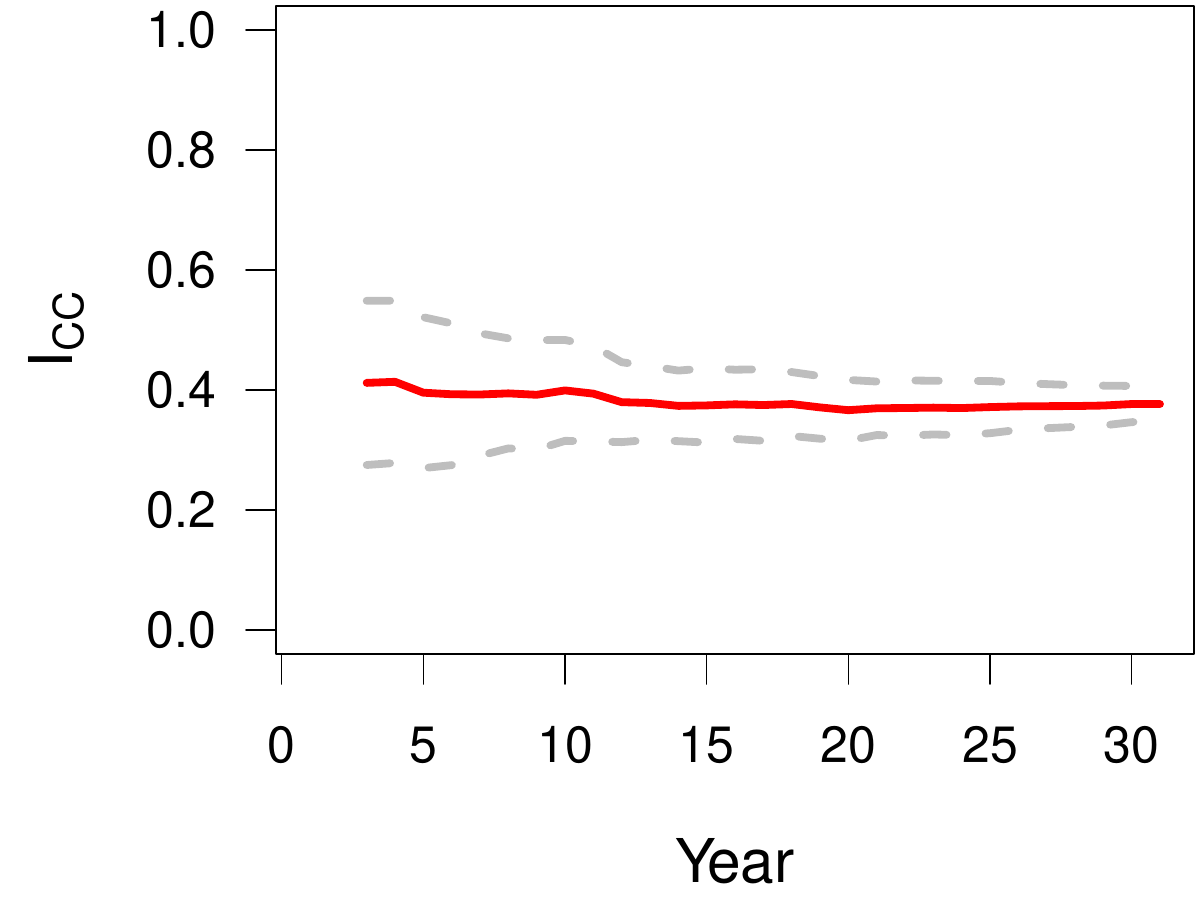} 
     \end{tabular}
     \caption{Different indices of collaboration for $\lambda(t) = \frac{1}{6}\bone_{[0, 100)}(t) + \frac{1}{3}\bone_{[100, 200)}(t) + \frac{1}{2}\bone_{[200,\infty)}(t)$ per month, and $F_n(k) = 0.01$ for all $n \ge 1, k \ge 0$.}
     \label{fig:suppl_simulation_1}
\end{figure}

\begin{figure}[!htbp]
     \centering
     \begin{tabular}{ccc}
     \includegraphics[width = 0.3\textwidth]{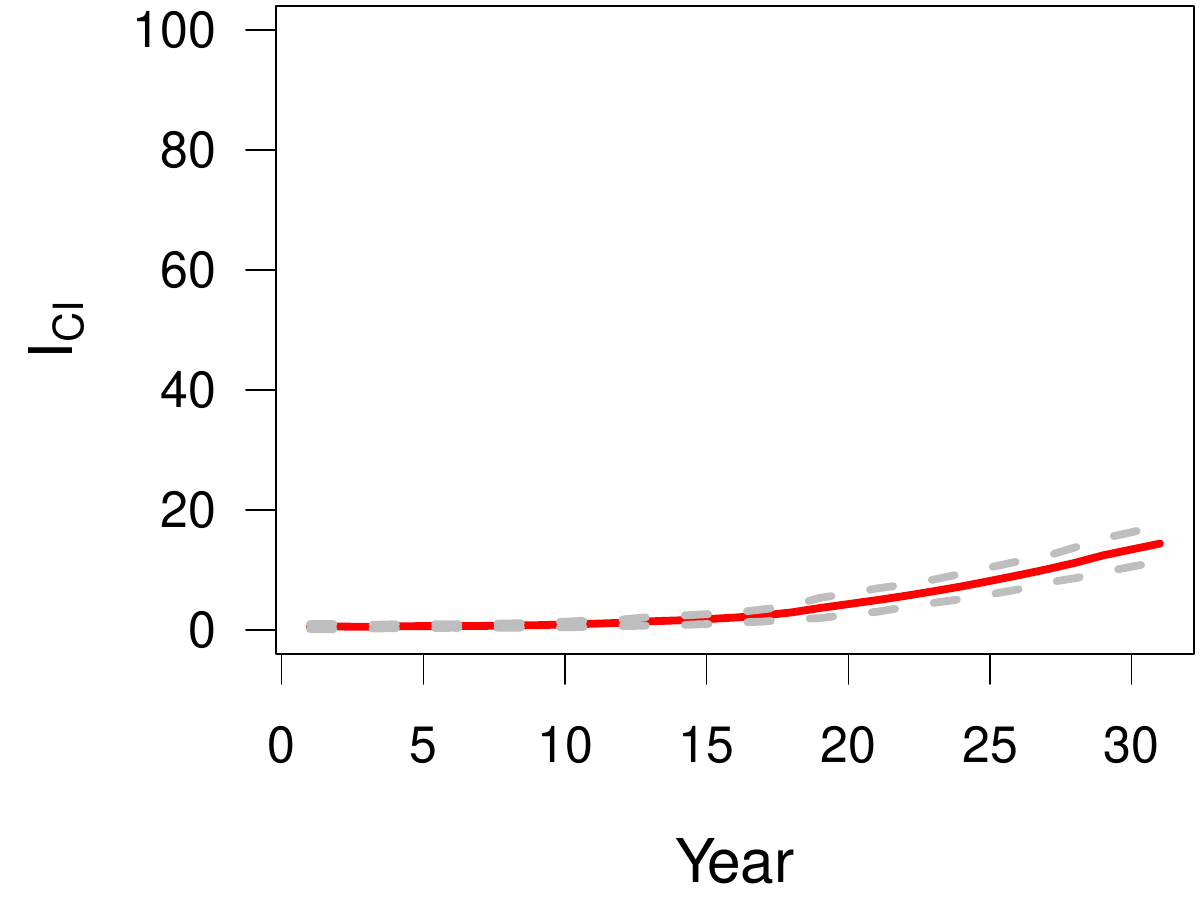} & \includegraphics[width = 0.3\textwidth]{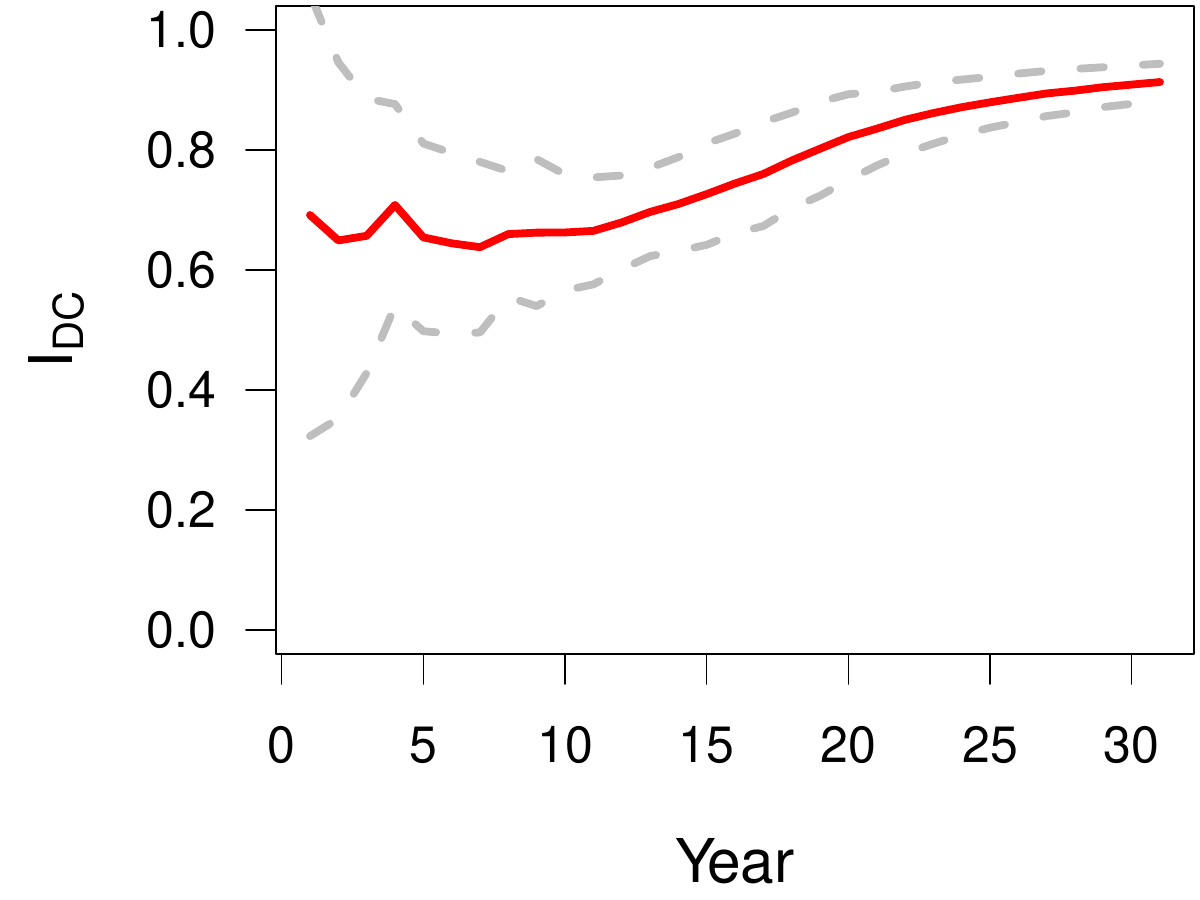} & \includegraphics[width = 0.3\textwidth]{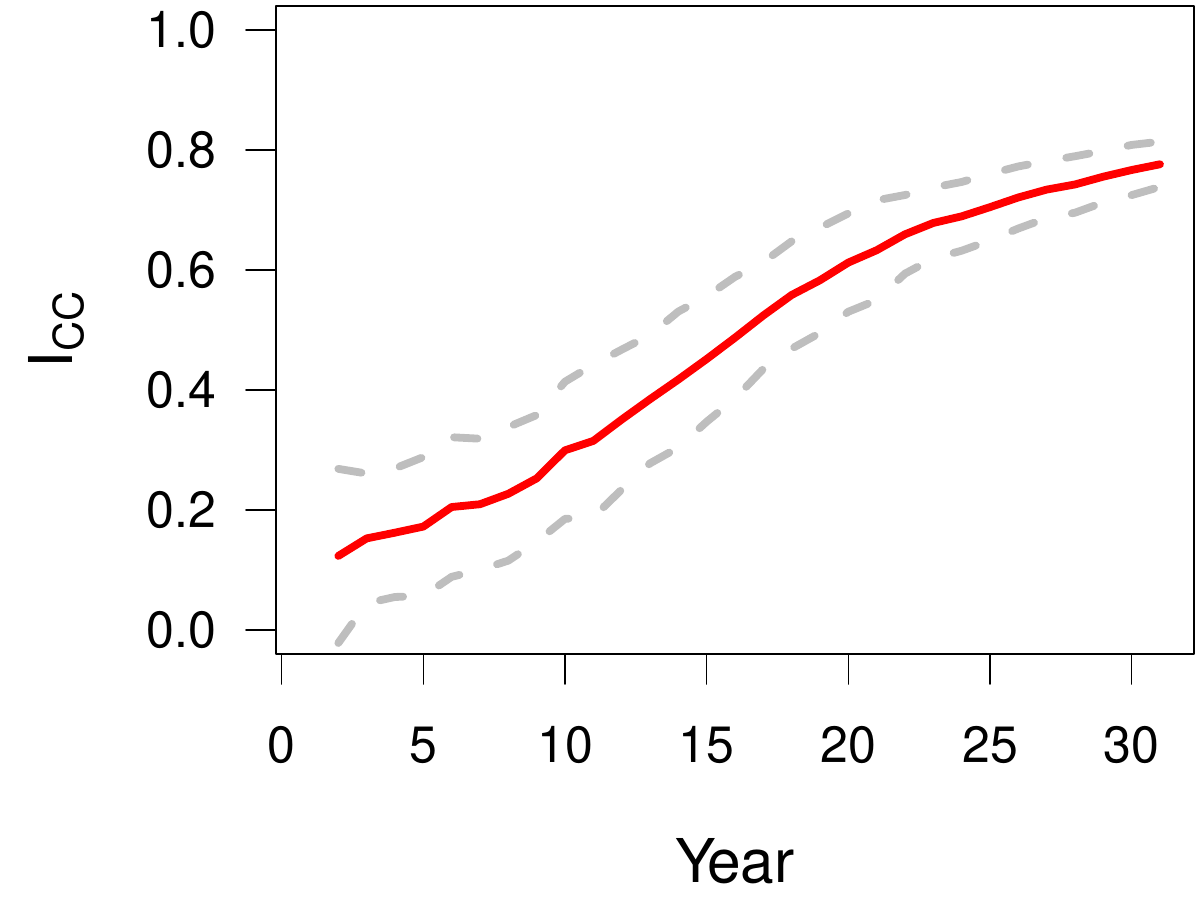}
     \end{tabular}
     \caption{Different indices of collaboration for $\lambda(t) = \frac{1}{6}\bone_{[0, 100)}(t) + \frac{1}{3}\bone_{[100, 200)}(t) + \frac{1}{2}\bone_{[200, \infty)}(t)$ per month, and $F_n(k) = (0.05k + 0.005) \wedge 1$ for all $n \ge 1$.}
     \label{fig:suppl_simulation_2}
\end{figure}

\begin{figure}[!htbp]
     \centering
     \begin{tabular}{ccc}
     \includegraphics[width = 0.3\textwidth]{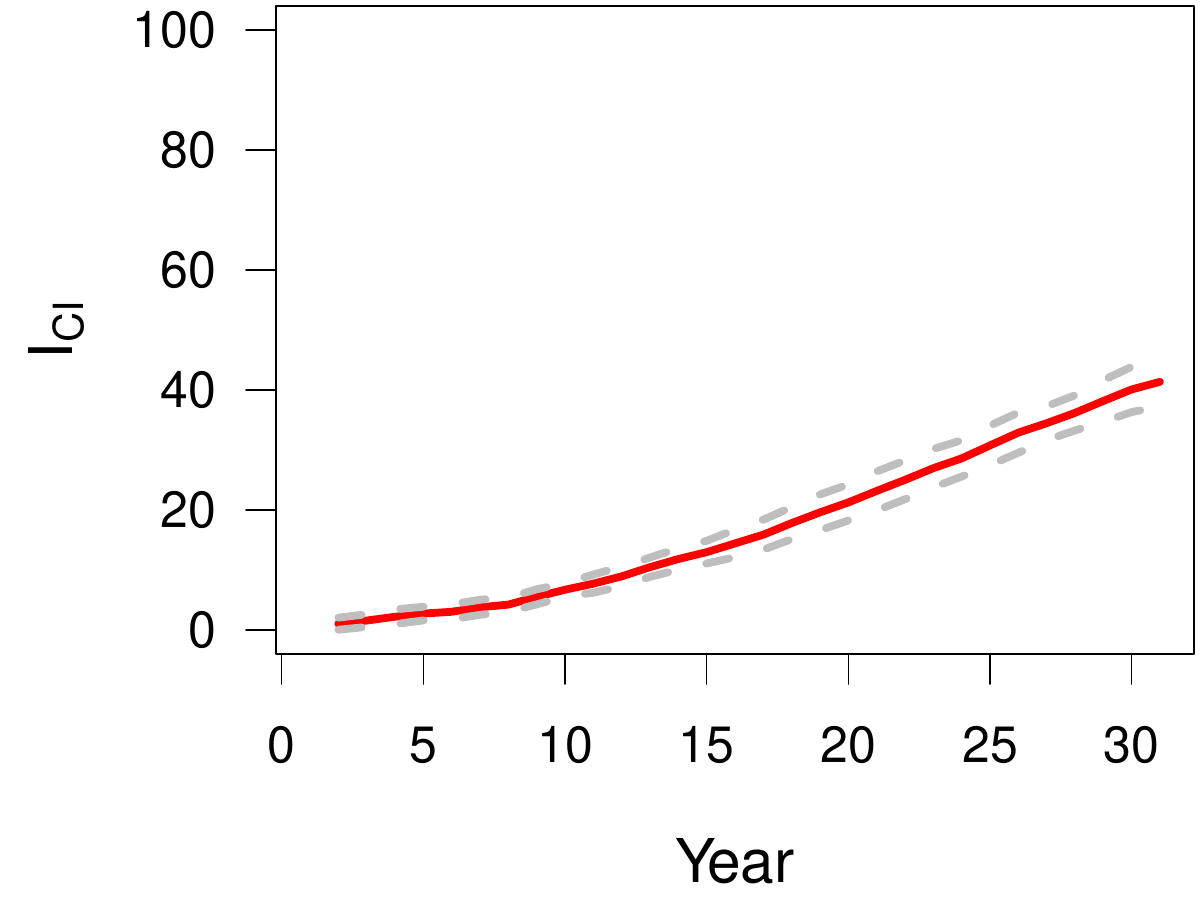} & \includegraphics[width = 0.3\textwidth]{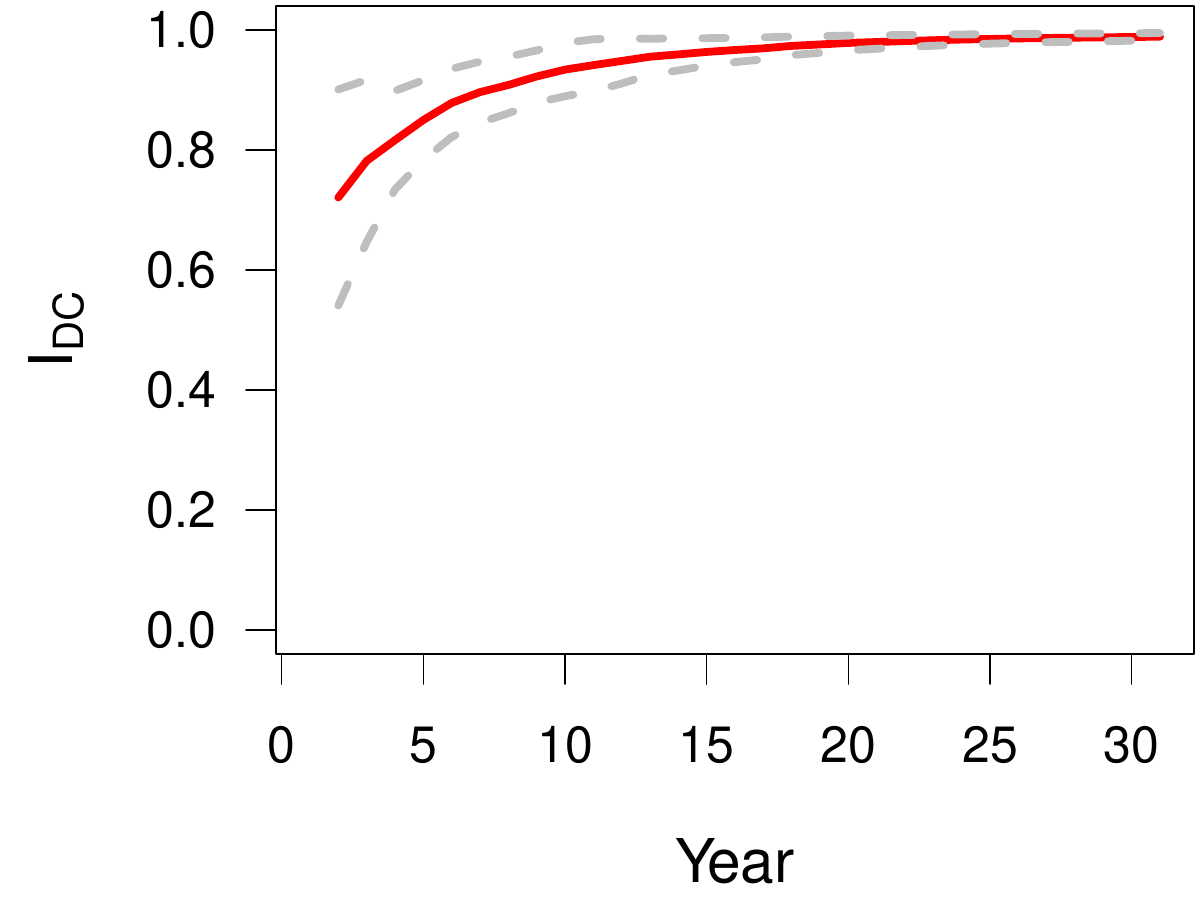} & \includegraphics[width = 0.3\textwidth]{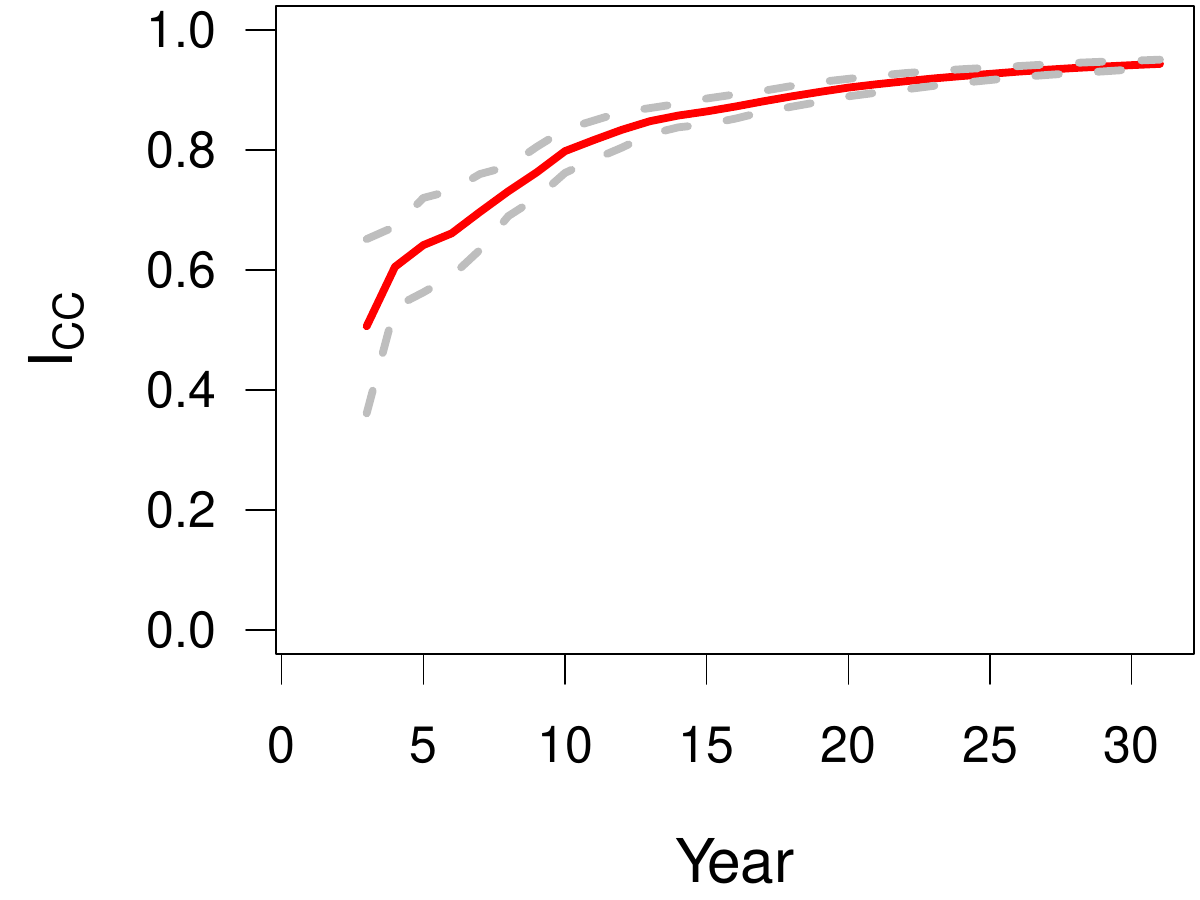}
     \end{tabular}
     \caption{Different indices of collaboration for $\lambda(t) = \frac{1}{6}\bone_{[0, 100)}(t) + \frac{1}{3}\bone_{[100, 200)}(t) + \frac{1}{2}\bone_{[200, \infty)}(t)$ per month, and $F_n(k) = \frac{n}{180} \wedge 1$ for all $k \ge 0$.}
     \label{fig:suppl_simulation_3}
\end{figure}

\begin{figure}[!htbp]
     \centering
     \begin{tabular}{ccc}
     \includegraphics[width = 0.3\textwidth]{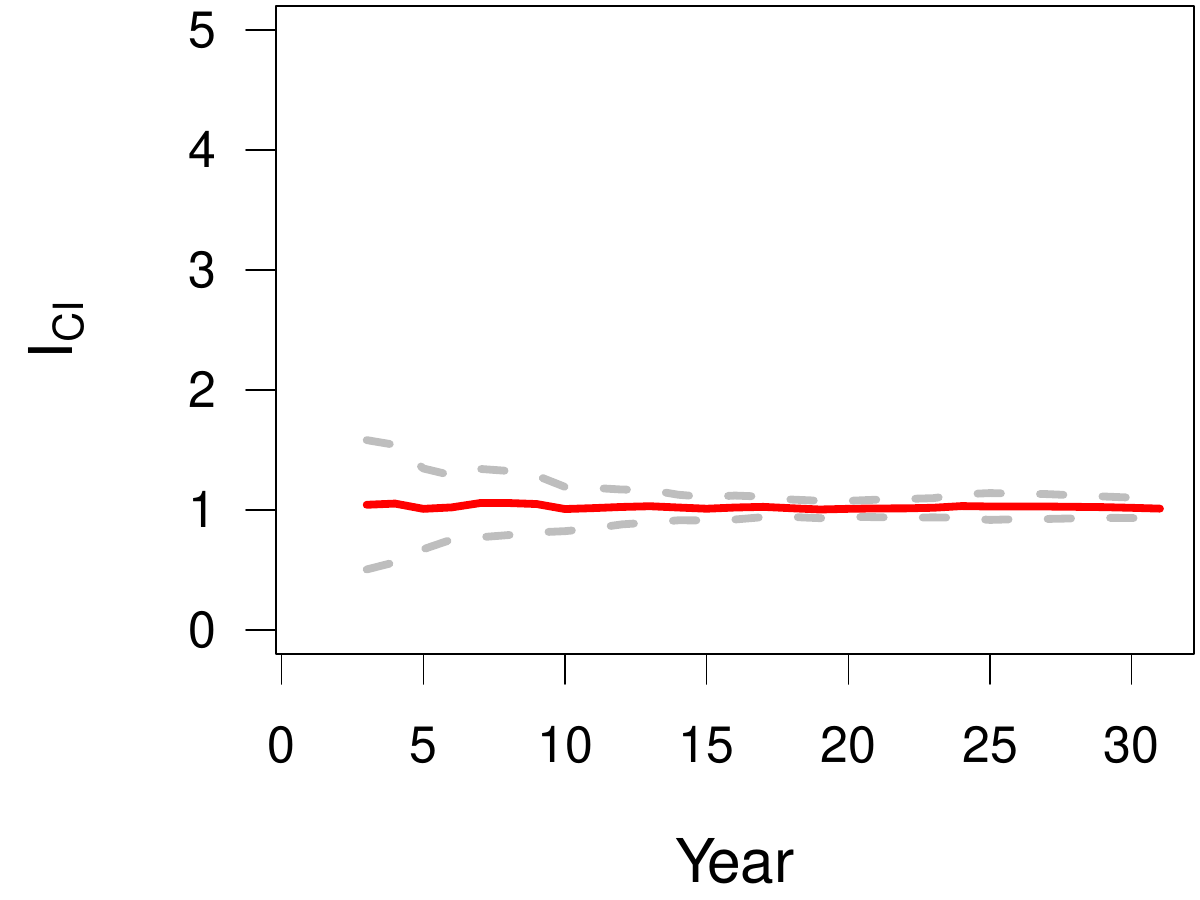} & \includegraphics[width = 0.3\textwidth]{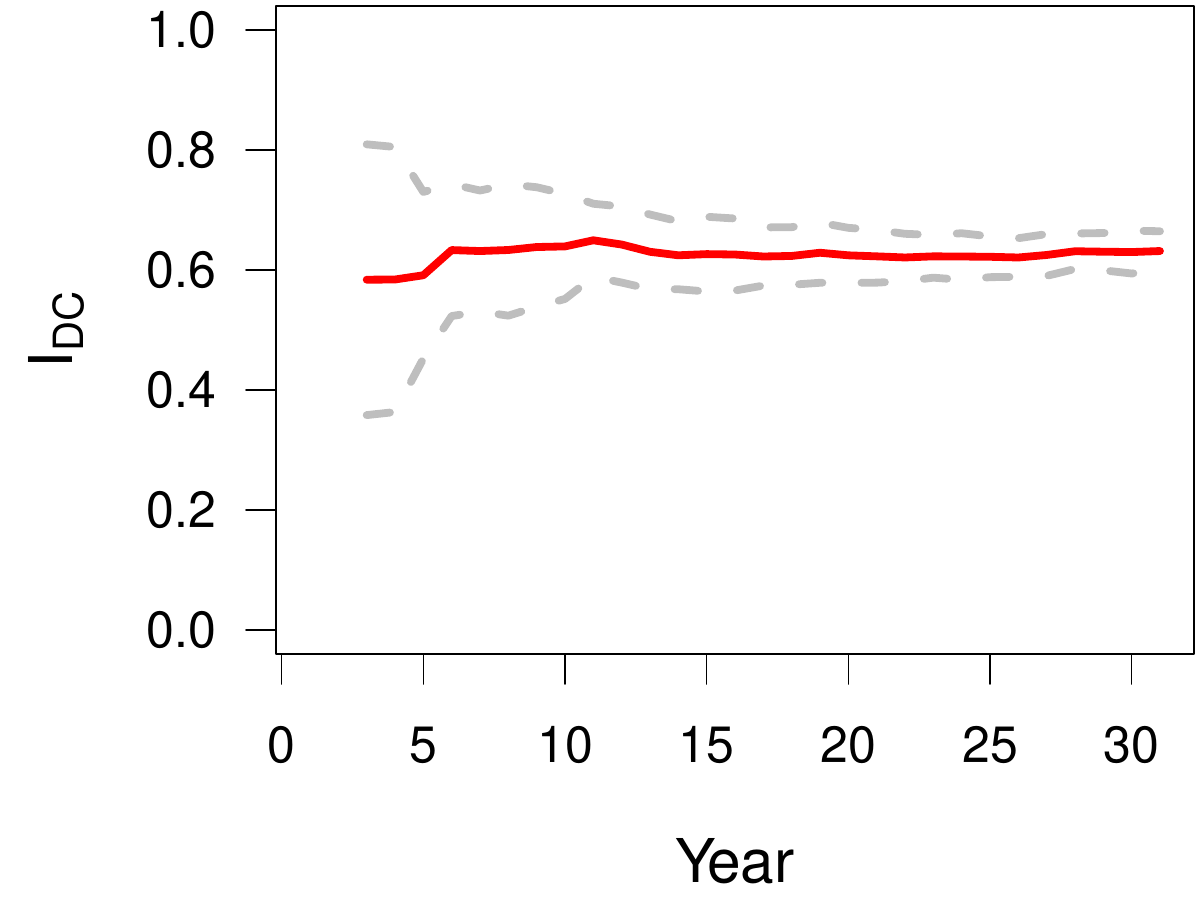} & \includegraphics[width = 0.3\textwidth]{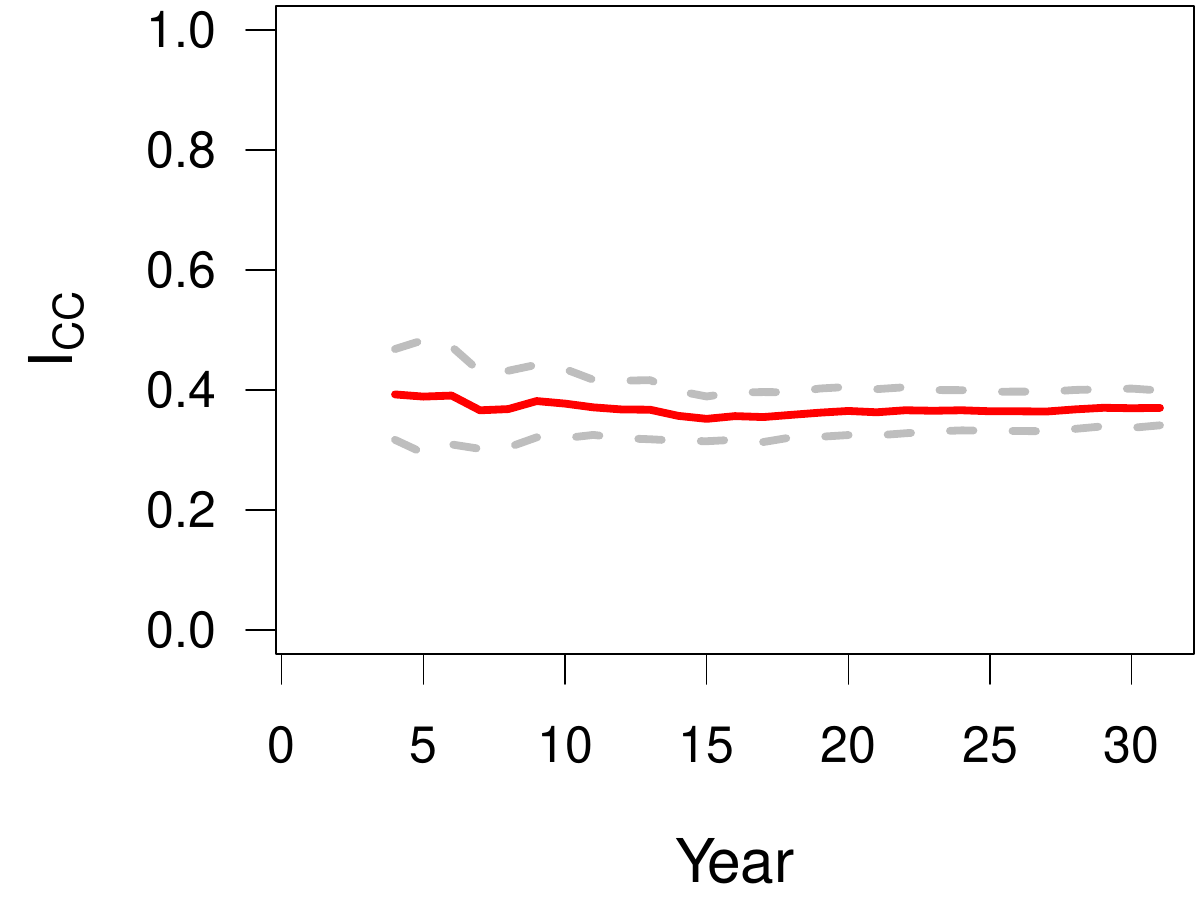} 
     \end{tabular}
     \caption{Different indices of collaboration for $\lambda(t) = \frac{t}{200}\bone_{[0, 100)}(t) + \frac{t}{400}\bone_{[100, 200)}(t) + \big(\frac{t}{720} \wedge 1\big)\bone_{[200, \infty)}(t)$ per month, and $F_n(k) = 0.01$ for all $n \ge 1, k \ge 0$.}
     \label{fig:suppl_simulation_4}
\end{figure}

\begin{figure}[!htbp]
     \centering
     \begin{tabular}{ccc}
     \includegraphics[width = 0.3\textwidth]{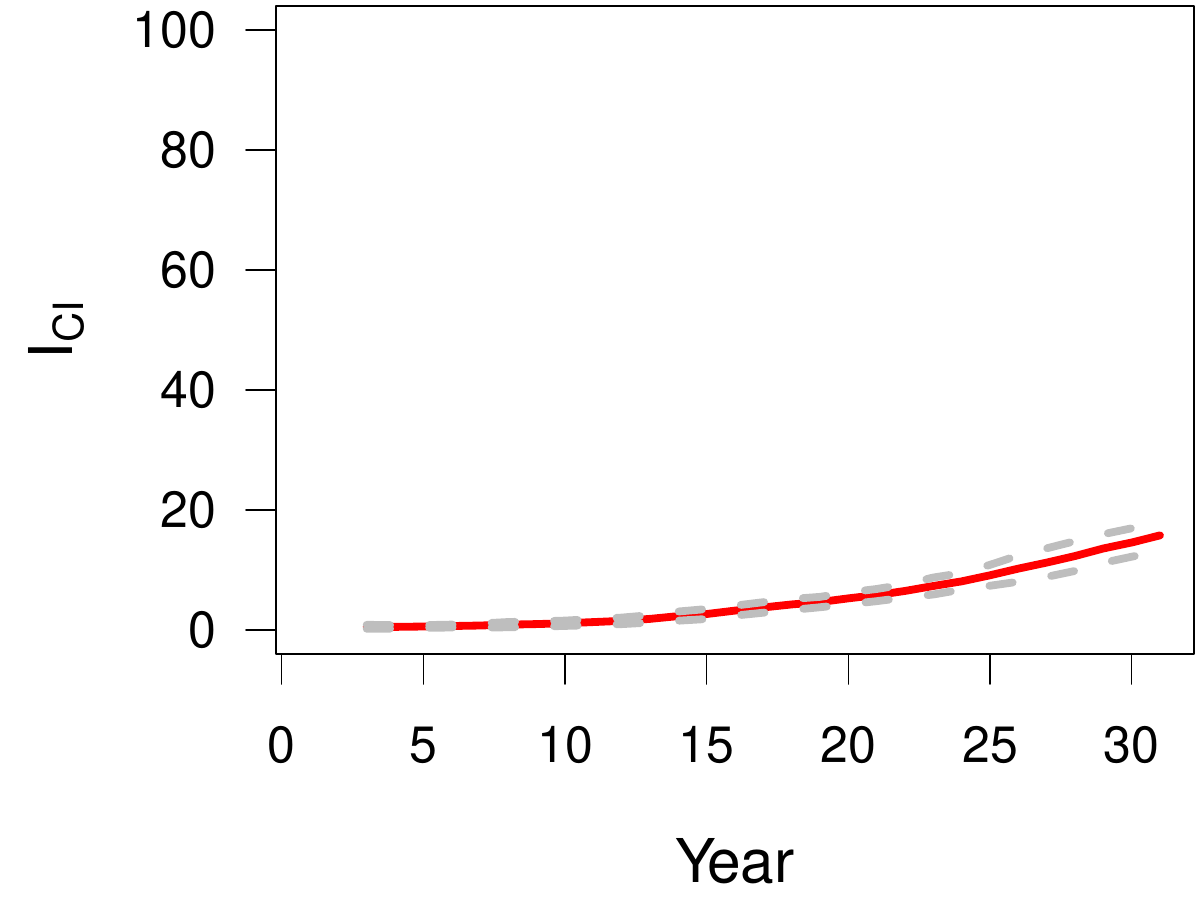} & \includegraphics[width = 0.3\textwidth]{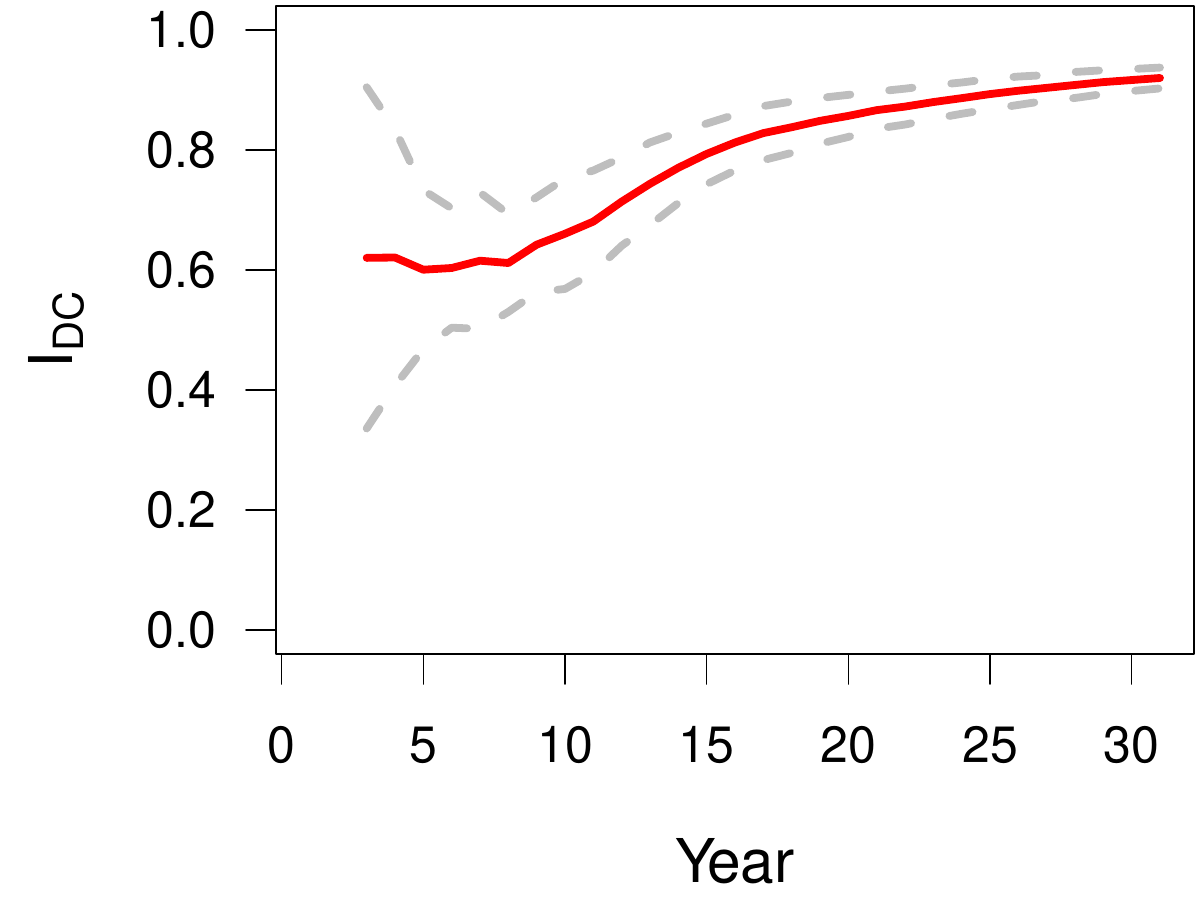} & \includegraphics[width = 0.3\textwidth]{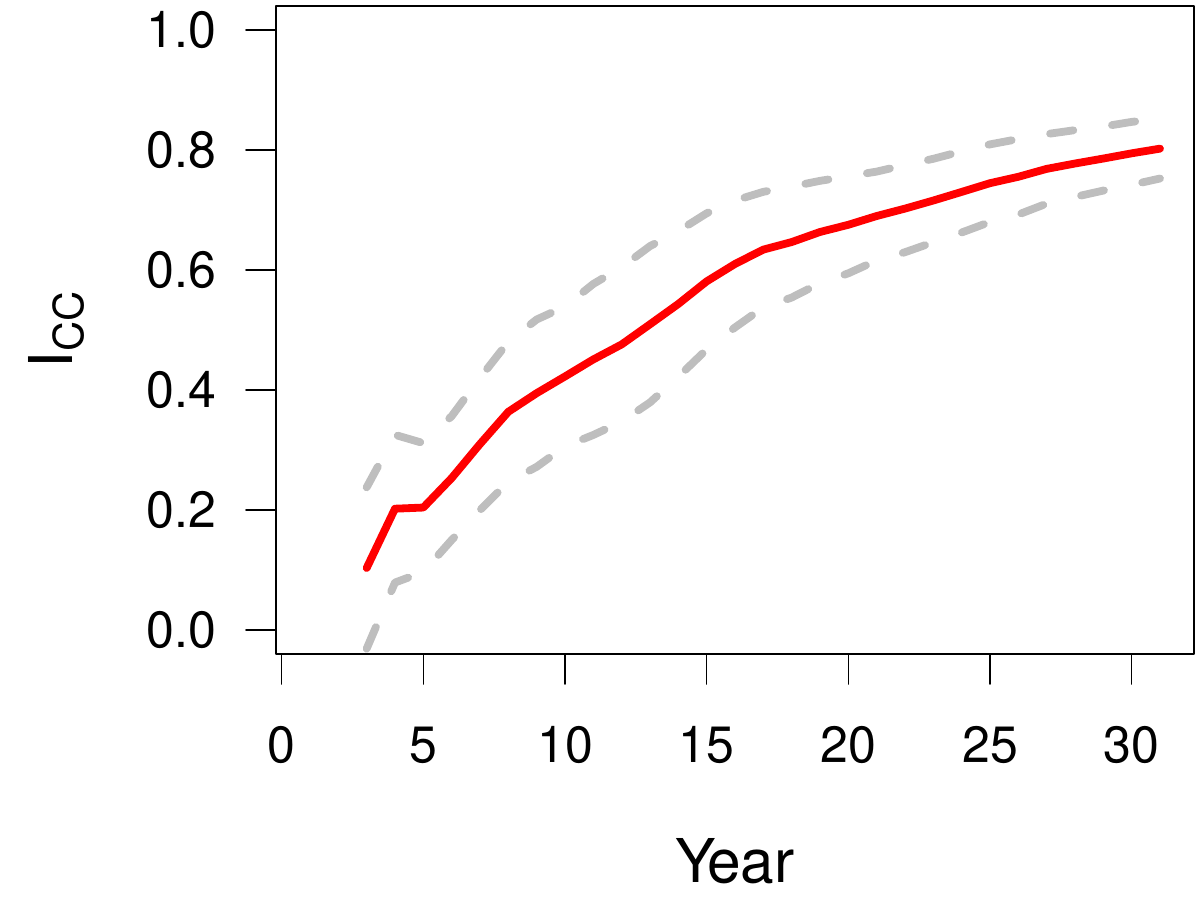}
     \end{tabular}
     \caption{Different indices of collaboration for $\lambda(t) = \frac{t}{200}\bone_{[0, 100)}(t) + \frac{t}{400}\bone_{[100, 200)}(t) + \big(\frac{t}{720} \wedge 1\big)\bone_{[200, \infty)}(t)$ per month, and $F_n(k) = (0.05k + 0.005) \wedge 1$ for all $n \ge 1$.}
     \label{fig:suppl_simulation_5}
\end{figure}

\begin{figure}[!htbp]
     \centering
     \begin{tabular}{ccc}
     \includegraphics[width = 0.3\textwidth]{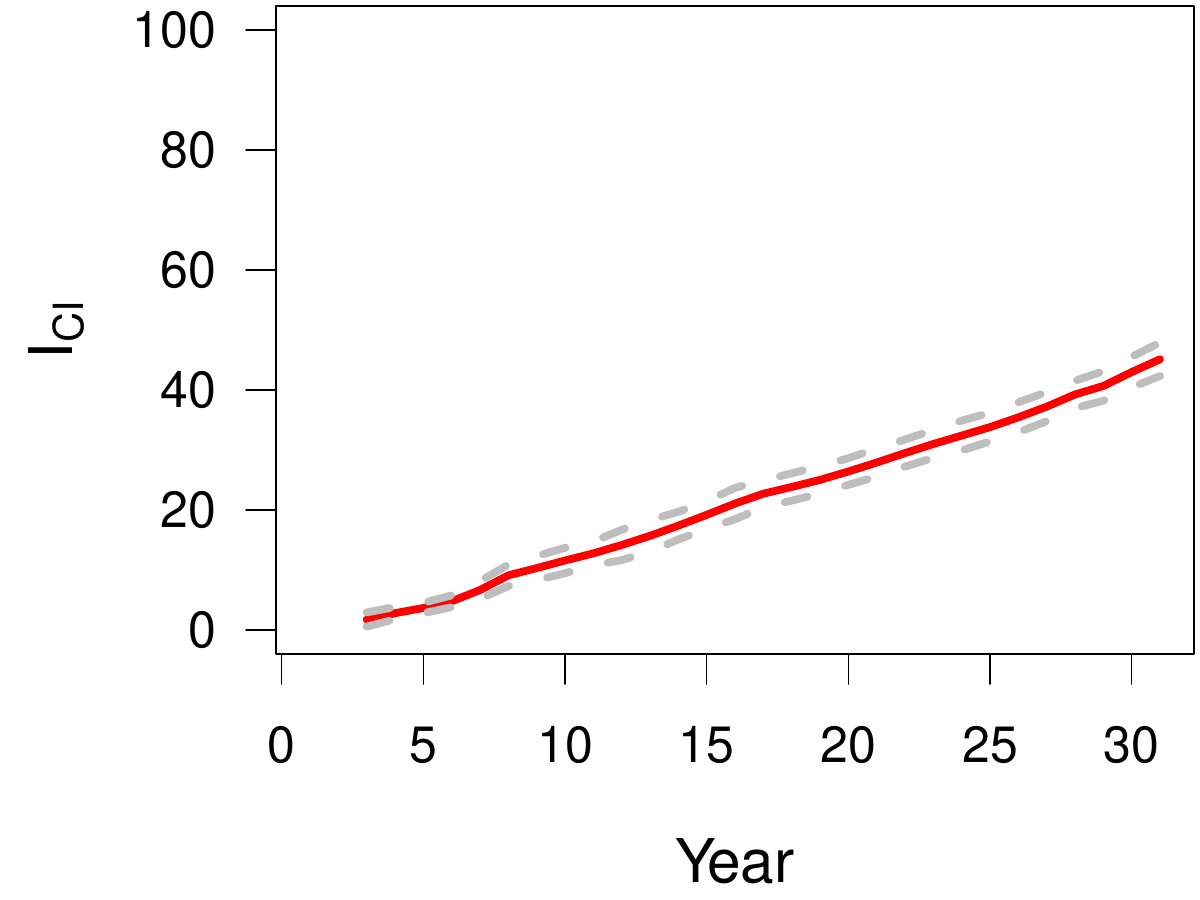} & \includegraphics[width = 0.3\textwidth]{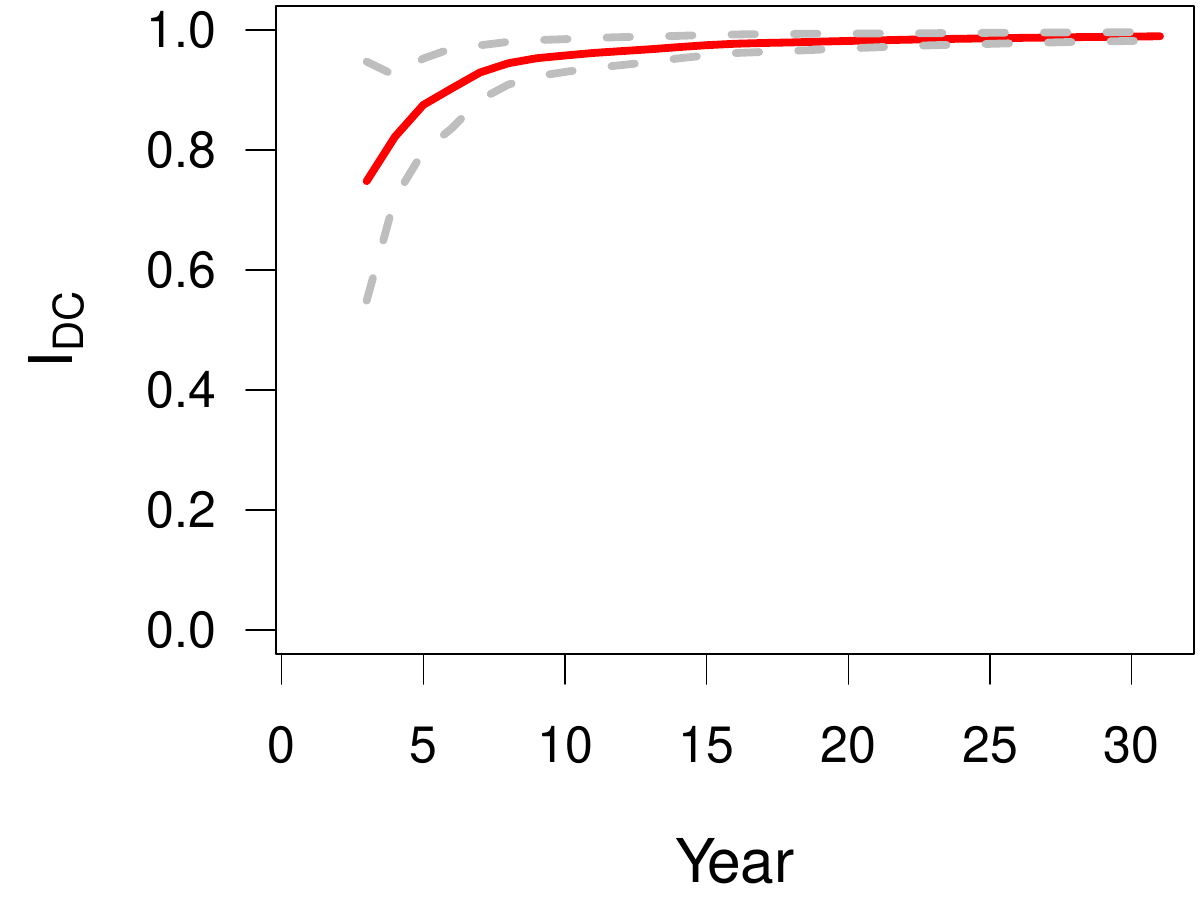} & \includegraphics[width = 0.3\textwidth]{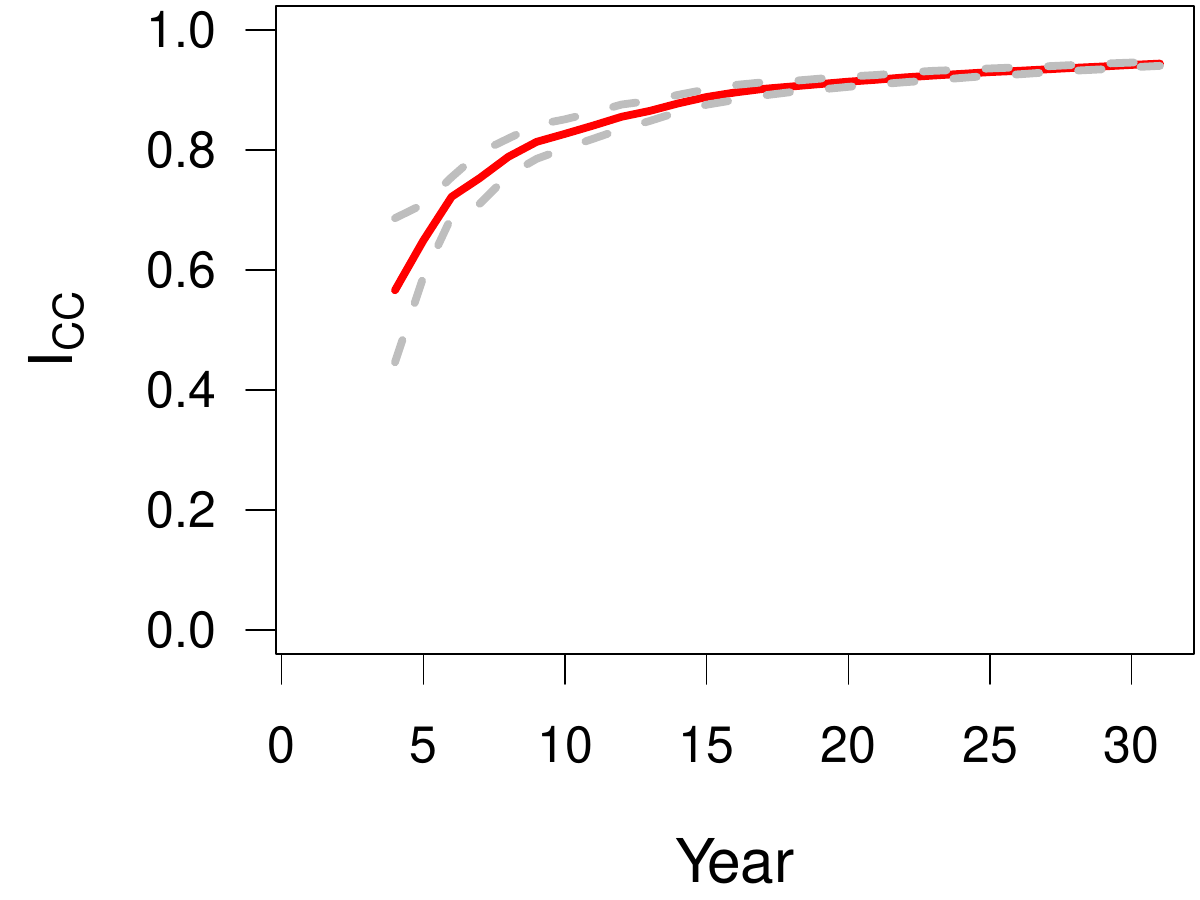}
     \end{tabular}
     \caption{Different indices of collaboration for $\lambda(t) = \frac{t}{200}\bone_{[0, 100)}(t) + \frac{t}{400}\bone_{[100, 200)}(t) + \big(\frac{t}{720} \wedge \frac{1}{2}\big)\bone_{[200, \infty)}(t)$ per month, and $F_n(k) = \frac{n}{180} \wedge 1$ for all $k \ge 0$.}
     \label{fig:suppl_simulation_6}
\end{figure}


\section{Collaboration dynamics in Computer Science, Mathematics and Physics}\label{sec:real-data}
We analyse the collaboration dynamics in several disciplines using data from the arXiv repository \cite{clement2019arxiv}. For each paper submitted to arXiv during the period 1991-2022, the dataset has three variables, namely, (i) \textit{id}: unique identifier of the paper which also contains the time (month and year) of paper submission; (ii) \textit{categories}: the discipline and/or the sub-discipline of the paper; (iii) \textit{authors}: name of the authors of the paper. Therefore, the dataset has all the necessary information for computing the various indices of collaboration as a function of time.

\textbf{Indices of collaboration.} Since we are assuming a mean-field model, we consider all authors to be equivalent to any one individual author. The various indices of collaboration are computed by considering all papers written during a time-interval into account (c.f. the discussion in Section~\ref{sec:indices}). 

In Figure~\ref{fig:CS_indices_all_authors}, we plot the three indices $I_\CI$, $I_\DC$ and $I_\CC$ per year. Further, we also consider the mean-field collaboration dynamics of the top $100$ most productive Computer Science (CS) authors (in terms of number of papers written) in order to understand the collaboration behaviour of extremely productive authors. In Figure~\ref{fig:CS_indices_top_100_authors}, yearly indices computed based on the top 100 most productive authors are plotted. Notice that for a typical top author in CS, the indices indicate higher degree of collaboration than a typical author of the entire discipline. In fact, the $I_{\CC}$ index for them has reached almost 1 by 2020, i.e. they have mostly abandoned writing single-author papers!

We also construct similar plots for Mathematics and Physics --- the various indices of collaboration are computed by taking either all papers written during a time-interval into account (Figures~\ref{fig:math_indices_all_authors} and \ref{fig:phy_indices_all_authors}), or only the papers written by the top 100 most productive authors (Figures~\ref{fig:math_indices_top_100_authors} and \ref{fig:phy_indices_top_100_authors}). As in the case of CS, we observe that for a typical top author, the indices indicate higher degree of collaboration than a typical author in both Mathematics and Physics. Also, the overall tendency of collaboration in Mathematics is much lower than that of CS and Physics.

\textbf{No of co-authors per paper.} In Figures~\ref{fig:csWRITE}-(a), \ref{fig:mathWRITE}-(a) and \ref{fig:phyWRITE}-(a), we plot the average number of co-authors in the $k$-th paper versus $k$, computed based on the top $100$ most productive authors in CS, Mathematics and Physics, respectively. The average number of co-authors in the $k$-th paper seems to increase as $k$ increases across all three disciplines. However, the behaviour of the rate of increment is markedly different --- it gradually decreases in the case of CS, remains mostly a constant in the case of Mathematics and exhibits a rather interesting piecewise constant pattern in the case of Physics. The behaviour seen in the case of CS can qualitatively be explained via Lemma~\ref{lem:linear_F} (see, e.g., Figure~\ref{fig:rec-sol}-(a)). We also notice that for Physics, as $k$ increases, the rate of increase (as a function of $k$) in the number of co-authors in the $k$-th paper is much higher than the same for CS or Mathematics (likely due to the effect of very large collaborative projects in Physics).

\textbf{Temporal behaviour of correlations.} We choose $1000$ authors randomly from the set of all CS authors and plot in Figure~\ref{fig:csWRITE}-(b) the empirical correlation between $X_1[t, t+\delta]$ and  $X_k[t, t+\delta]$ based on the observed values of $(X_1[t, t+\delta], X_k[t, t+\delta])$ for these authors for $k = 2, 3, 4, 5$ with $\delta = 1$ year. Similar plots for Mathematics and Physics appear in Figures~\ref{fig:mathWRITE}-(b) and \ref{fig:phyWRITE}-(b), respectively. These plots suggest the presence of non-trivial correlations.

\textbf{Estimation of co-authorship probability parameters.} Using the methodology described in Section~\ref{sec:est}, we estimate the co-authorship probability parameters $(F_n(k))_{n\ge1}$ for $k = 0, 1, 2, 3$. Without loss of generality, let $[L_1]$ be the (random) set of authors who write a paper with $a_0$ \emph{at some point of time} and let $[L_2] := [L]\setminus[L_1]$. Then we may write
\begin{align*}
    \hat{F}_n(k) = \frac{\sum_{i \in [L_1]} \I(m_{n-1,i} = k) \bone_{\C_{n}}(i)}{\sum_{i \in [L_1]} \I(m_{n-1,i} = k) + \sum_{i \in [L_2]} \I(m_{n-1,i} = k)}.
\end{align*}
Note that
\begin{align*}
    \sum_{i \in [L_2]} \I(m_{n-1,i} = k) = \begin{cases}
        L_2 & \text{ if } k = 0, \\
        0 & \text{ if } k \ge 1.
    \end{cases}
\end{align*}
While the terms $\sum_{i \in [L_1]} \I(m_{n-1,i} = k) \bone_{\C_{n}}(i)$ and $\sum_{i \in [L_1]} \I(m_{n-1,i} = k)$ are easy to estimate, the value of $L_2$ is computationally expensive to estimate for a large time-series of collaboration data. For simplicity, we estimate $L_2$ by $M - \sum_{i \in [L_1]} \I(m_{n-1,i} = k)$, where $M$ is the maximum (over all months) of the monthly counts of productive authors (i.e. authors who have written a paper in that month).

We plot the estimated co-authorship probability parameters $(F_n(k))_{n\ge1}$ for $k = 0, 1, 2, 3$  in Figures~\ref{fig:csWRITE}-(c), \ref{fig:mathWRITE}-(c) and \ref{fig:phyWRITE}-(c). Notice the qualitative similarity of these plots with Figure~\ref{fig:rec-sol}-(b), where we plot these parameters for an instance of the simple parametric sub-model \eqref{eq:param-submodel}.


\begin{figure}[!htbp]
     \centering
     \begin{tabular}{ccc}
     \includegraphics[width = 0.3\textwidth]{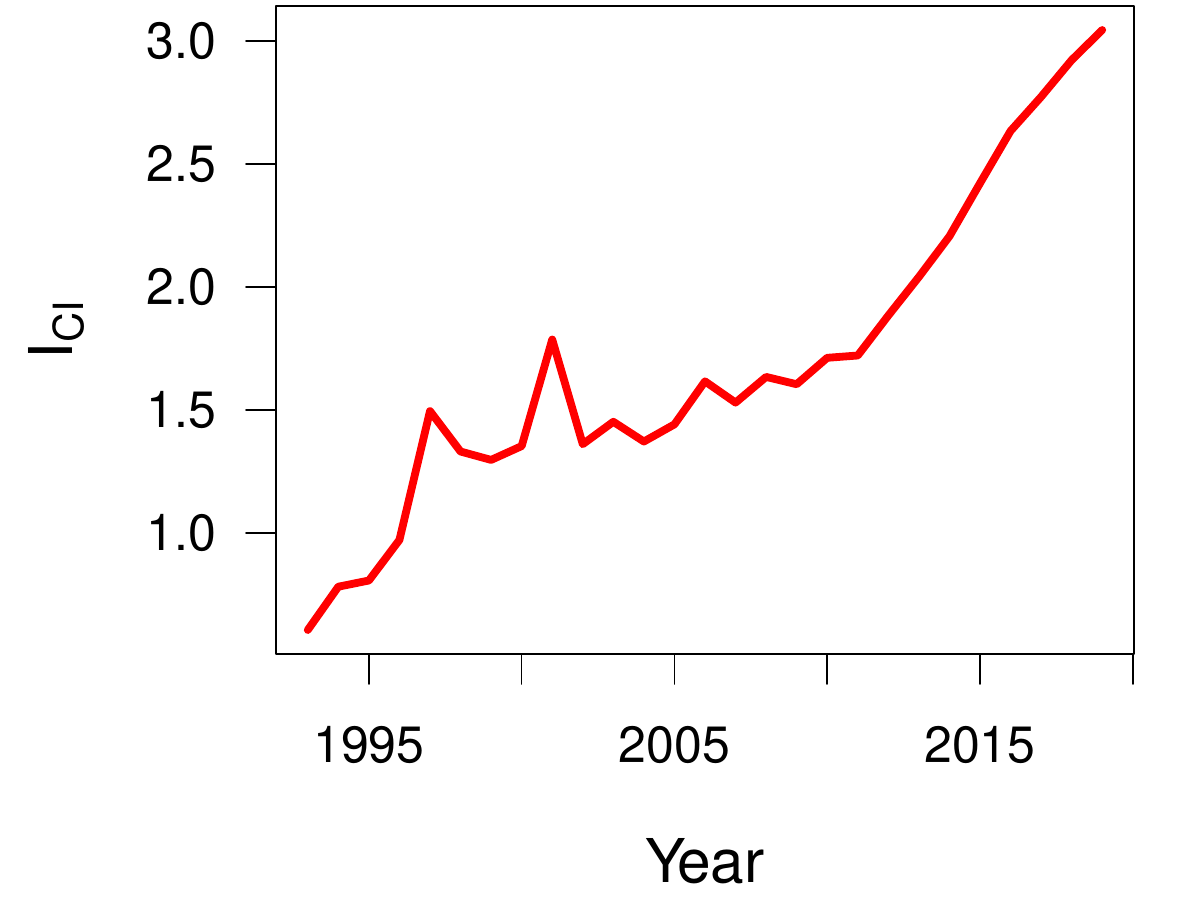} & \includegraphics[width = 0.3\textwidth]{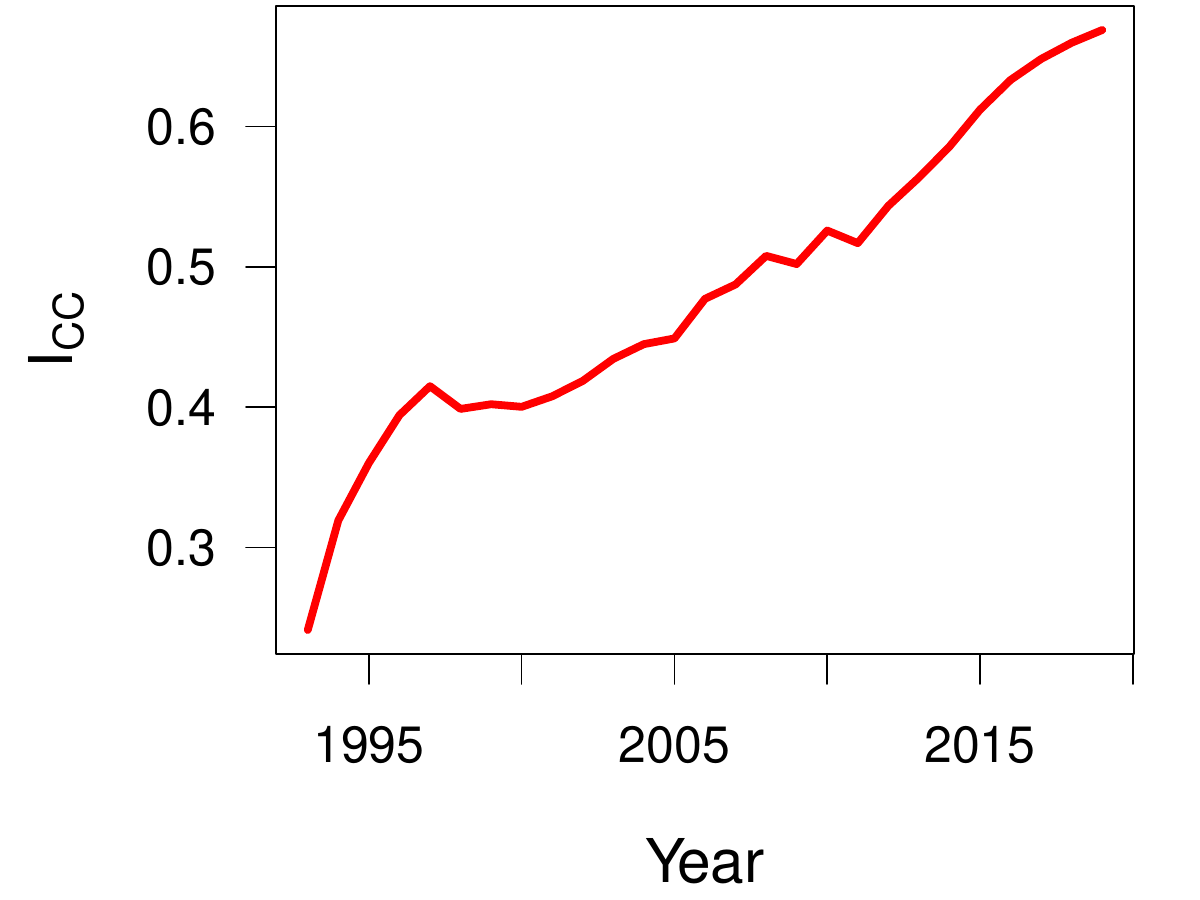} & \includegraphics[width = 0.3\textwidth]{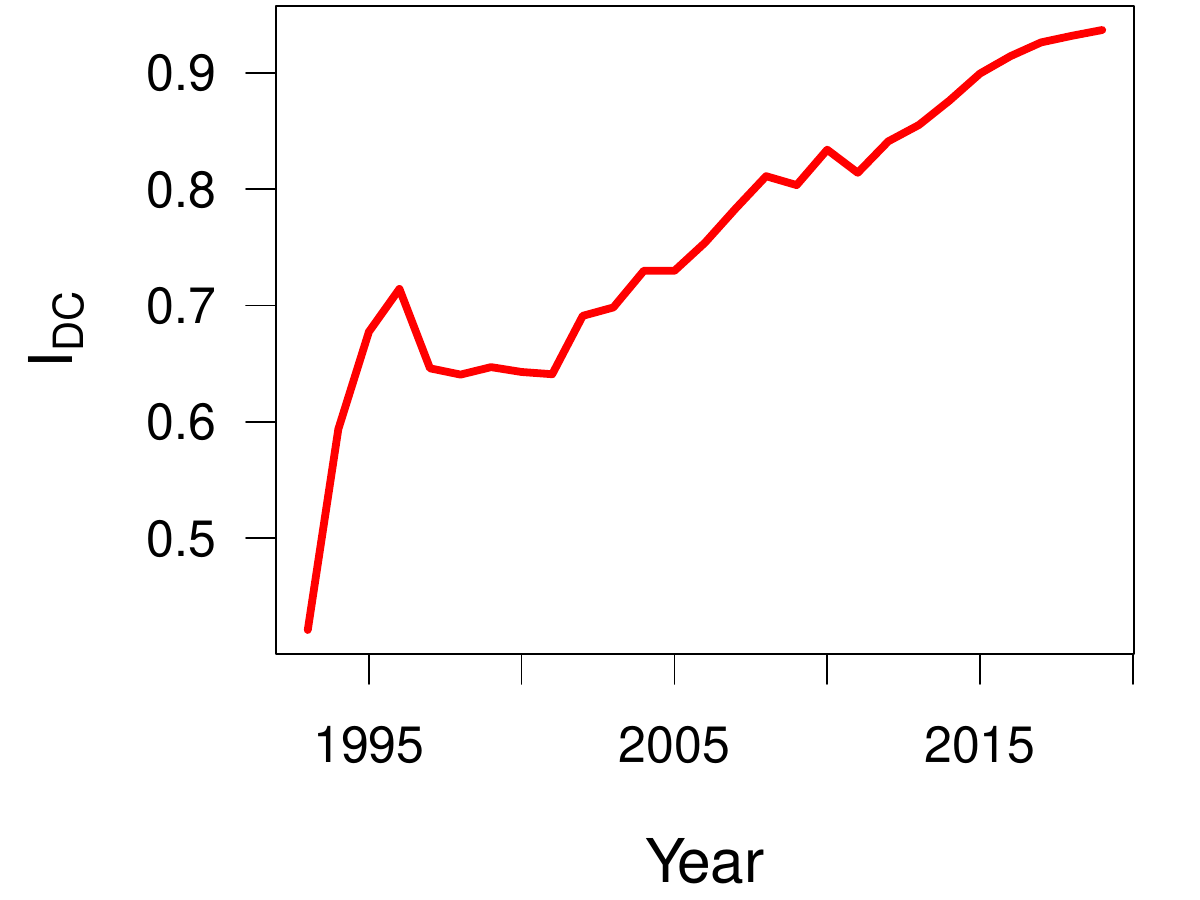}
     \end{tabular}
     \caption{ArXiv data for CS: Yearly plot of $I_\CI$, $I_\DC$, and $I_\CC$ based on all authors.}
     \label{fig:CS_indices_all_authors}
\end{figure}

\begin{figure}[!htbp]
     \centering
     \begin{tabular}{ccc}
     \includegraphics[width = 0.3\textwidth]{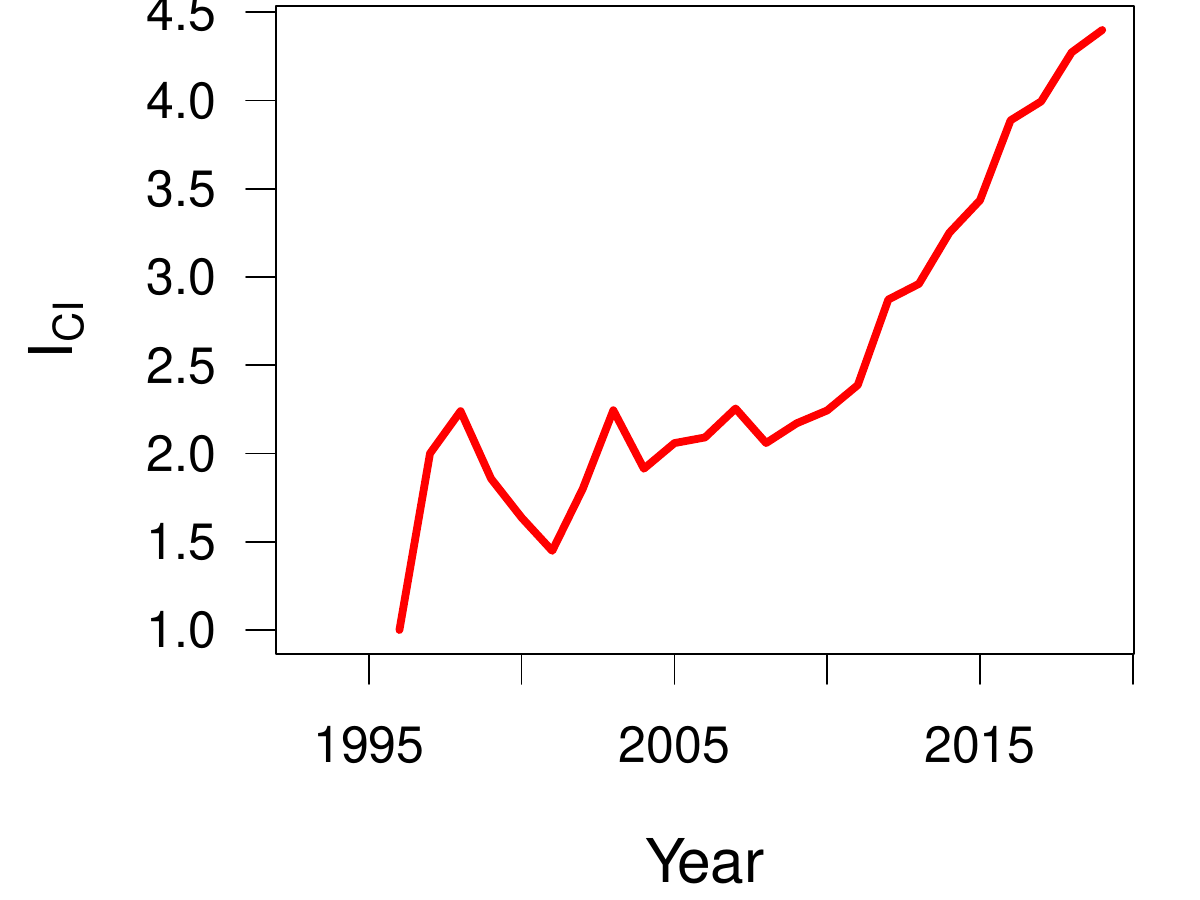} & \includegraphics[width = 0.3\textwidth]{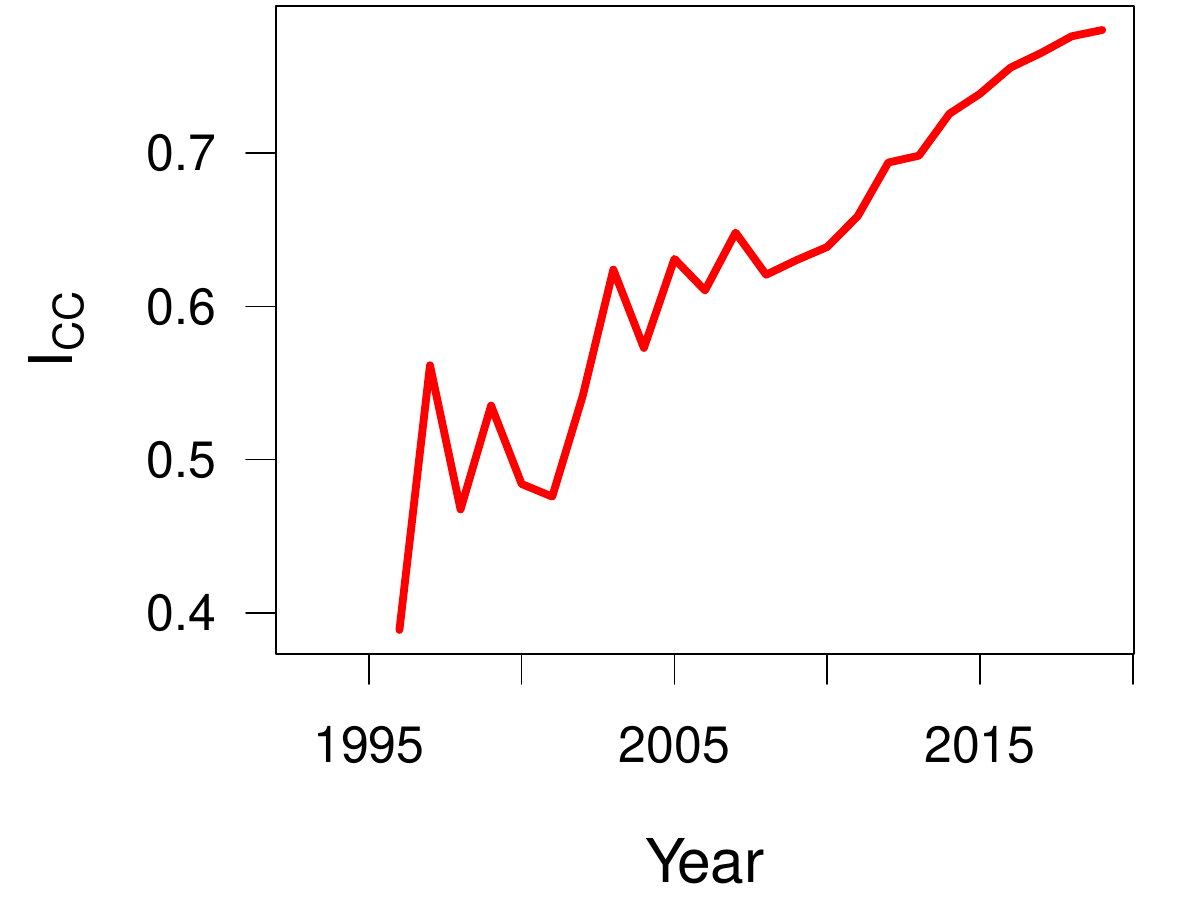} & \includegraphics[width = 0.3\textwidth]{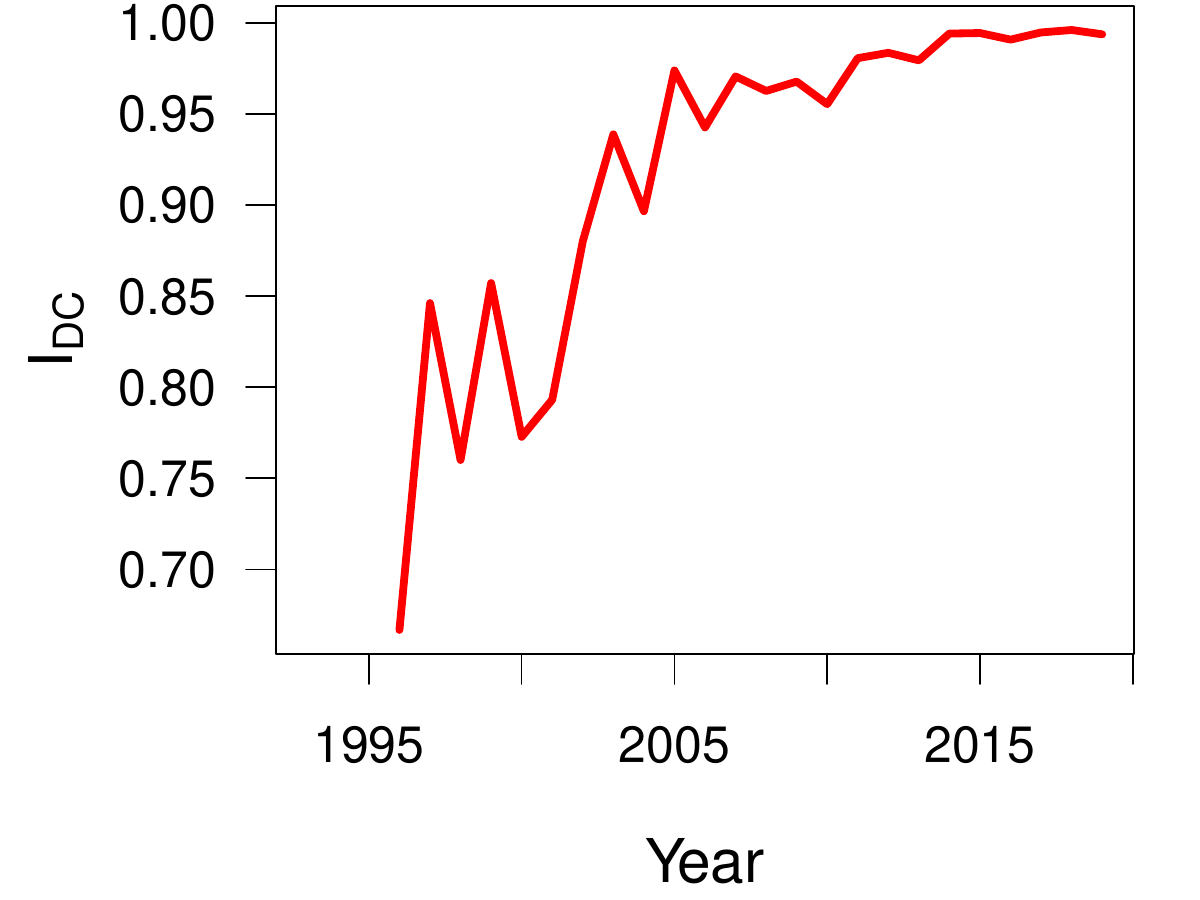}
     \end{tabular}
     \caption{ArXiv data for CS: Yearly plot of $I_\CI$, $I_\DC$, and $I_\CC$ based on the top $100$ authors.}
     \label{fig:CS_indices_top_100_authors}
\end{figure}

\begin{figure}[!htbp]
     \centering
     \begin{tabular}{ccc}
     \includegraphics[width = 0.3\textwidth]{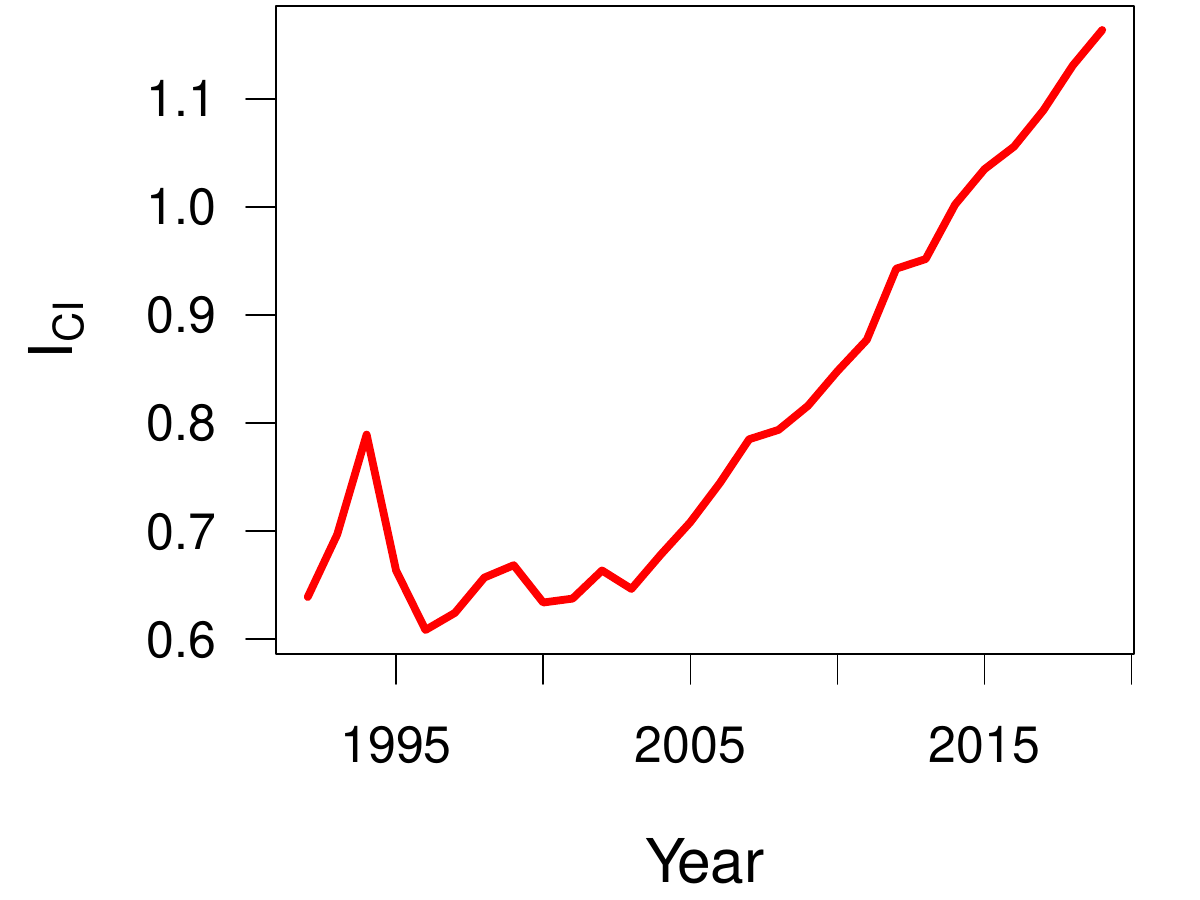} & \includegraphics[width = 0.3\textwidth]{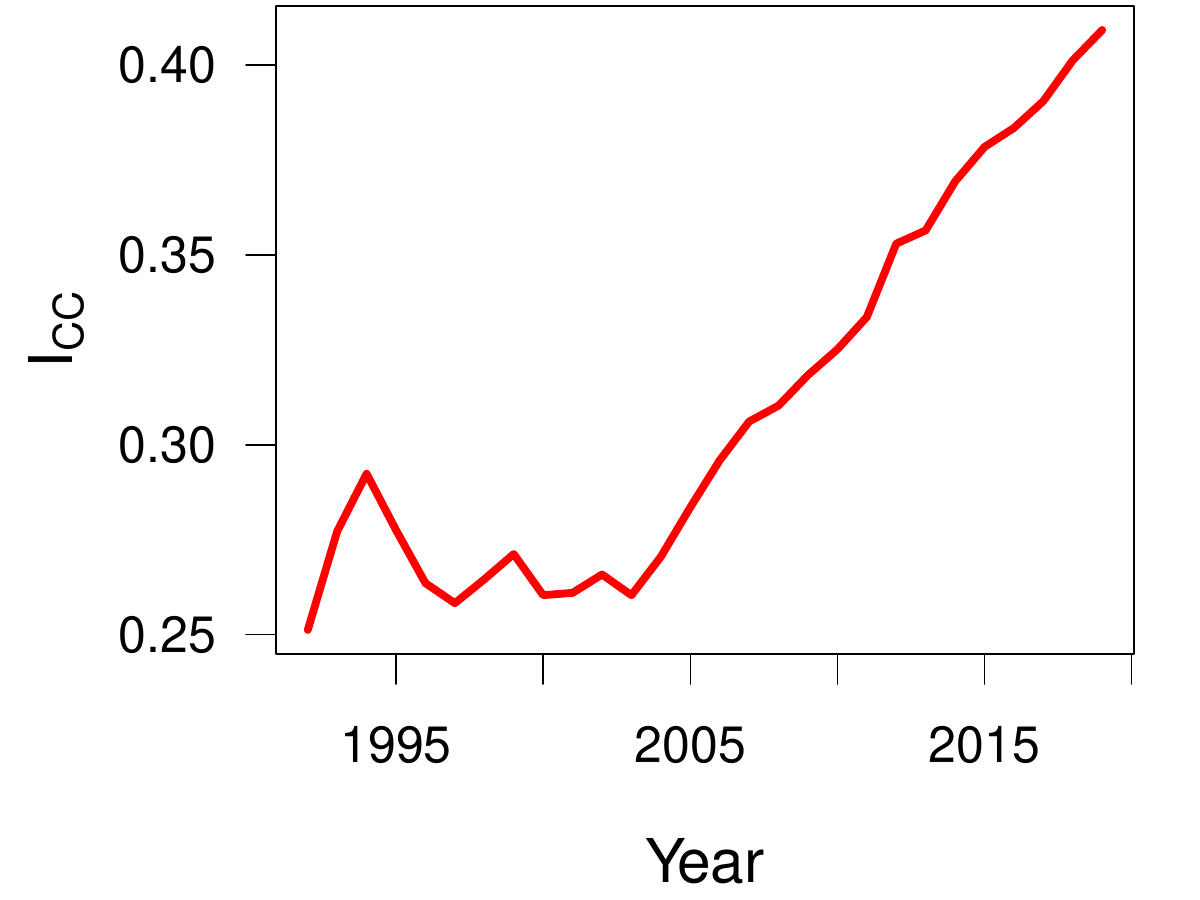} & \includegraphics[width = 0.3\textwidth]{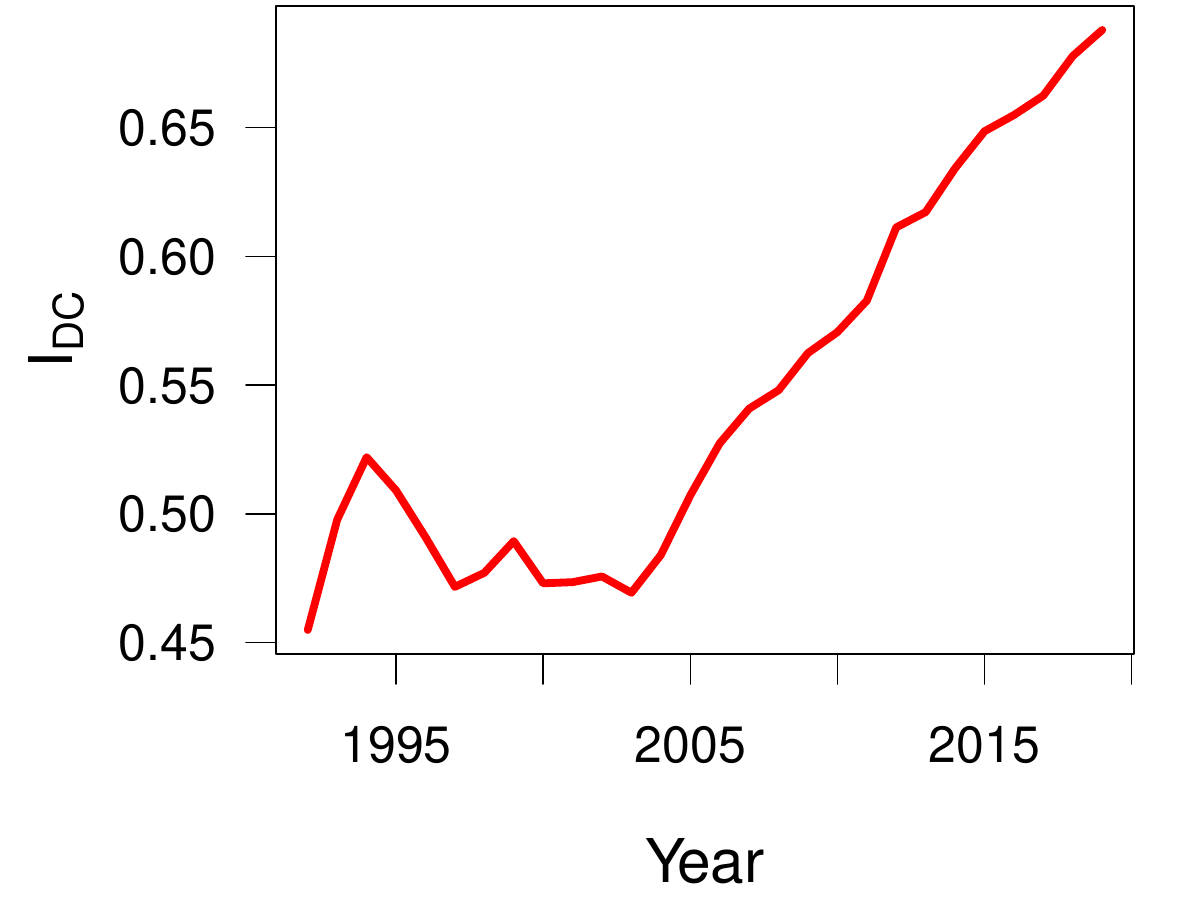}
     \end{tabular}
     \caption{ArXiv data for Mathematics: Yearly plot of $I_\CI$, $I_\DC$, and $I_\CC$ based on all authors.}
     \label{fig:math_indices_all_authors}
\end{figure}

\begin{figure}[!htbp]
     \centering
     \begin{tabular}{ccc}
     \includegraphics[width = 0.3\textwidth]{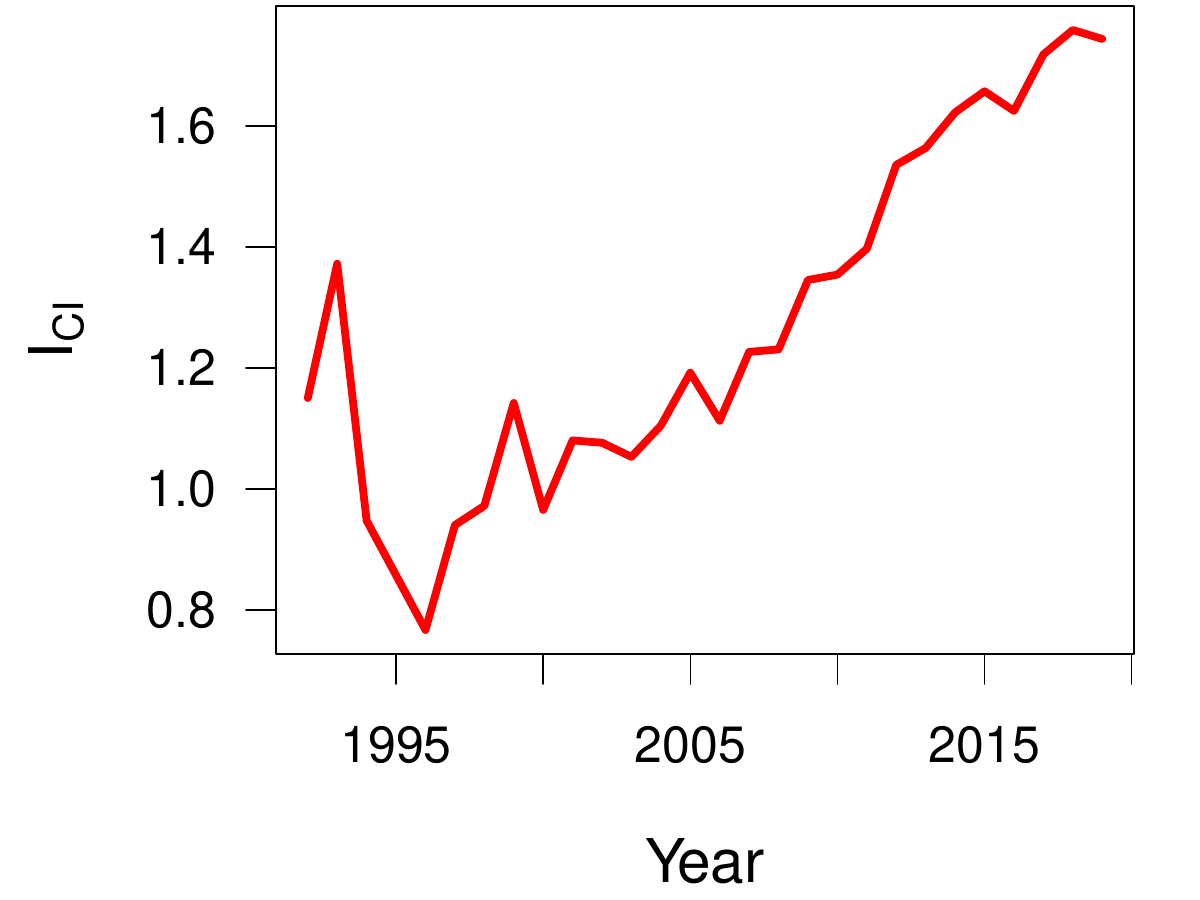} & \includegraphics[width = 0.3\textwidth]{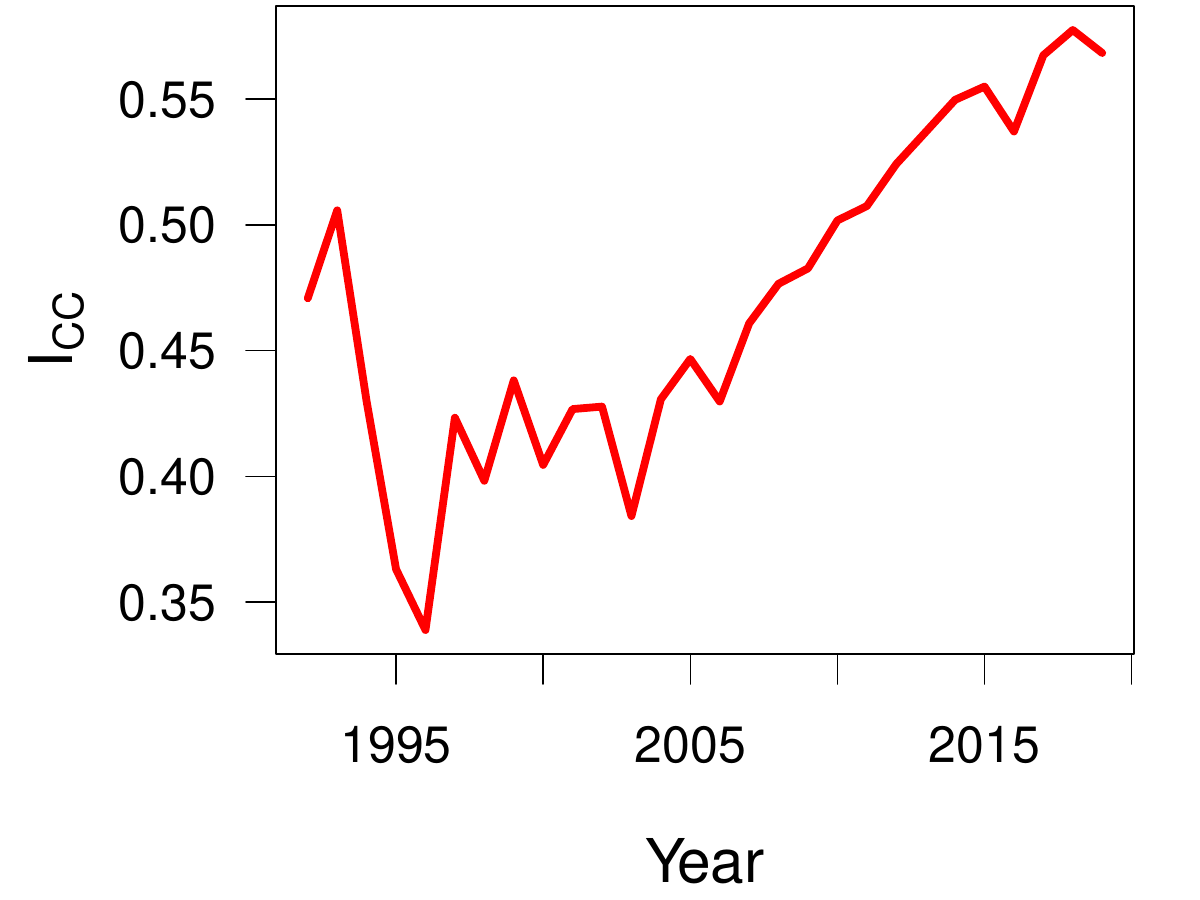} & \includegraphics[width = 0.3\textwidth]{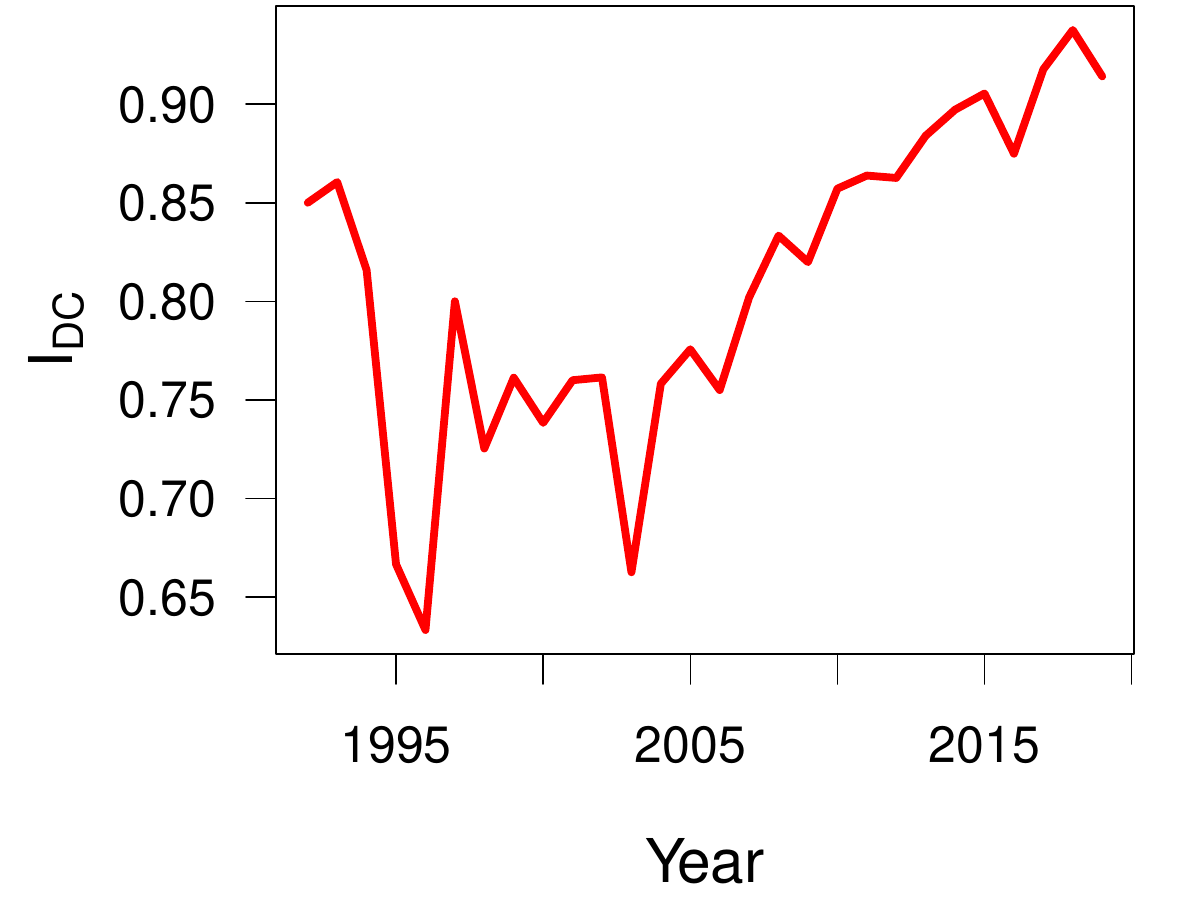}
     \end{tabular}
     \caption{ArXiv data for Mathematics: Yearly plot of $I_\CI$, $I_\DC$, and $I_\CC$ based on the top $100$ authors.}
     \label{fig:math_indices_top_100_authors}
\end{figure}

\begin{figure}[!htbp]
     \centering
     \begin{tabular}{ccc}
     \includegraphics[width = 0.3\textwidth]{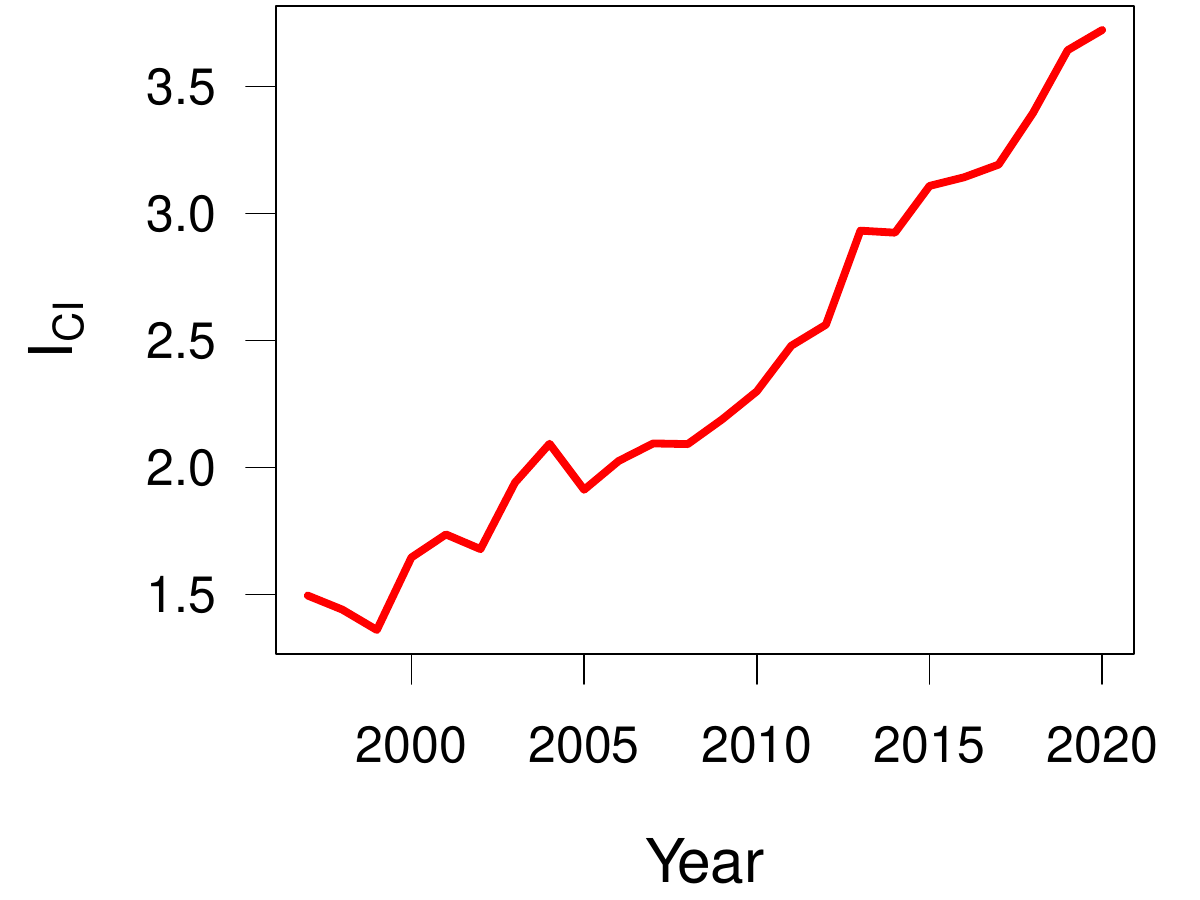} & \includegraphics[width = 0.3\textwidth]{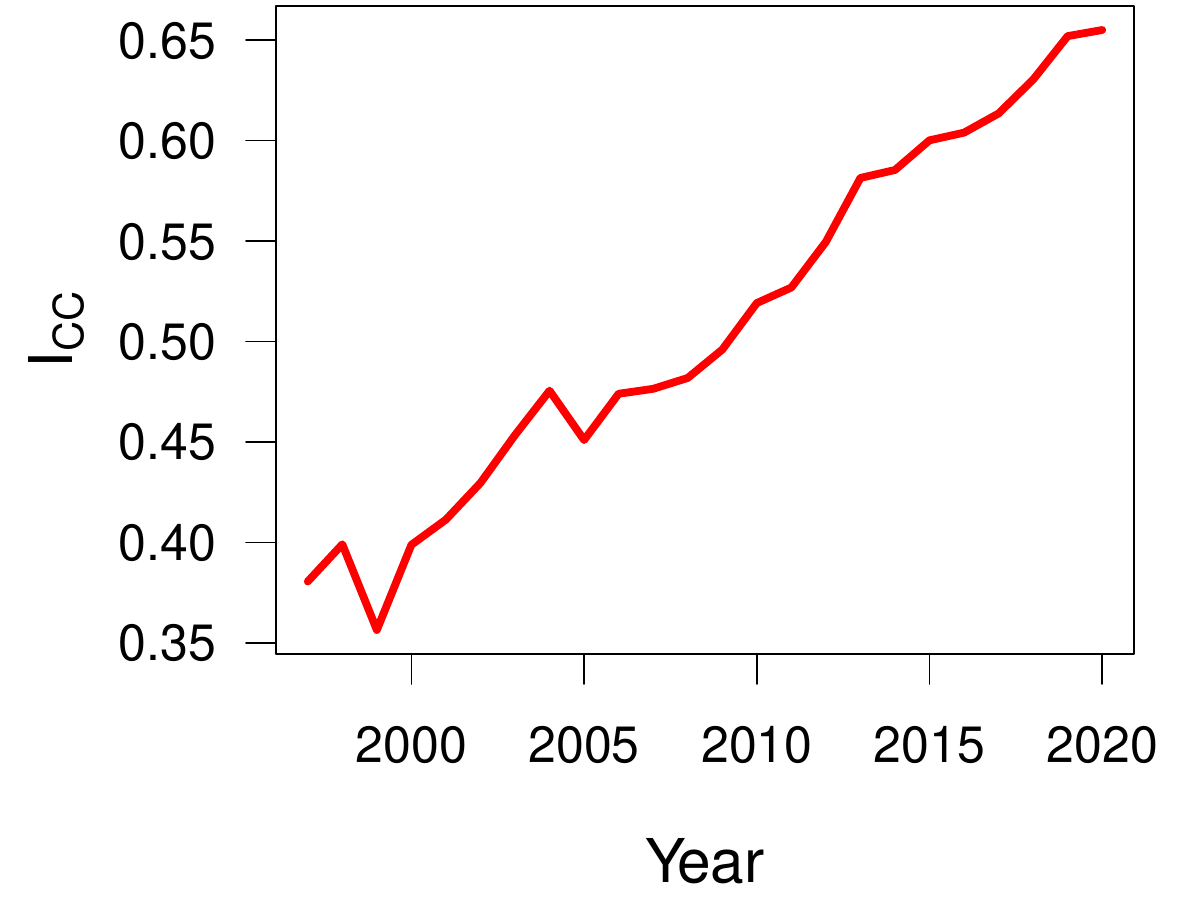} & \includegraphics[width = 0.3\textwidth]{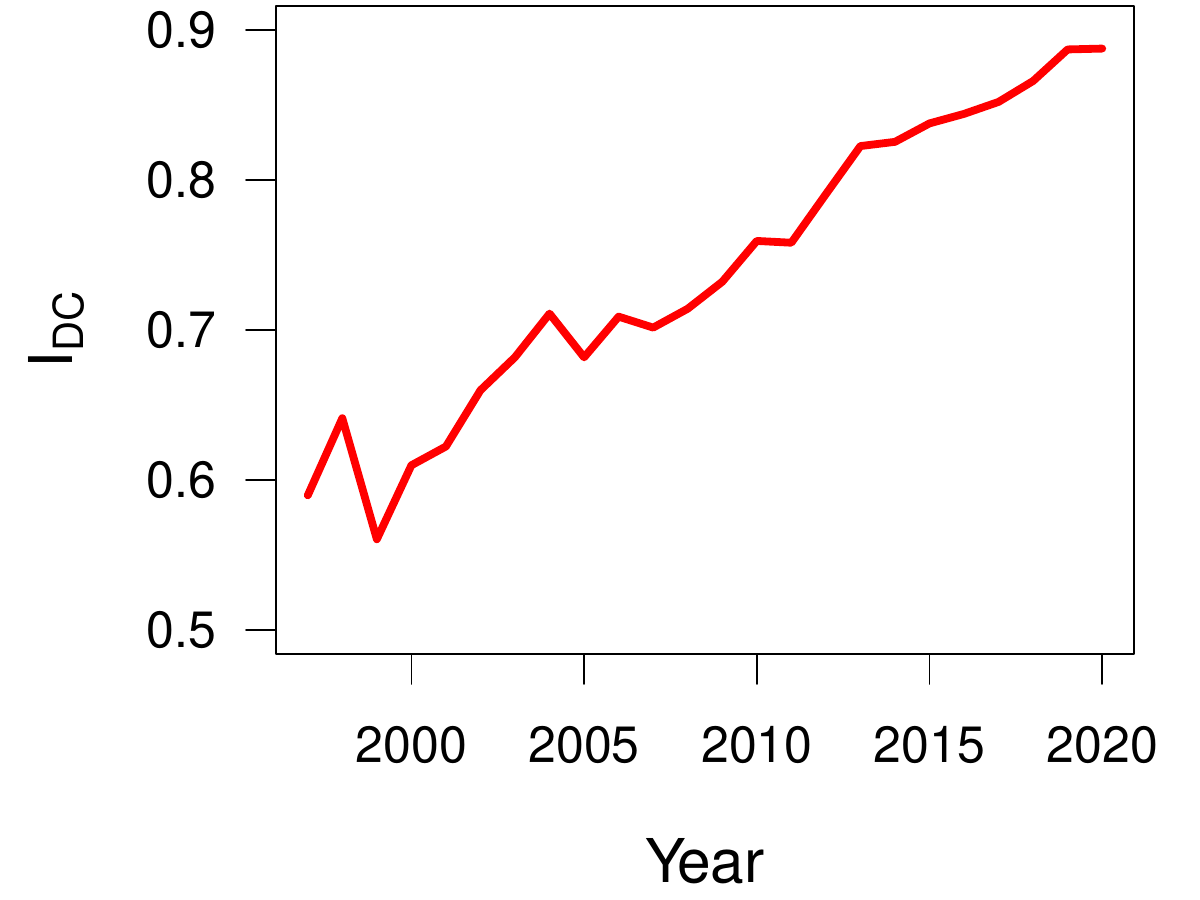}
     \end{tabular}
     \caption{ArXiv data for Physics: Yearly plot of $I_\CI$, $I_\DC$, and $I_\CC$ based on all authors.}
     \label{fig:phy_indices_all_authors}
\end{figure}

\begin{figure}[!htbp]
     \centering
     \begin{tabular}{ccc}
     \includegraphics[width = 0.3\textwidth]{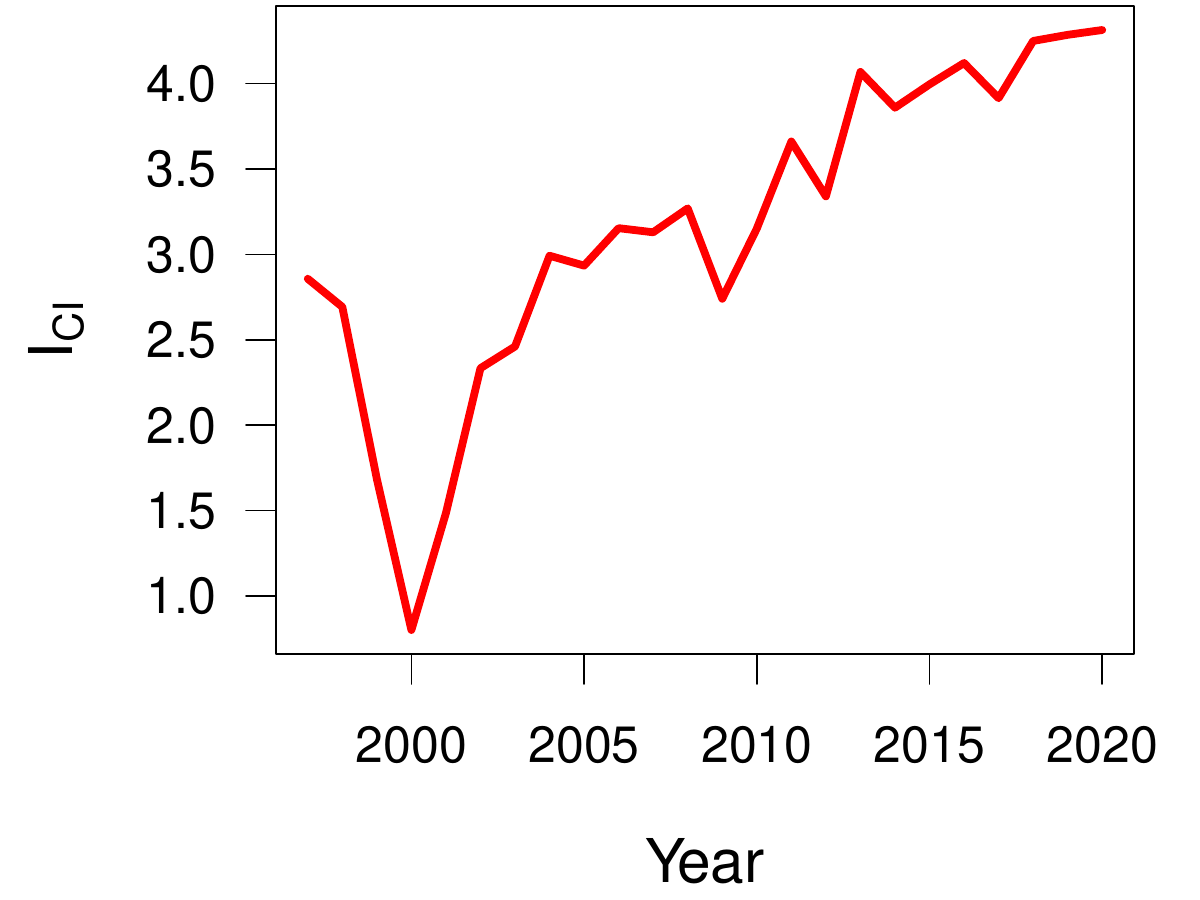} & \includegraphics[width = 0.3\textwidth]{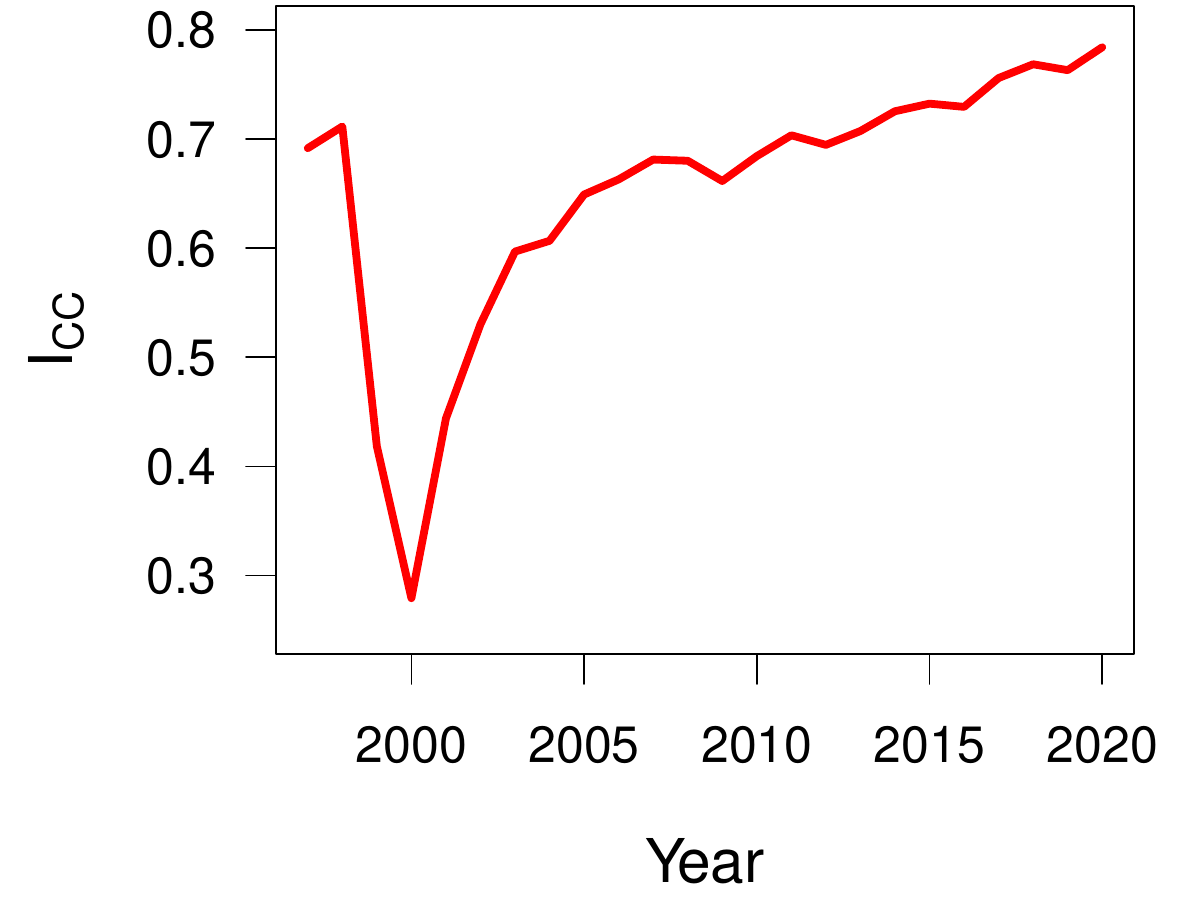} & \includegraphics[width = 0.3\textwidth]{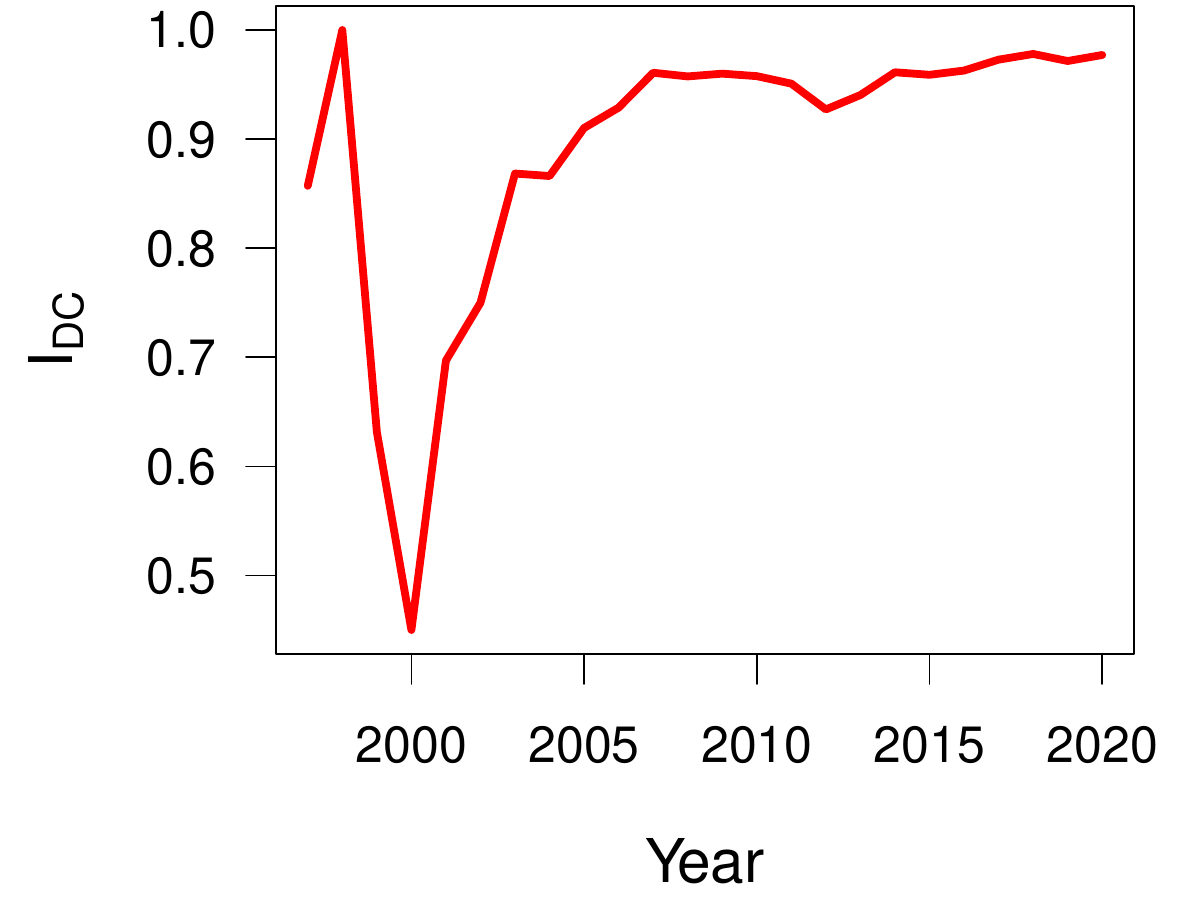}
     \end{tabular}
     \caption{ArXiv data for Physics: Yearly plot of $I_\CI$, $I_\DC$, and $I_\CC$ based on the top $100$ authors.}
     \label{fig:phy_indices_top_100_authors}
\end{figure}


\begin{figure}[!htbp]
     \centering
     \begin{tabular}{ccc}
     \includegraphics[width = 0.3\textwidth]{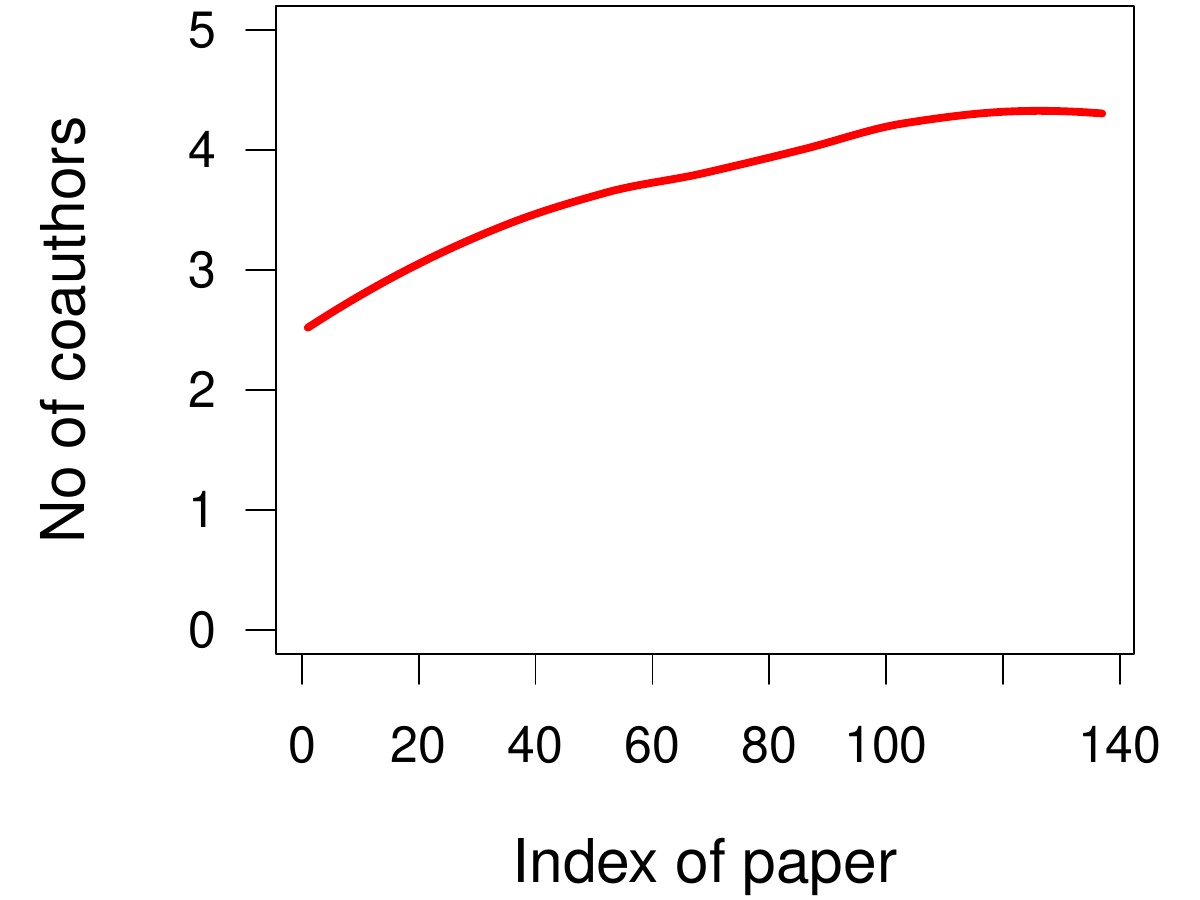} & \includegraphics[width = 0.3\textwidth]{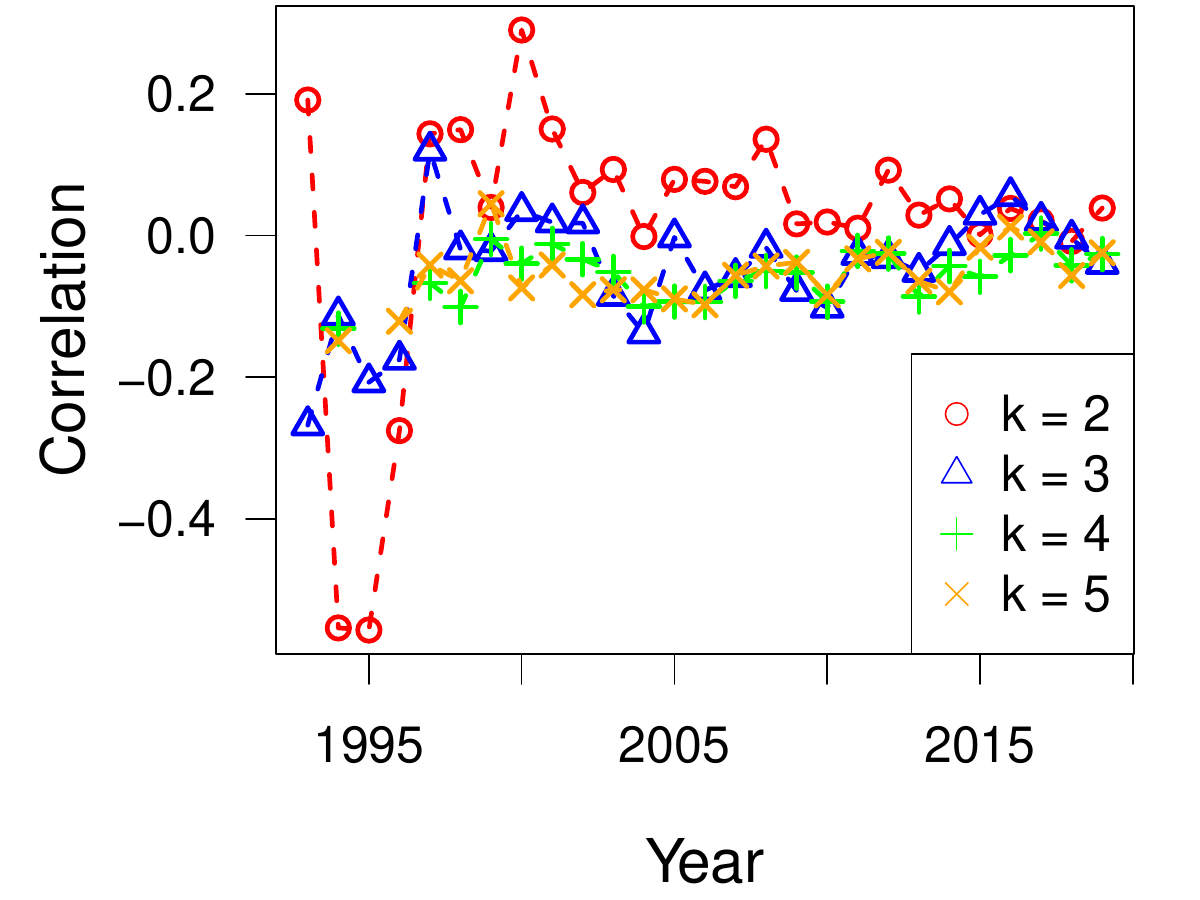} & \includegraphics[width = 0.3\textwidth]{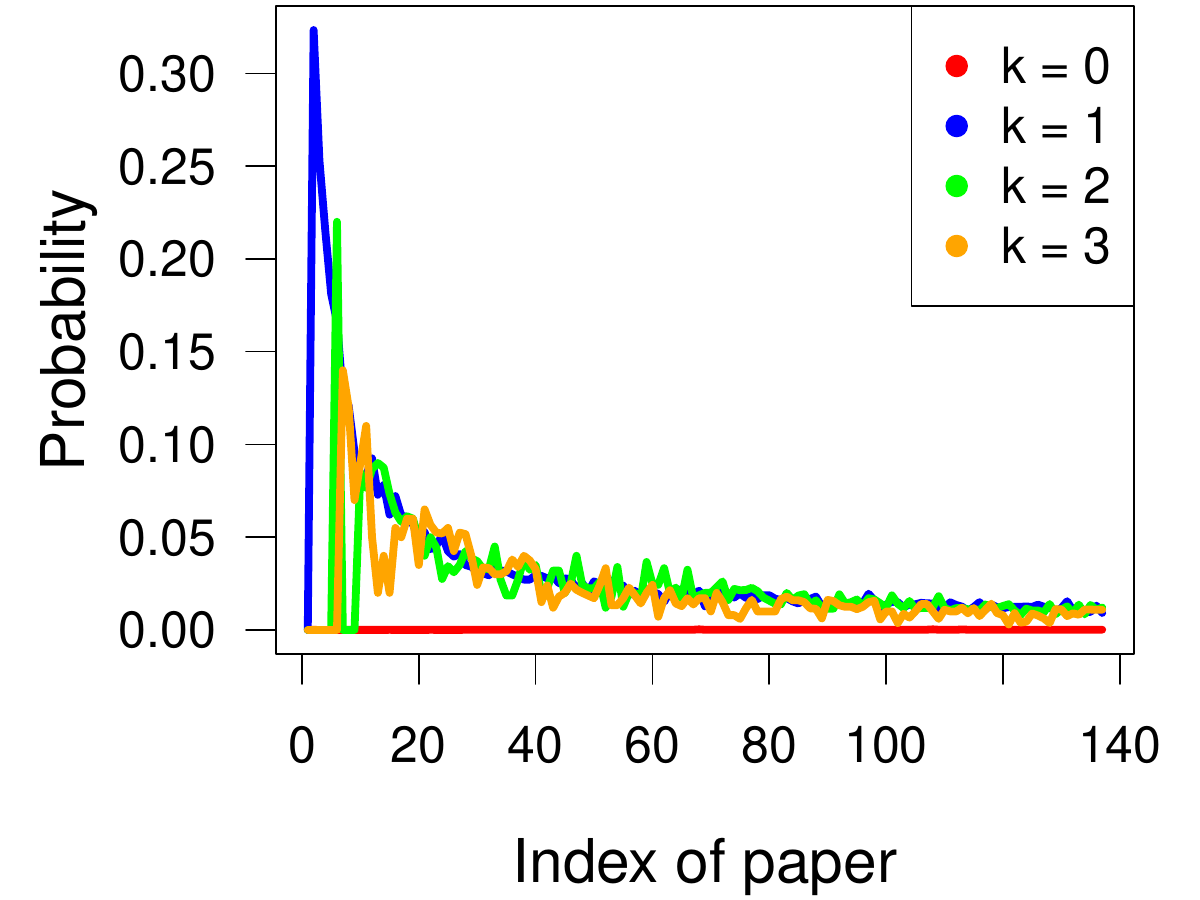} \\
     (a) & (b) & (c)
     \end{tabular}
     \caption{ArXiv data for CS: (a) Plot of the no of co-authors in the $k$-th paper of an author versus $k$ (based on the top 100 most productive authors); (b) plot of $\corr(X_1[t, t + \delta], X_k[t, t + \delta])$, for $\delta = 1$ year; (c) plot of $\hat{F}_n(k)$ versus $n$.}
     \label{fig:csWRITE}
\end{figure}

\begin{figure}[!htbp]
     \centering
     \begin{tabular}{ccc}
     \includegraphics[width = 0.3\textwidth]{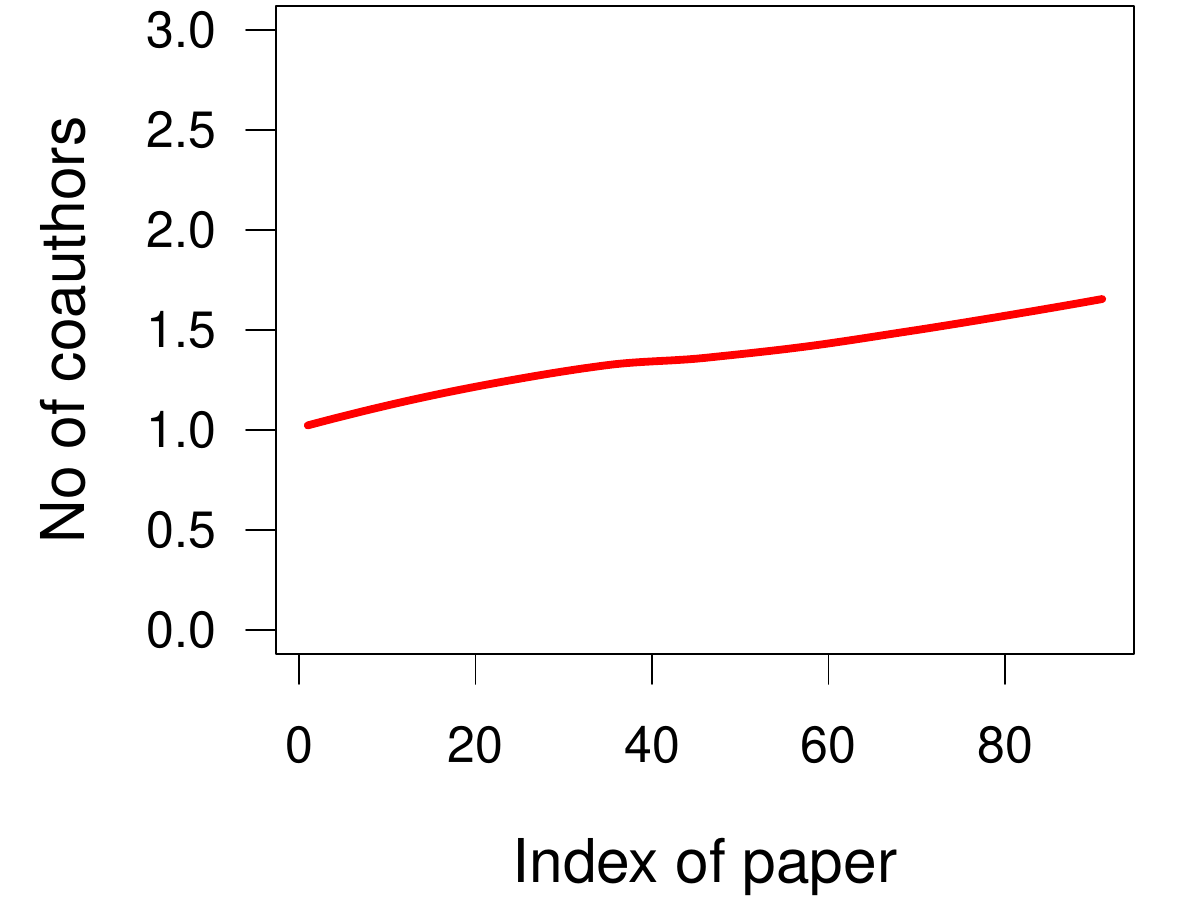} & \includegraphics[width = 0.3\textwidth]{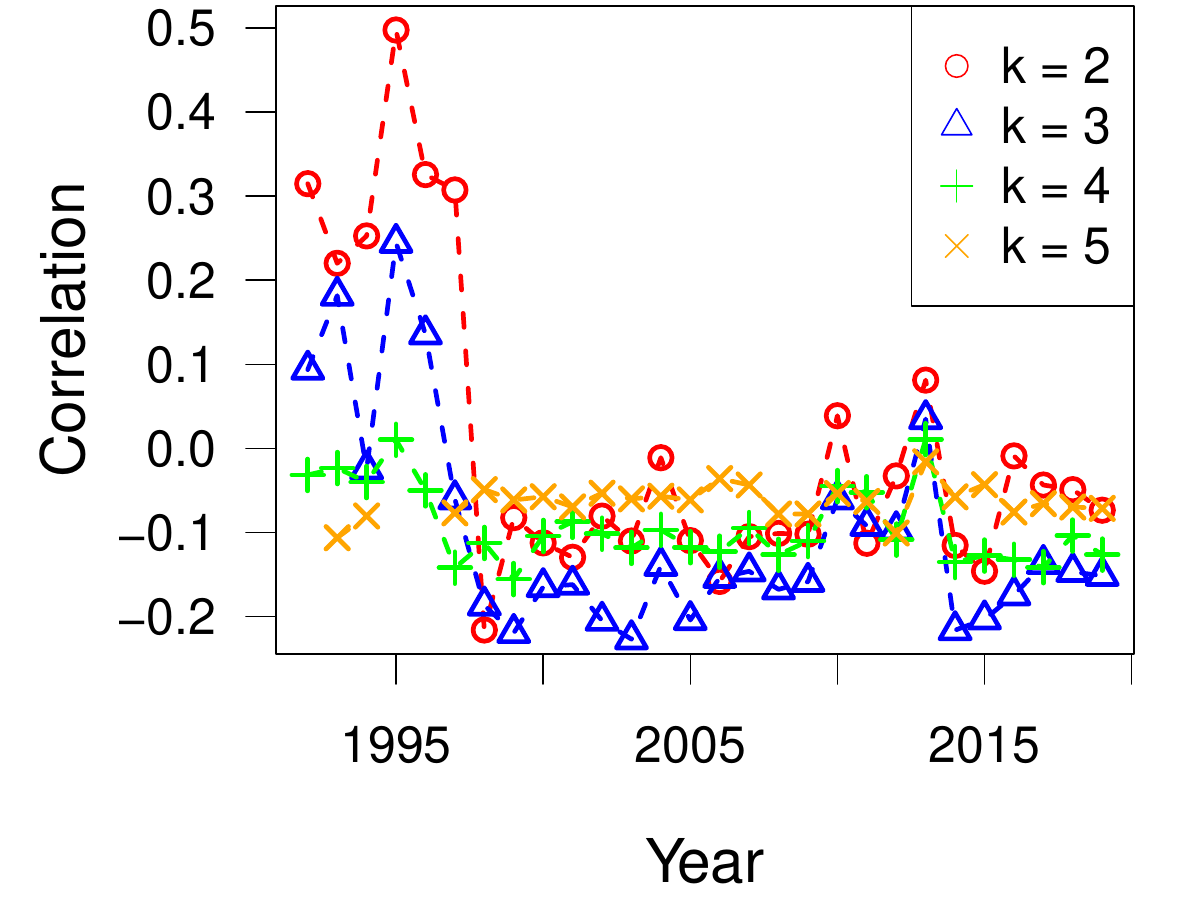} & \includegraphics[width = 0.3\textwidth]{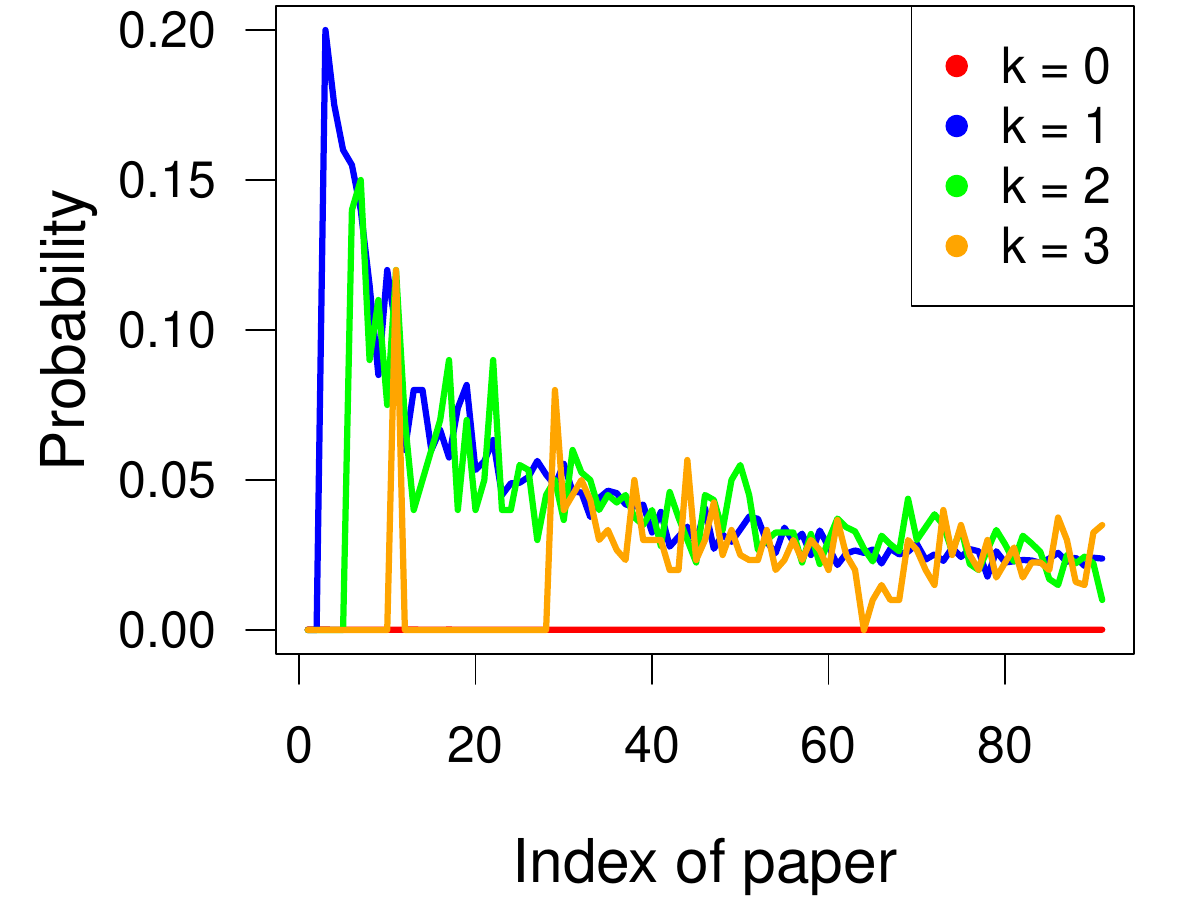} \\
     (a) & (b) & (c)
     \end{tabular}
     \caption{ArXiv data for Mathematics: (a) Plot of the no of co-authors in the $k$-th paper of an author versus $k$ (based on the top 100 most productive authors); (b) plot of $\corr(X_1[t, t + \delta], X_k[t, t + \delta])$, for $\delta = 1$ year; (c) plot of $\hat{F}_n(k)$ versus $n$.}
     \label{fig:mathWRITE}
\end{figure}

\begin{figure}[!htbp]
     \centering
     \begin{tabular}{ccc}
     \includegraphics[width = 0.3\textwidth]{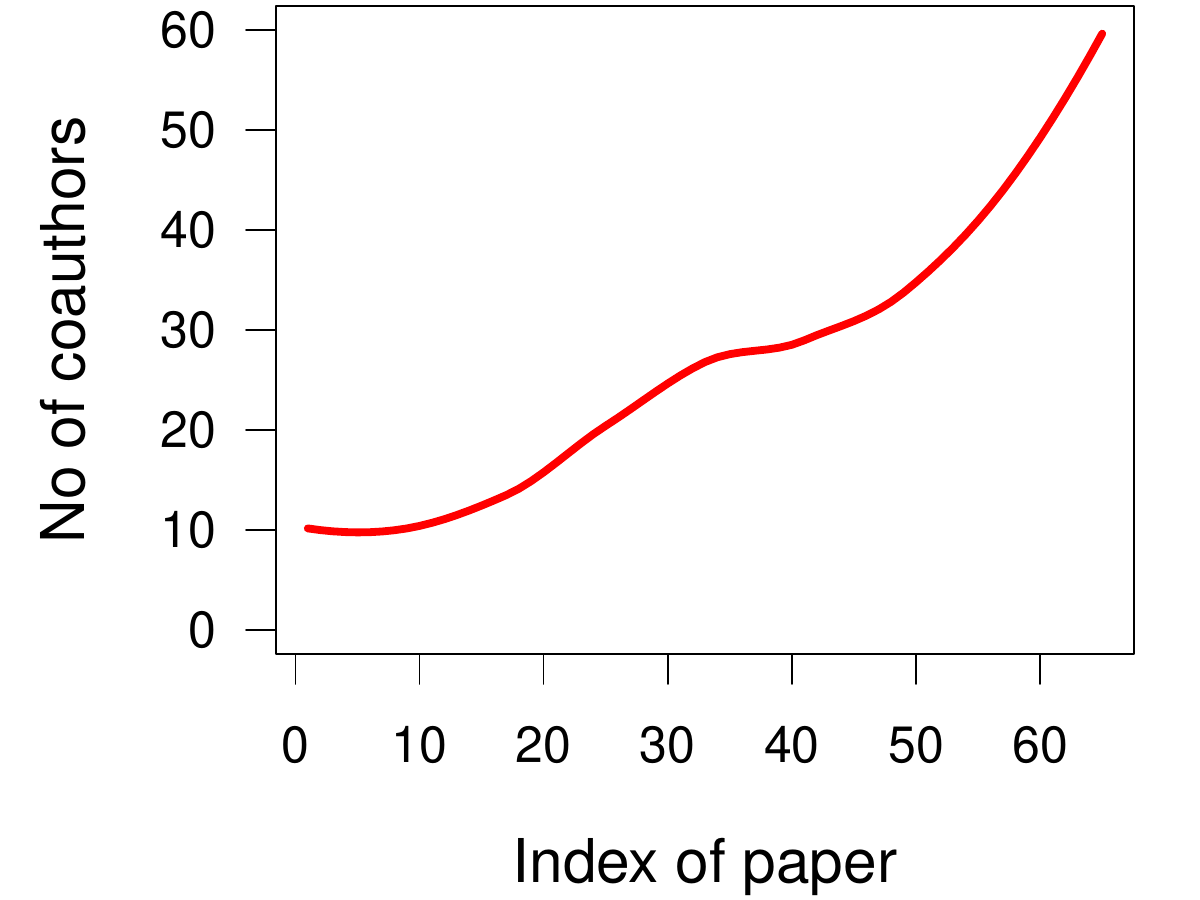} & \includegraphics[width = 0.3\textwidth]{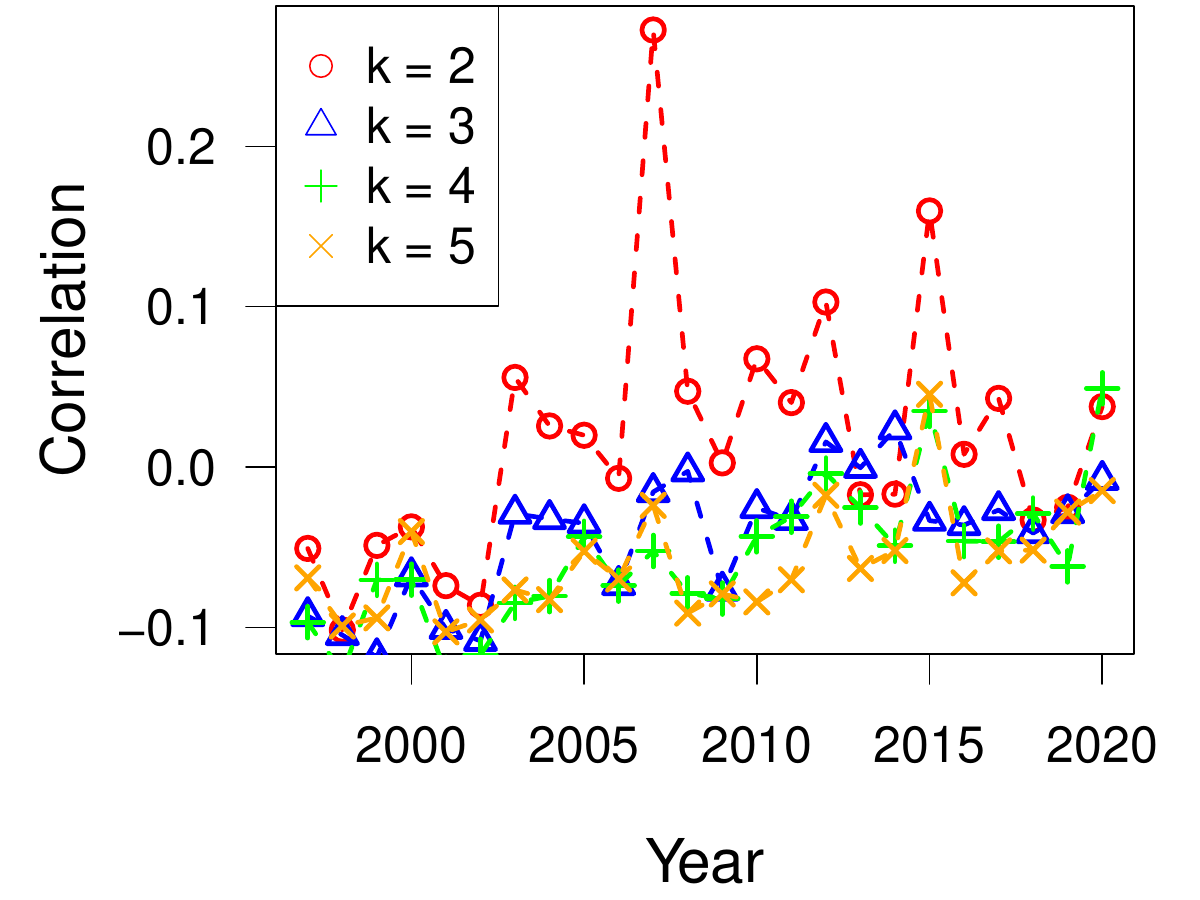} & \includegraphics[width = 0.3\textwidth]{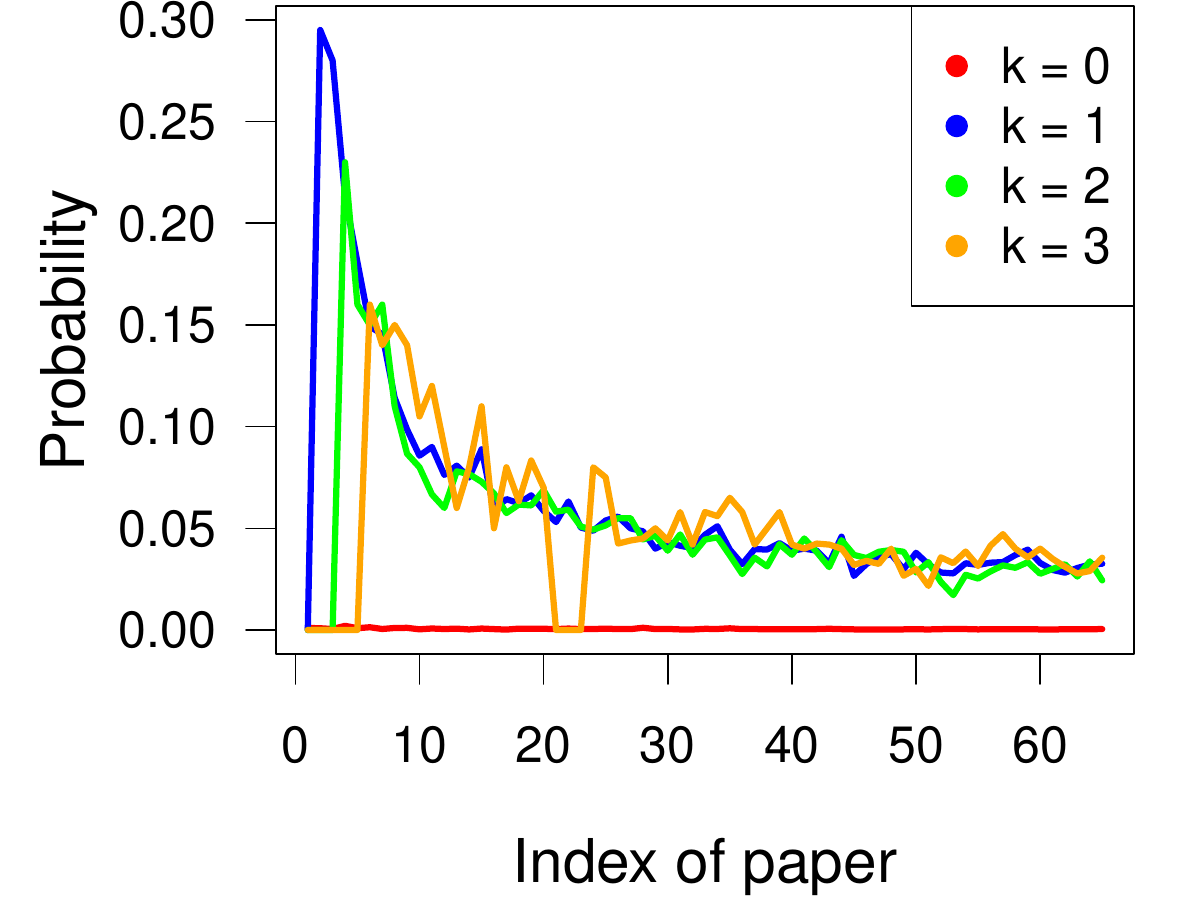} \\
     (a) & (b) & (c)
     \end{tabular}
     \caption{ArXiv data for Physics: (a) Plot of the no of co-authors in the $k$-th paper of an author versus $k$ (based on the top 100 most productive authors); (b) plot of $\corr(X_1[t, t + \delta], X_k[t, t + \delta])$, for $\delta = 1$ year; (c) plot of $\hat{F}_n(k)$ versus $n$.}
     \label{fig:phyWRITE}
\end{figure}

\section{Conclusion}\label{sec:conc}
In this work, we introduce a novel mean-field dynamic model for academic collaboration, by combining a non-homogeneous Poisson process modeling paper-writing events and a mechanism of choosing co-authors depending on past collaboration history. We also develop methods for estimating the (infinite dimensional) parameters of the model. We derive analytical expressions for various indices of collaboration under the proposed model in terms of the model parameters. The proposed model is highly flexible in that it can explain a variety of empirically observed features found in real-world collaboration networks. For example, the behaviours of various indices of collaboration over time, or the effect that people tend to collaborate more with people they have already collaborated with in the past, but this effect seems to reverse as the number of joint collaborations increase over time.

Although the proposed model reflects many complexities of real-world collaboration networks, there are some limitations. For example, the mean-field nature of the model overlooks the fact that the intensity of producing research vary across authors. This can potentially be tackled by considering the collaboration dynamics of all authors simultaneously, allocating each a separate intensity functional. Also, in the proposed model, an author does not consider the relative impacts (measured in terms of the impact factor of the journal in which a work is published, or the number of times the work has been cited) of various existing research works while choosing future collaborators. Nor does she take into account how successful the potential collaborators are. We believe that the proposed model can be suitably tweaked to incorporate these effects and leave the exploration of such extensions to future work. Furthermore, the proposed framework can also be adopted to study evolution of interdisciplinary research collaboration by assigning each author a discipline or specific field of research based on their publication record. We will pursue this direction in a future work.

\section*{Funding}
SSM is partially supported by an INSPIRE research grant (DST/INSPIRE/04/2018/002193) from the Department of Science and Technology, Government of India.

\bibliographystyle{plainnat}
\bibliography{collab_model.bib}

\clearpage
\appendix

\section{Proofs of our main results}\label{sec:proofs}
We begin with a technical lemma which will be used repeatedly later in various proofs.
\begin{lemma}\label{lem:est_En}
Let $N$ be an inhomogeneous Poisson process with intensity functional $\lambda(\cdot)$, whose event times are $E_n, n\ge 1$. Let $U = N[0,t]$ and $\lambda_{t, h} = \int_t^{t + h} \lambda(u) \, du$. Then, for all $t, h > 0$ satisfying $\lambda_{t, h} < 1$ and any functions $g: \NN \rightarrow [0, 1]$ and $h : \NN \times \NN \rightarrow [0, 1]$,   the following inequalities hold: 
\begin{align}\label{eq:bound_En}
    &\bigg|\sum_{n\ge 1}g(n)\P(E_n \in [t, t + h]) - e^{-\lambda_{t, h}}\lambda_{t, h}\sum_{n \ge 1}g(n) \P(U = n - 1)\bigg| \le \lambda_{t, h}^{2} + \frac{\lambda_{t, h}^{2}}{1 - \lambda_{t, h}}; \\ \nonumber
    &\bigg|\sum_{\substack{n_1 \ge 1 \\ n_2 > n_1}}h(n_1, n_2)\P(E_{n_1}, E_{n_2} \in [t, t + h]) - e^{-\lambda_{t, h}}\frac{\lambda_{t, h}^{2}}{2}\sum_{n \ge 1} h(n, n + 1) \P(U = n - 1) \bigg| \\  \label{eq:bound_En1_En2}
    &\hspace{22em} \le \lambda_{t,h}^{3}+ \frac{\lambda_{t, h}^{3}}{1-\lambda_{t, h}} + \frac{\lambda_{t, h}^{3}}{\big(1 - \lambda_{t, h}\big)^2}.
\end{align}
\end{lemma}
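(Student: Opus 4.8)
The plan is to reduce both inequalities to computations with a single Poisson random variable $W := N[t, t + h] \sim \pois(\lambda_{t, h})$, which is independent of $U = N[0, t]$ by the independent-increments property of the Poisson process. The starting point is the characterization already used in Lemma~\ref{lem:UV}: the $n$-th event falls in $[t, t + h]$ exactly when $U \le n - 1$ and $U + W \ge n$. Conditioning on the value of $U$ and using independence gives the exact expansion
\[
    \P(E_n \in [t, t + h]) = \sum_{k = 0}^{n - 1} \P(U = k)\, \P(W \ge n - k).
\]

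\textbf{First inequality.} I would isolate the leading term, corresponding to $k = n - 1$ (i.e. $U = n - 1$, with one further event in $[t, t + h]$). Writing $\P(W \ge 1) = e^{-\lambda_{t, h}} \lambda_{t, h} + \P(W \ge 2)$, the $k = n - 1$ contribution splits into the claimed main term $e^{-\lambda_{t, h}} \lambda_{t, h} \sum_n g(n) \P(U = n - 1)$ plus an error controlled by $\P(W \ge 2)$, after using $g \le 1$ and $\sum_n \P(U = n - 1) \le 1$. The remaining terms $k \le n - 2$ all force $W \ge 2$; reindexing by $j = n - k \ge 2$ and using $\sum_k \P(U = k) \le 1$ bounds them by $\sum_{j \ge 2} \P(W \ge j)$. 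The whole thing then reduces to Poisson tail estimates, the key inequality being $\P(W \ge j) \le \lambda_{t, h}^j / j! \le \lambda_{t, h}^j$ (valid since $\lambda_{t, h} < 1$), proved by comparing $e^{-\lambda_{t, h}} \sum_{m \ge j} \lambda_{t, h}^m / m!$ against $(\lambda_{t, h}^j / j!) e^{\lambda_{t, h}}$ termwise, and then summing the resulting geometric series, which produces the terms $\lambda_{t, h}^2$ and $\lambda_{t, h}^2 / (1 - \lambda_{t, h})$.

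\textbf{Second inequality.} The same mechanism applies. For $n_1 < n_2$, both events lie in $[t, t + h]$ iff $U \le n_1 - 1$ and $U + W \ge n_2$, so
\[
    \P(E_{n_1}, E_{n_2} \in [t, t + h]) = \sum_{k = 0}^{n_1 - 1} \P(U = k)\, \P(W \ge n_2 - k).
\]
Now the leading configuration is two consecutive events, $n_2 = n_1 + 1$ with $k = n_1 - 1$, which forces $W \ge 2$ with leading probability $\P(W = 2) = e^{-\lambda_{t, h}} \lambda_{t, h}^2 / 2$. A short check shows that $j := n_2 - k = 2$ can occur only in this configuration, so every other term has $n_2 - k \ge 3$ and hence $W \ge 3$. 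I would then split the double sum exactly as before: the leading family contributes the stated main term plus an error of order $\P(W \ge 3)$, and the remainder is bounded by $\sum_{j \ge 3} (j - 1) \P(W \ge j)$, where the factor $(j - 1)$ counts the admissible pairs $(n_1, n_2)$ with $n_2 - k = j$. Applying the same tail bound and summing gives the three terms $\lambda_{t, h}^3$, $\lambda_{t, h}^3 / (1 - \lambda_{t, h})$ and $\lambda_{t, h}^3 / (1 - \lambda_{t, h})^2$, the last arising from the $\sum_j j \lambda_{t, h}^j$ double-geometric sum.

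The routine part is the independence/conditioning step and the bookkeeping of which $(n_1, n_2, k)$ triples survive. The main obstacle --- really the only delicate point --- is obtaining clean Poisson tail bounds and carrying out the geometric summations so that the constants collapse exactly into the stated right-hand sides; in particular the combinatorial factor $(j - 1)$ in the pair case, which is responsible for the extra power of $1 / (1 - \lambda_{t, h})$, must be tracked carefully.
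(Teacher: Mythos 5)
Your proposal is correct and follows essentially the same route as the paper: the same expansion $\P(E_{n_1},\ldots \in [t,t+h]) = \sum_k \P(U=k)\P(V \ge \cdot)$ via independent increments, the same isolation of the leading configuration ($k = n-1$, respectively $n_2 = n_1+1$ with $k = n_1-1$), and the same elementary Poisson tail bound $\P(V \ge j) \le \lambda_{t,h}^j$ followed by geometric summation. The only (immaterial) difference is bookkeeping in the second bound, where you group the remainder by $j = n_2 - k$ with multiplicity $j-1$ while the paper splits consecutive from non-consecutive pairs; both collapse to the same three error terms.
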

\begin{proof}
Let $\Lambda(t) = \int_{0}^t\lambda(u)\, du$ and $V = N[t, t + h]$. Note that $U$ and $V$ are independent, $U \sim \pois(\Lambda(t))$ and $V \sim \pois(\lambda_{t,h})$. Now if $W \sim \pois(\mu)$, then we have the estimate 
\begin{equation}\label{eq:poi_tail_trivial}
    \P(W \ge k) = e^{-\mu} \sum_{\ell \ge k} \frac{\mu^\ell}{\ell!} = e^{-\mu} \mu^k \sum_{\ell \ge k} \frac{\mu^{\ell - k}}{\ell!} \le e^{-\mu} \mu^k \sum_{\ell \ge k} \frac{\mu^{\ell - k}}{(\ell - k)!} = \mu^k.
\end{equation}
\textit{Proof of \eqref{eq:bound_En}.}
We begin by noting that
\begin{align*}
     \P(E_n \in [t, t + h]) &= \P(U \le n-1, U+V \ge n)=\sum_{k = 0}^{n - 1} \P(U = k) \P(V \ge n-k).
\end{align*}
Now
\begin{align*}
    \sum_{n \ge 1} &g(n) \P(E_n \in [t, t + h])
    \\
    &= \sum_{n \ge 1}g(n)\sum_{k = 0}^{n - 1} \P(U = k) \P(V \ge n-k)
    \\
    &= \sum_{n \ge 1} g(n)\P(U = n - 1) \P(V \ge 1) + \sum_{n\ge 2}g(n)\sum_{k = 0}^{n - 2} \P(U = k) \P(V \ge n - k).
\end{align*}
Therefore
\begin{align*}
    0 \le \sum_{n \ge 1} g(n) &\P(E_n \in [t, t + h]) - e^{-\lambda_{t, h}} \lambda_{t, h} \sum_{n \ge 1} g(n)\P(U = n - 1) \\
    &=\sum_{n \ge 1}g(n)\P(U = n - 1) \P(V \ge 2) + \sum_{n \ge 2}g(n)\sum_{k = 0}^{n - 2} \P(U = k) \P(V \ge n - k)
    \\
    &\le \P(V \ge 2) \sum_{n \ge 1}\P(U = n - 1) + \sum_{k \ge 0}\P(U = k)\sum_{n \ge k + 2} \P(V \ge n - k)
    \\
    &\le \lambda_{t, h}^{2} + \sum_{k \ge 0} \P(U = k) \sum_{\ell \ge 2} \lambda_{t, h}^{\ell} \quad (\text{using \eqref{eq:poi_tail_trivial}})
    \\
    &= \lambda_{t, h}^{2} + \frac{\lambda_{t, h}^{2}}{1 - \lambda_{t, h}}.
\end{align*}
\textit{Proof of \eqref{eq:bound_En1_En2}.}
Note that for $n_2 > n_1$,
\[
    \P(E_{n_1}, E_{n_2} \in [t, t + h]) = \P(U \le n_1-1, U+V \ge n_2) =\sum_{k = 0}^{n_1 - 1} \P(U = k) \P(V \ge n_2-k).
\]
Now
\begin{align*}
\sum_{\substack{n_1 \ge 1 \\  n_2 > n_1}} &h(n_1, n_2) \P(E_{n_1}, E_{n_2} \in [t, t + h]) \\
&= \sum_{n \ge 1}h(n, n + 1)\P(E_{n}, E_{n+1} \in [t, t + h]) + \sum_{\substack{n_1 \ge 1 \\ n_2 > n_1 + 1}}h(n_1, n_2) \P(E_{n_1}, E_{n_2} \in [t, t + h]) \\
&= \sum_{n \ge 1}h(n, n + 1)\sum_{k = 0}^{n - 1} \P(U = k) \P(V \ge n + 1 - k) \\
&\hspace{15em}+ \sum_{\substack{n_1 \ge 1 \\ n_2 > n_1 + 1}} h(n_1, n_2) \sum_{k = 0}^{n_1 - 1}\P(U = k) \P(V \ge n_2 - k) \\
&= \sum_{n \ge 1}h(n, n + 1)\P(U = n - 1) \P(V \ge 2) + \sum_{n \ge 2}h(n, n + 1)\sum_{k = 0}^{n - 2} \P(U = k) \P(V \ge n+1-k) \\
&\hspace{15em}+ \sum_{\substack{n_1 \ge 1 \\ n_2 > n_1 + 1}} h(n_1, n_2) \sum_{k = 0}^{n_1 - 1}\P(U = k) \P(V \ge n_2 - k).
\end{align*}
Therefore
\begin{align*}
0 \le &\sum_{\substack{n_1 \ge 1 \\  n_2 > n_1}} h(n_1, n_2) \P(E_{n_1}, E_{n_2} \in [t, t + h]) - e^{-\lambda_{t, h}}\frac{\lambda_{t, h}^{2}}{2}\sum_{n \ge 1} h(n, n + 1) \P(U = n - 1) \\
&= \sum_{n \ge 1}h(n, n + 1)\P(U = n - 1) \P(V \ge 3) + \sum_{n \ge 2}h(n, n + 1) \sum_{k = 0}^{n - 2} \P(U = k) \P(V \ge n+1-k) \\
&\hspace{15em}+ \sum_{\substack{n_1 \ge 1 \\ n_2 > n_1 + 1}} h(n_1, n_2) \sum_{k = 0}^{n_1 - 1}\P(U = k) \P(V \ge n_2 - k) \\
&\le \P(V \ge 3) \sum_{n \ge 1}\P(U = n - 1) + \sum_{k \ge 0} \P(U = k) \sum_{n \ge k + 2} \P(V \ge n + 1 - k) \\
&\hspace{15em}+ \sum_{k \ge 0} \P(U = k) \sum_{n_1 \ge k + 1} \sum_{\ell \ge 2} \P(V \ge n_1 + \ell - k) \\
&\le \lambda_{t, h}^3 \sum_{n \ge 1}\P(U = n - 1) + \sum_{k \ge 0} \P(U = k) \sum_{\ell \ge 3} \lambda_{t, h}^{\ell} \\
&\hspace{15em}+ \sum_{k \ge 0} \P(U = k) \sum_{n_1 \ge k + 1} \sum_{\ell \ge 2} \lambda_{t, h}^{n_1 + \ell - k} \quad (\text{using \eqref{eq:poi_tail_trivial}}) \\
&= \lambda_{t,h}^{3}+ \frac{\lambda_{t, h}^{3}}{1-\lambda_{t, h}} + \sum_{k \ge 0} \P(U = k) \sum_{n_1 \ge k + 1} \frac{\lambda_{t, h}^{n_1 + 2 - k}}{1-\lambda_{t, h}} \\
&= \lambda_{t,h}^{3}+ \frac{\lambda_{t, h}^{3}}{1-\lambda_{t, h}} + \frac{\lambda_{t, h}^{3}}{(1-\lambda_{t, h})^2}.
\end{align*}
This completes the proof.
\end{proof}
We now use Lemma~\ref{lem:est_En} to estimate various moments of $X_k[t, t + h]$.
\begin{lemma}\label{lem:est_xk-moments}
We have
\begin{align}
    \E X_{k + 1}[t, t + h] &= e^{-\lambda_{t, h}} \lambda_{t, h} H_t(k) + R_1(t, h), \\ 
    \E (X_{k + 1}[t, t + h])^2 &= e^{-\lambda_{t, h}} (\lambda_{t, h} H_t(k) + \lambda_{t, h}^2 G_t(k, k)) + R_2(t, h), \\ 
    \E (X_{k + 1}[t, t + h] X_{k' + 1}[t, t + h]) &= e^{-\lambda_{t, h}} \frac{\lambda_{t, h}^2}{2} (G_t(k, k') + G_t(k', k)) + R_3(t, h),
\end{align}
where
\begin{align*}
    |R_1(t, h)| &\le \lambda_{t, h}^2 \bigg(1 + \frac{1}{1 - \lambda_{t, h}}\bigg), \\
    |R_2(t, h)| &\le \lambda_{t, h} \bigg(1 + \frac{1}{1 - \lambda_{t, h}}\bigg) + 2\lambda_{t, h}^2 \bigg(1 + \frac{1}{1 - \lambda_{t, h}} + \frac{1}{(1 - \lambda_{t, h})^2}\bigg), \\
    |R_3(t, h)| &\le 2\lambda_{t, h}^3 \bigg(1 + \frac{1}{1 - \lambda_{t, h}} + \frac{1}{(1 - \lambda_{t, h})^2}\bigg).
\end{align*}
\end{lemma}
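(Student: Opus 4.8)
The plan is to work from the event-level representation $X_{k+1}[t,t+h] = \sum_{n \ge 1} \I(\#\C_n = k)\,\bone_{[t,t+h]}(E_n)$ and to exploit the structural fact that the co-authorship dynamics in \eqref{eq:model-new-paper} depend only on the index $n$ and the history $\C_1, \ldots, \C_{n-1}$, never on the event times. Consequently the whole sequence $(\C_n)_{n \ge 1}$, and in particular each indicator $\I(\#\C_n = k)$, is independent of the Poisson process $N$ and hence of every event $\{E_n \in [t,t+h]\}$. This independence is the crux: it lets each expectation factor into a co-authorship part and a Poisson part, the latter being exactly the quantity controlled by Lemma~\ref{lem:est_En}.

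For the first moment I would take expectations termwise and factor to get $\E X_{k+1}[t,t+h] = \sum_{n\ge1} \P(\#\C_n = k)\,\P(E_n \in [t,t+h])$. Applying \eqref{eq:bound_En} with $g(n) = \P(\#\C_n = k) \in [0,1]$ reduces the main part to $e^{-\lambda_{t,h}}\lambda_{t,h}\sum_{n\ge1}\P(\#\C_n=k)\,\P(U=n-1)$. The remaining sum is identified with $H_t(k)$ by conditioning on $U = N[0,t]=n-1$: since $U$ is independent of the co-authorship process, this conditioning makes $\#\C_{N[0,t]+1} = \#\C_n$, so the sum collapses to $\P(\#\C_{N[0,t]+1}=k) = H_t(k)$. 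The bound on $R_1$ is read off directly from \eqref{eq:bound_En}.

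For the second moment I would square the representation and split into the diagonal $n_1 = n_2$ (which reproduces $\E X_{k+1}[t,t+h]$, using $\I^2 = \I$ and $\bone^2 = \bone$) and twice the off-diagonal sum over $n_1 < n_2$. The off-diagonal expectation factors as $\P(\#\C_{n_1}=k, \#\C_{n_2}=k)\,\P(E_{n_1},E_{n_2}\in[t,t+h])$, so \eqref{eq:bound_En1_En2} with $h(n_1,n_2) = \P(\#\C_{n_1}=k,\#\C_{n_2}=k)$ gives main part $e^{-\lambda_{t,h}}\frac{\lambda_{t,h}^2}{2}\sum_n \P(\#\C_n=k,\#\C_{n+1}=k)\P(U=n-1)$, which the same conditioning-on-$U$ argument collapses to $G_t(k,k)$. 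Doubling the off-diagonal contribution and adding the diagonal yields the main term $e^{-\lambda_{t,h}}(\lambda_{t,h}H_t(k) + \lambda_{t,h}^2 G_t(k,k))$, with $R_2$ equal to $R_1$ plus twice the remainder of \eqref{eq:bound_En1_En2}. The cross moment for $k \ne k'$ is handled the same way, except that the diagonal vanishes because $\I(\#\C_n=k)\I(\#\C_n=k')=0$; writing the off-diagonal sum as the symmetric combination of $\P(\#\C_{n_1}=k,\#\C_{n_2}=k')$ and $\P(\#\C_{n_1}=k',\#\C_{n_2}=k)$ and applying \eqref{eq:bound_En1_En2} to each of the two terms separately produces both the factor-of-two remainder bound and the main term $e^{-\lambda_{t,h}}\frac{\lambda_{t,h}^2}{2}(G_t(k,k')+G_t(k',k))$.

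The routine but slightly delicate part is the remainder bookkeeping: assembling $R_2$ and $R_3$ from the pieces via the triangle inequality and then coarsening the powers using $\lambda_{t,h}<1$ (so that $\lambda_{t,h}^2 \le \lambda_{t,h}$ and $\lambda_{t,h}^3 \le \lambda_{t,h}^2$) to land on the exact forms stated. The main conceptual obstacle is justifying the factorization cleanly, namely making precise that $(\C_n)_{n\ge1}$ is independent of $N$ and that conditioning on $U$ turns each $\sum_n(\cdots)\P(U=n-1)$ into the corresponding $H_t$ or $G_t$ quantity; once that is in place, every main term falls out of Lemma~\ref{lem:est_En} and the error terms are immediate.
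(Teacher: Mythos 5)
Your proposal is correct and follows essentially the same route as the paper's proof: factor each expectation using the independence of $(\C_n)_{n\ge1}$ from the Poisson process, apply the two estimates of Lemma~\ref{lem:est_En} with $g(n)=\P(\#\C_n=k)$ and $h(n_1,n_2)=\P(\#\C_{n_1}=k,\#\C_{n_2}=k')$, identify the resulting sums against $\P(U=n-1)$ with $H_t$ and $G_t$, and coarsen the remainder powers using $\lambda_{t,h}<1$. Nothing is missing.
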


\begin{proof}
We have
\begin{align*}
    \E X_{k + 1}[t, t + h] &= \E \sum_{n \ge 1} \I(\#\C_{n} = k)\bone_{[t, t+h]}(E_{n}) \\
                           &= \sum_{n\ge 1} \P(\#\C_{n} = k) \P(E_n \in [t, t + h]) \\
                           &= e^{-\lambda_{t, h}}\lambda_{t, h}\sum_{n \ge 1}\P(\#\C_{n} = k)\P(U = n - 1) + R_1(t, h) \\
                           &= e^{-\lambda_{t, h}}\lambda_{t, h}H_t(k) + R_1(t, h),
\end{align*}
where $|R_1(t, h)| \le \lambda_{t, h}^2 (1 + \frac{1}{1 - \lambda_{t, h}})$ by \eqref{eq:bound_En}.

On the other hand,
\begin{align*}
    \E(X_{k + 1}[t, &t + h])^2 \\
    &= \E \bigg(\sum_{n_1 \ge 1} \I(\#\C_{n_1} = k) \bone_{[t, t + h]}(E_{n_1})\bigg) \bigg(\sum_{n_2 \ge 1} \I(\#\C_{n_2} = k) \bone_{[t, t + h]}(E_{n_2}) \bigg) \\
    &= \sum_{n_1} \P(\#\C_{n_1} = k) \P(E_{n_1} \in [t, t + h]) \\
    &\qquad\qquad + 2\sum_{\substack{n_1 \ge 1 \\ n_2 > n_1}} \P(\#\C_{n_1} = k, \#\C_{n_2} = k) \P(E_{n_1}, E_{n_2} \in [t, t + h]) \\
    &= e^{-\lambda_{t, h}}\lambda_{t, h}\sum_{n \ge 1}\P \bigg(\#\C_{n} = k\bigg)\P(U = n - 1) \\
    &\qquad\qquad + 2 e^{-\lambda_{t, h}}\frac{\lambda_{t, h}^{2}}{2}\sum_{n \ge 1} \P(\#\C_{n} = k, \#\C_{n + 1} = k') \P(U = n - 1) + R_2(t, h) \\ 
    &= e^{-\lambda_{t, h}}\bigg(\lambda_{t, h} H_t(k) + \frac{\lambda_{t, h}^{2}}{2} G_t(k, k)\bigg) + R_2(t, h),
\end{align*}
where $|R_2(t, h)| \le \lambda_{t, h} (1 + \frac{1}{1 - \lambda_{t, h}}) + 2\lambda_{t, h}^3 (1 + \frac{1}{1 - \lambda_{t, h}} + \frac{1}{(1 - \lambda_{t, h})^2})$ by \eqref{eq:bound_En} and \eqref{eq:bound_En1_En2}.

Finally,
\begin{align*}
    \E (X_{k + 1}[t, t + h] &X_{k' + 1}[t, t + h]) \\
    &= \E \bigg(\sum_{n_1 \ge 1} \I(\#\C_{n_1} = k) \bone_{[t, t + h]}(E_{n_1})\bigg) \bigg(\sum_{n_2 \ge 1} \I(\#\C_{n_2} = k') \bone_{[t, t + h]}(E_{n_2}) \bigg) \\
    &= \sum_{\substack{n_1 \ge 1 \\ n_2 > n_1}} \P(\#\C_{n_1} = k, \#\C_{n_2} = k') \P(E_{n_1}, E_{n_2} \in [t, t + h]) \\
    &\qquad\qquad + \sum_{\substack{n_2 \ge 1 \\ n_1 > n_2}}\P(\#\C_{n_1} = k, \#\C_{n_2} = k') \P(E_{n_1}, E_{n_2} \in [t, t + h]) \\
    &=e^{-\lambda_{t, h}}\frac{\lambda_{t, h}^{2}}{2}\sum_{n \ge 1} \P(\#\C_{n} = k, \#\C_{n + 1} = k') \P(U = n - 1) \\ 
    &\qquad\qquad + e^{-\lambda_{t, h}}\frac{\lambda_{t, h}^{2}}{2}\sum_{n \ge 1} \P(\#\C_{n + 1} = k, \#\C_{n} = k') \P(U = n - 1) + R_3(t, h)\\
    &= e^{-\lambda_{t, h}}\frac{\lambda_{t, h}^{2}}{2} (G_t(k, k') + G_t(k', k)) + R_3(t, h), \\
\end{align*}
where $|R_3(t, h)| \le 2\lambda_{t, h}^3 (1 + \frac{1}{1 - \lambda_{t, h}} + \frac{1}{(1 - \lambda_{t, h})^2})$ by \eqref{eq:bound_En1_En2}.
\end{proof}

\begin{proof}[Proof of Theorem~\ref{thm:xk-moments}]
Follows from the estimates in Lemma~\ref{lem:est_xk-moments} and the fact that $\lim_{h \downarrow 0}\frac{\lambda_{t, h}}{h} = \lambda(t)$.
\end{proof}

\begin{proof}[Proof of Lemma~\ref{lem:ind-expression}]
Note that
\begin{align*}
    I_\varphi[s, t] &= \frac{\sum_{k \ge 1}\varphi(k)X_k[s, t]}{N[s, t]} \\
    &= \frac{\sum_{k \ge 1}\varphi(k)\sum_{n \ge 1} \I(\#\C_n = k - 1) \bone_{[s, t]}(E_n)}{N[s, t]} \\
    &= \sum_{n \ge 1} \frac{\bone_{[s, t]}(E_n)}{N[s, t]} \sum_{k \ge 1} \varphi(k) \I(\#\C_n = k - 1) \\
    &= \sum_{n \ge 1} \frac{\bone_{[s, t]}(E_n)}{N[s, t]} \varphi(\# \C_n + 1).
\end{align*}
The desired result now follows from the independence of $N$ and $\C_n$.
\end{proof}

\begin{proof}[Proof of Lemma~\ref{lem:UV}]
Note that $\bone_{[s, t]}(E_n) = 1$ if and only if $U \le n - 1$ and $U + V \ge n$. 
\end{proof}

\begin{proof}[Proof of Corollary~\ref{cor:const_F}]
That $\C_n$'s are independent when $F_n(k) = p_n$ for all $0 \le k \le n - 1$ is clear. In fact, in this case $\#\C_n \sim \mathrm{Binomial}(L, p_n)$.

If $F_n(k) = p$ for all $0 \le k \le n - 1, n \ge 1$, then $\#C_n \overset{d}{=} B \sim \mathrm{Binomial}(L, p)$. Therefore, using the fact that the sets $\C_n$, $n \ge 1$, are independent across themselves and also of $U = N[0, t]$, we have
\begin{align*}
    G_t(k, k') &= \P(\#\C_{U + 1} = k, \#\C_{U + 2} = k') \\
    &= \E_U \P(\#\C_{U + 1} = k, \#\C_{U + 2} = k' \mid U) \\
    &= \E_U \P(\#\C_{U + 1} = k) \P(\#\C_{U + 2} = k') \\
    &= \P(B = k) \P(B = k').
\end{align*}
Similarly, $G_t(k', k) = \P(B = k) \P(B = k')$, and $H_t(k) = \P(B = k)$.
Part (i) now follows Theorem~\ref{thm:xk-moments}.

Since $\E \varphi(\#\C_n + 1) = \E \varphi(B + 1)$, we have from Lemma~\ref{lem:ind-expression} that
\begin{align*}
    \E I_{\varphi}[t, t + h] &= \sum_{n \ge 1} \E \bigg[ \frac{\bone_{[t, t+h]}(E_n)}{N[t, t+h]} \bigg] \E \varphi(\# \C_n+1) \\
    &= \E \varphi(B + 1) \sum_{n \ge 1} \E \bigg[ \frac{\bone_{[t, t+h]}(E_n)}{N[t, t+h]} \bigg]\\
    &= \E \varphi(B + 1) \E \sum_{n \ge 1} \frac{\bone_{[t, t+h]}(E_n)}{N[t, t+h]} \\
    &= \E \varphi(B + 1).
\end{align*}
This completes the proof of part (ii).
\end{proof}

\begin{proof}[Proof of Lemma~\ref{lem:linear_F}]
We have
\begin{align*}
    \E [\#\C_{n} \mid \C_1, \ldots, \C_{n - 1}] &= \sum_{i = 1}^L (a_{n} m_{n - 1, i} + b_{n}) \\
    &= a_n \sum_{i = 1}^L m_{n - 1, i} + L b_n \\ 
    &= a_n \sum_{i = 1}^L \sum_{\ell = 1}^{n - 1} \bone_{\C_{\ell}}(i) + L b_n \\ 
    &= a_n \sum_{\ell = 1}^{n - 1} \sum_{i = 1}^L \bone_{\C_{\ell}}(i) + L b_n \\ 
    &= a_n \sum_{\ell = 1}^{n - 1} \#\C_{\ell} + L b_n.
\end{align*}
Now the desired result follows by taking expectation with respect to $\C_1, \ldots, \C_{n - 1}$.
\end{proof}

\begin{proof}[Proof of Lemma~\ref{lem:linear_F_soln}]
Write $A_n = \E\#\C_n$. Define $S_n = \sum_{\ell = 1}^n A_{\ell}$, with $S_0 := 0$. Then we have for any $n \ge 1$,
\[
    A_n = a_n S_{n - 1} + L b_n
\]
and
\[
    S_n = S_{n - 1} + A_n = (1 + a_n) S_{n - 1} + L b_n.
\]
By iterating backwards, it is easy to see that
\begin{align*}
    S_n &= (1 + a_n) S_{n - 1} + L b_n \\
        &= (1 + a_n) (1 + a_{n - 1}) S_{n - 2} + L(b_n + b_{n - 1} (1 + a_n)) \\
        &= (1 + a_n) (1 + a_{n - 1}) (1 + a_{n - 2}) S_{n - 3} + L(b_n + b_{n - 1}(1 + a_n) + b_{n - 2} (1 + a_{n - 1})(1 + a_n)) \\
        &\,\,\,\vdots \\
        &= \prod_{j = 2}^{n} (1 + a_j) S_0 + L \bigg(b_n + \sum_{j = 1}^{n - 1} b_j \prod_{\ell = j + 1}^n (1 + a_{\ell})\bigg) \\
        &= L \bigg(b_n + \sum_{j = 1}^{n - 1} b_j \prod_{\ell = j + 1}^n (1 + a_{\ell})\bigg).
\end{align*}
Thus
\begin{align*}
    A_n &= S_n - S_{n - 1} \\
        &= L\bigg(b_n + \sum_{j = 1}^{n - 1} b_j \prod_{\ell = j + 1}^n (1 + a_{\ell}) - b_{n - 1} - \sum_{j = 1}^{n - 2} b_j \prod_{\ell = j + 1}^{n - 1} (1 + a_{\ell})\bigg) \\
        &= L\bigg(b_n + b_{n - 1} a_n + \sum_{j = 1}^{n - 2} b_j a_n \prod_{\ell = j + 1}^{n - 1} (1 + a_{\ell})\bigg).
\end{align*}
This completes the proof.
\end{proof}

\begin{proof}[Proof of Theorem~\ref{thm:ind_rate}]
Recall that
\begin{align*}
    \E I_\varphi[t, t + h] = \sum_{n \ge 1} \E \bigg[ \frac{\bone_{[t, t + h]}(E_n)}{N[t, t + h]} \bigg] \E \varphi(\#\C_n + 1).
\end{align*}
Since
\[
    \E \bigg[ \frac{\bone_{[t, t + h]}(E_n)}{N[t, t + h]} \bigg] = \sum_{k = 0}^{n - 1} \P(U = k) \E\bigg[\frac{\I(V \ge n - k)}{V}\bigg]
    \ge \P(U = n - 1) \P(V = 1),
\]
we have
\begin{align*}
    \E I_\varphi[t, t + h] &\ge \sum_{n \ge 1} \P(U = n - 1) \P(V = 1) \E\varphi(\#\C_n + 1) \\
    &= \P(V = 1) \sum_{n \ge 1} \P(U = n - 1) \E\varphi(\#\C_n + 1) \\
    &= e^{-\lambda_{t, h}} \lambda_{t, h} \E \varphi(\#\C_{U + 1} + 1).
\end{align*}
On the other hand, since $\E \big[ \frac{\bone_{[t, t + h]}(E_n)}{N[t, t + h]} \big] \le \P(E_n \in [t, t + h])$, we have
\begin{align*}
    \E I_\varphi[t, t + h] &\le \sum_{n \ge 1} \E \varphi(\#\C_n + 1) \P(E_n \in [t, t + h]) \\
    &= \varphi(L + 1) \sum_{n \ge 1} \frac{\E \varphi(\#\C_n + 1)}{\varphi(L + 1)} \P(E_n \in [t, t + h]) \\
    &= \varphi(L + 1) \bigg[e^{-\lambda_{t, h}} \lambda_{t, h} \sum_{n \ge 1} \frac{\E \varphi(\#\C_n + 1)}{\varphi(L + 1)} \P(U = n - 1) + R_4(t, h)\bigg] \\
    &= e^{-\lambda_{t, h}} \lambda_{t, h} \sum_{n \ge 1} \E \varphi(\#\C_n + 1) \P(U = n - 1) + \varphi(L + 1) R_4(t, h) \\
    &= e^{-\lambda_{t, h}} \lambda_{t, h} \E \varphi(\#\C_{U + 1} + 1) + \varphi(L + 1) R_4(t, h),
\end{align*}
where $|R_4(t, h)| \le \lambda_{t, h}^2(1 + \frac{1}{1 - \lambda_{t, h}})$.

We conclude that
\[
    |\E I_\varphi[t, t + h] - e^{-\lambda_{t, h}} \lambda_{t, h} \E \varphi(\#\C_{U + 1} + 1)| \le \varphi(L + 1) R_4(t, h).
\]
The desired result now follows from the above estimate. 
\end{proof}

Let $\mu_{n, r}$ be the $r$-th moment of $m_{n, 1}$, given by
\[
    \mu_{n, r} := \sum_{k = 1}^{n} k^r p_{n, k}.
\]

\begin{proof}[Proof of Theorem~\ref{thm:ests-well-defined}]
    (a) Let us use the convention that for $n \ge 0$, $F_n(-1) = 0, p_{0, 0} = p_{n, n + 1} = 0$. Note that $p_{1, 0} = 1 - F_1(0) > 0$, and for $k \in [n]$, we have 
\begin{align*}
    p_{n, k} &= \P(m_{n, 1} = k) \\
    &= \P(m_{n, 1} = k, m_{n - 1, 1} = k - 1) + \P(m_{n, 1} = k, m_{n - 1, 1} = k) \\
    &= \P(\bone_{C_n}(1) = 1 \mid m_{n - 1, 1} = k - 1) \P(m_{n - 1, 1} = k - 1) \\
    &\qquad\qquad\qquad + \P(\bone_{C_n}(1)  = 0 \mid m_{n - 1, 1} = k) \P(m_{n - 1, 1} = k) \\
    &= F_n(k - 1) p_{n - 1, k - 1} + (1 - F_n(k)) p_{n - 1, k}.
\end{align*} 
   
By induction, we get that $p_{n, k} > 0$ for all $n \ge 1$, $k = 0, 1, \ldots, n$. By the strong law of large numbers, as $L \to \infty$, we have that
\begin{equation}\label{eq:slln}
    \frac{1}{L}\sum_{i \in [L]}\I\{m_{n, i} = k\} \xrightarrow{\mathrm{a.s.}} p_{n, k} > 0.
\end{equation}
This completes the proof.

(b) Fix $n$. As $p_{n,k} > 0$ for $k \in [n]$, we have $\var(m_{n,1}) := \mu_{n, 2} - \mu^2_{n, 1} > 0$. By the strong law of large numbers, as $L \to \infty$, we have that 
\[
    \frac{1}{L}\sum_{i \in [L]}m^2_{n,i} \xrightarrow{\mathrm{a.s.}} \mu_{n, 2}, \quad \frac{1}{L}\sum_{i \in [L]}m_{n,i} \xrightarrow{\mathrm{a.s.}} \mu_{n, 1}.
\]
Therefore, as $L \to \infty$,
\[
    \frac{1}{L}\sum_{i \in [L]} m^2_{n, i} - \bigg(\frac{1}{L}\sum_{i \in [L]}m_{n, i}\bigg)^2 \xrightarrow{\mathrm{a.s.}} \mu_{n, 2} - \mu^2_{n, 1} > 0.
\]
This completes the proof.
\end{proof}

\begin{proof}[Proof of Theorem~\ref{thm:consistency-nonparam}]
Define the random variables 
\[
    X_i = \I(m_{n - 1, i} = k) \bone_{\C_{n}}(i), \quad Y_i = \I(m_{n - 1, i} = k), \quad i \in [L].
\]
so that $X_i \stackrel{\mathrm{i.i.d.}}{\sim} X_1, Y_i \stackrel{\mathrm{i.i.d.}}{\sim} Y_1$. Also, $X_1, Y_1$ are Bernoulli random variables with
\[
   \P(Y_1 = 1) = \P(m_{n - 1, 1} = k) = p_{n - 1, k}
\]
and
\begin{align*}
    \P(X_1 = 1) &= \P(X_1 = 1, Y_1 = 1) \\
                        &= \P(m_{n - 1, 1} = k, \bone_{\C_{n}}(1) = 1) \\
                        &= \P(\bone_{\C_{n}}(1) = 1 \mid m_{n - 1, 1} = k) \P(m_{n - 1, 1} = k) \\
                        &= F_{n}(k) p_{n - 1, k}.
\end{align*}
Let 
\begin{align*}
    \mu_x &:= \E(X_1) = F_n(k) p_{n - 1, k}, \\
    \mu_y &:= \E(Y_1) = p_{n - 1, k}, \\
    \sigma_x^2 &:= \var(X_1) = F_n(k) p_{n - 1, k} (1 - F_n(k) p_{n - 1, k}), \\
    \sigma_y^2 &:= \var(Y_1) = p_{n - 1, k}(1 - p_{n - 1, k}), \\
    \rho\sigma_{x}\sigma_{y} &:= \cov(X_1, Y_1) = F_n(k) p_{n - 1, k} - F_n(k) p^2_{n - 1, k}.
\end{align*}
Using Lemma \ref{lem:wlln-clt}, we immediately get that
\[
    \hat{F}_n(k) = \frac{\sum_{i \in [L]}\I(m_{n - 1, i} = k) \bone_{\C_{n}}(i)}{\sum_{i \in [L]}\I(m_{n - 1, i} = k)} \xrightarrow{p} \frac{F_n(k) p_{n - 1, k}}{p_{n - 1, k}} = F_{n}(k)
\]
and
\[
    \sqrt{L}\bigg(\hat{F}_n(k) - F_n(k)\bigg) = \sqrt{L}\bigg(\frac{\sum_{i \in [L]} \I(m_{n - 1, i} = k) \bone_{\C_{n}}(i)}{\sum_{i \in [L]}\I(m_{n - 1, i} = k)} - \frac{F_n(k) p_{n - 1, k}}{p_{n - 1, k}}\bigg)
\]
converges in distribution to a normal random variable with mean $0$ and variance
\begin{align*}
    &\frac{p^3_{n - 1, k} F_n(k) (1 - F_n(k) p_{n - 1, k}) + F^2_n(k) p^3_{n - 1, k} (1 - p_{n - 1, k}) - 2 F^2_n(k) p^3_{n - 1, k}(1 - p_{n - 1, k})}{p^4_{n - 1, k}} \\
    &= \frac{F_n(k) (1 - F_n(k) p_{n - 1, k}) + F^2_n(k) (1 - p_{n - 1, k}) - 2 F^2_n(k) (1 - p_{n - 1, k})}{p_{n - 1, k}} \\
    &= \frac{F_n(k) (1 - F_n(k))}{p_{n - 1, k}}.
\end{align*}
This completes the proof.
\end{proof}

\begin{proof}[Proof of Corollary~\ref{cor:confint}]
From \eqref{eq:slln}, \eqref{eq:asconv}, \eqref{eq:av} and \eqref{eq:ev}, we get that
\begin{align}\label{eq:vconv}
   \hat{\sigma}^2 \xrightarrow{p} {\sigma}^2.
\end{align}
Then \eqref{eq:wconv}, \eqref{eq:vconv} and Slutsky's theorem implies that
\[
    \frac{\sqrt{L}\big(\hat{F}_n(k) - F_n(k)\big)}{\hat{\sigma}} \xrightarrow{d} N(0, 1).
\]
In other words,
\begin{align*}
    1 - q &= \lim_{L \to \infty} \P\bigg(-z_{q/2} \le  \frac{\sqrt{L}\big(\hat{F}_n(k) - F_n(k)\big)}{\hat{\sigma}} \le z_{q/2}\bigg) \\
          &= \lim_{L \to \infty} \P\bigg({F}_n(k) \in \bigg(\hat{F}_n(k) - \frac{\hat{\sigma}}{\sqrt{L}} z_{q/2}, \hat{F}_n(k) + \frac{\hat{\sigma}}{\sqrt{L}} z_{q/2}\bigg)\bigg).
\end{align*}
This completes the proof.
\end{proof}

\begin{proof}[Proof of Theorem~\ref{thm:consistency-param}]
Define the random variables
\[
    X_i := \bone_{\C_{n}}(i), \quad Y_i := m_{n - 1, i}, \quad i \in [L]
\]
so that $X_i \stackrel{\mathrm{i.i.d.}}{\sim} X_1, Y_i \stackrel{\mathrm{i.i.d.}}{\sim} Y_1$. Then $X_1$ is a Bernoulli random variable and $Y_1$ is a discrete random variable taking values in $\{0\} \cup [n - 1]$ with
\begin{align*}
    \P(Y_1 = k) = \P(m_{n - 1, 1} = k) = p_{n - 1, k}.
\end{align*}
For $\lambda > 0$, let
\[
    s_{0, b} := \E(Y_1^{\lambda}) = \mu_{n - 1, \lambda},
\]
and for $\gamma > 0, \lambda \ge 0$, let 
\begin{align*}
    s_{\gamma, \lambda} := \E(X_1^{\gamma} Y_1^{\lambda}) &= \sum_{k = 1}^{n - 1} k^{\lambda} \P(X_1 = 1, Y_1 = k) \\
                                                                  &= \sum_{k = 1}^{n - 1} k^{\lambda} \P(\bone_{\C_{n}}(1) = 1 \mid m_{n - 1, 1} = k) \P(m_{n - 1, 1} = k) \\
                                                                  &= \sum_{k = 1}^{n - 1} k^{\lambda} F_n(k) p_{n - 1, k} \\
                                                                  &= a_n \sum_{k = 1}^{n - 1} k^{\lambda + 1} p_{n - 1, k} + b_n \sum_{k = 1}^{n - 1} k^{\lambda} p_{n - 1, k} \\
                                                                  &= a_n \mu_{n - 1, \lambda + 1} + b_n \mu_{n - 1, \lambda}.
\end{align*}
Therefore
\begin{align*}
    \Sigma = ((\sigma_{ij})) &:= \cov(X_1, Y_1, Y_1^2, X_1 Y_1) \\
                             &= \begin{bmatrix} 
                                    s_{1, 0} - s^2_{1, 0} & s_{1, 1} - s_{1, 0} s_{0, 1} & s_{1, 2} - s_{1, 0} s_{0, 2} & s_{2, 1} - s_{1, 0} s_{1, 1} \\
                                    s_{1, 1} - s_{1, 0} s_{0, 1} & s_{0, 2} - s^2_{0, 1} & s_{0, 3} - s_{0, 1} s_{0, 2} & s_{1, 2} - s_{1, 0} s_{1, 1} \\
                                    s_{1, 2} - s_{1, 0} s_{0, 2} & s_{0, 3} - s_{0, 1} s_{0, 2} & s_{0, 4} - s_{0, 2}^2 & s_{1, 3} - s_{0, 2} s_{1, 1} \\
                                    s_{2, 1} - s_{1, 0} s_{1, 1} & s_{1, 2} - s_{1, 0} s_{1, 1} & s_{1, 3} - s_{0, 2} s_{1, 1} & s_{2, 2} - s^2_{1, 1}
                                \end{bmatrix}
\end{align*}
and
\begin{align*}
    \mu_x &:= \E(X_1) = a_n\mu_{n - 1, 1} + b_n, \\
    \mu_y &:= \E(Y_1) = \mu_{n - 1, 1}, \\ 
    \sigma_x^2 &:= \sigma_{11} = (a_n \mu_{n - 1, 1} + b_n) (1 - a_n\mu_{n - 1, 1} - b_n), \\
    \sigma_y^2 &:= \sigma_{22} = \mu_{n - 1, 2} - \mu_{n - 1, 1}^2, \\ 
    \beta &:= \frac{\rho \sigma_x \sigma_y}{\sigma_y^2} = \frac{\sigma_{12}}{\sigma_{22}} = \frac{a_n \mu_{n - 1, 2} + b_n \mu_{n - 1, 1} - (a_n \mu_{n - 1, 1} + b_n) \mu_{n - 1, 1}}{\mu_{n - 1, 2} - \mu_{n - 1, 1}^2} = a_n, \\
    \alpha &:= \mu_x - \beta \mu_y = a_n \mu_{n - 1, 1} + b_n - a_n \mu_{n - 1, 1} = b_n.
\end{align*}
Thus $\hat{\beta} = \hat{a}_n$, $\hat{\alpha} = \hat{b}_n$. Then, parts (a) and (b) of Lemma \ref{lem:pest} immediately imply that 
\begin{align*}
    \hat{a}_n \xrightarrow{p} a_n, \quad \hat{b}_n \xrightarrow{p} b_n
\end{align*}
and
\begin{align*}
    \sqrt{L}(\hat{a}_n - a_n) \xrightarrow{d} N(0, \sigma_a^2), \quad \sqrt{L}(\hat{b}_n - b_n) \xrightarrow{d} N(0, \sigma_b^2),
\end{align*}
where
\begin{align*}
    \sigma_a^2 &= \frac{\mu_y^2\sigma_{11}}{\sigma_y^4} + \frac{2\alpha\mu_y\sigma_{12}}{\sigma_y^2} + \frac{2\beta\mu_y\sigma_{13}}{\sigma_y^4} - \frac{2\mu_y\sigma_{14}}{\sigma_y^4} + \frac{\alpha^2\sigma_{22}}{\sigma_y^4} \\
               &\qquad\qquad + \frac{2\alpha\beta\sigma_{23}}{\sigma_y^4} - \frac{2\alpha\sigma_{24}}{\sigma_y^4} + \frac{\beta^2\sigma_{33}}{\sigma_y^4} - \frac{2\beta\sigma_{34}}{\sigma_y^4} + \frac{\sigma_{44}}{\sigma_y^4}
\end{align*}
and
\begin{align*}
    \sigma_b^2 &= \bigg(1 + \frac{\mu_y^2}{\sigma_y^2}\bigg)^2 \sigma_{11} - 2\bigg(1 + \frac{\mu_y^2}{\sigma_y^2}\bigg) \bigg(\frac{\beta\sigma_y^2 + 2\beta\mu_y^2 - \mu_x\mu_y}{\sigma_y^2}\bigg) \sigma_{12} + 2\bigg(1 + \frac{\mu_y^2}{\sigma_y^2}\bigg) \frac{\beta\mu_y\sigma_{13}}{\sigma_y^2} \\
               &\qquad - 2\bigg(1 + \frac{\mu_y^2}{\sigma_y^2}\bigg) \frac{\mu_y\sigma_{14}}{\sigma_y^2} + \bigg(\frac{\beta\sigma^2_y + 2\beta\mu_y^2 - \mu_x\mu_y}{\sigma_y^2}\bigg)^2 \sigma_{22} \\
               &\qquad - 2\bigg(\frac{\beta\sigma_y^2 + 2\beta\mu_y^2 - \mu_x\mu_y}{\sigma_y^2}\bigg) \frac{\beta\mu_y\sigma_{23}}{\sigma_y^2} \\
               &\qquad + 2\bigg(\frac{\beta\sigma_y^2 + 2\beta\mu_y^2 - \mu_x\mu_y}{\sigma_y^2}\bigg)\frac{\mu_y\sigma_{24}}{\sigma_y^2} + \frac{\beta^2\mu_y^2\sigma_{33}}{\sigma_y^4} - \frac{2\beta\mu_y^2\sigma_{34}}{\sigma_y^4} + \frac{\mu_y^2\sigma_{44}}{\sigma_y^4}.
\end{align*}
This completes the proof.
\end{proof}

\begin{proof}[Proof of Corollary~\ref{cor:confint-param}]
Part (c) of Lemma \ref{lem:pest} implies the existence of consistent estimators $\hat{\sigma}^2_a$ and $\hat{\sigma}^2_b$ of $\sigma^2_a$ and $\sigma^2_b$, respectively.
Then \eqref{eq:anpest} and Slutsky's theorem gives that
\[
    \frac{\sqrt{L}(\hat{a}_n - a_n)}{\hat{\sigma}_a} \xrightarrow{d} N(0, 1)
\]
and
\[
    \frac{\sqrt{L}(\hat{b}_n - b_n)}{\hat{\sigma}_b} \xrightarrow{d} N(0, 1).
\]
In other words,
\begin{align*}
    1 - q &= \lim_{L \to \infty} \P\bigg(-z_{q/2} \le  \frac{\sqrt{L}(\hat{a}_n - a_n)}{\hat{\sigma}_a} \le z_{q/2}\bigg) \\
          &= \lim_{L \to \infty}\P\bigg(a_n \in \bigg(\hat{a}_n - \frac{\hat{\sigma}_a}{\sqrt{L}}z_{q/2}, \hat{a}_n + \frac{\hat{\sigma}_a}{\sqrt{L}}z_{q/2}\bigg)\bigg)
\end{align*}
and
\begin{align*}
   1 - q &= \lim_{L \to \infty}\P\bigg(-z_{q/2} \le  \frac{\sqrt{L}(\hat{b}_n - b_n)}{\hat{\sigma}_b} \le z_{q/2}\bigg) \\
         &= \lim_{L \to \infty}\P\bigg(b_n \in \bigg(\hat{b}_n - \frac{\hat{\sigma}_b}{\sqrt{L}}z_{q/2}, \hat{b}_n + \frac{\hat{\sigma}_b}{\sqrt{L}}z_{q/2}\bigg)\bigg).
\end{align*}
This completes the proof.
\end{proof}

\section{Technical lemmas}\label{sec:tech-lem}
\begin{lemma}\label{lem:wlln-clt}
Suppose that $(X_i, Y_i), i \ge 1,$ are i.i.d. nonnegative bivariate random vectors with $\E(X_1) = \mu_x$, $\E(Y_1) = \mu_y \ne 0$, $\var(X_1) = \sigma_x^2$, $\var(Y_1) = \sigma_y^2$ and $\corr(X_1, Y_1) = \rho$. Then
\[
    \frac{\sum_{i = 1}^n X_i}{\sum_{i = 1}^n Y_i} \xrightarrow{p} \frac{\mu_x}{\mu_y} \quad \text{and} \quad \sqrt{n}\bigg(\frac{\sum_{i = 1}^n X_i}{\sum_{i = 1}^n Y_i} - \frac{\mu_x}{\mu_y}\bigg) \xrightarrow{d} N(0, \sigma_0^2),
\]
where
\[
    \sigma_0^2 = \frac{\mu_y^2 \sigma_x^2 + \mu_x^2 \sigma_y^2 - 2 \rho \mu_x \mu_y \sigma_x \sigma_y}{\mu_y^4}.
\]
\end{lemma}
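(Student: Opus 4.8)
The plan is to recognise $\frac{\sum_{i=1}^n X_i}{\sum_{i=1}^n Y_i}$ as the image under the smooth map $g(a,b) = a/b$ of the pair of sample means $\bar X_n := \frac{1}{n}\sum_{i=1}^n X_i$ and $\bar Y_n := \frac{1}{n}\sum_{i=1}^n Y_i$, and then to transport the standard limit theorems for $(\bar X_n, \bar Y_n)$ through $g$. The hypothesis $\mu_y \ne 0$ is precisely what makes $g$ continuous and differentiable at the limiting point $(\mu_x, \mu_y)$, so it is the load-bearing assumption throughout.

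For the consistency statement, I would write the ratio as $\bar X_n / \bar Y_n$ and apply the weak law of large numbers to obtain $\bar X_n \xrightarrow{p} \mu_x$ and $\bar Y_n \xrightarrow{p} \mu_y$. Since $g$ is continuous at $(\mu_x, \mu_y)$ (using $\mu_y \ne 0$), the continuous mapping theorem immediately gives $\bar X_n / \bar Y_n \xrightarrow{p} \mu_x / \mu_y$.

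For the asymptotic normality, I would combine the bivariate central limit theorem with the delta method. As the $(X_i, Y_i)$ are i.i.d. with finite second moments, the CLT yields
\[
    \sqrt{n}\left( \begin{pmatrix} \bar X_n \\ \bar Y_n \end{pmatrix} - \begin{pmatrix} \mu_x \\ \mu_y \end{pmatrix} \right) \xrightarrow{d} N(\mathbf{0}, \Sigma), \qquad \Sigma = \begin{pmatrix} \sigma_x^2 & \rho\sigma_x\sigma_y \\ \rho\sigma_x\sigma_y & \sigma_y^2 \end{pmatrix}.
\]
The map $g$ is differentiable at $(\mu_x, \mu_y)$ with gradient $\nabla g(\mu_x, \mu_y) = \big(\tfrac{1}{\mu_y},\, -\tfrac{\mu_x}{\mu_y^2}\big)^{\!\top}$, so the delta method gives $\sqrt{n}\big(\bar X_n/\bar Y_n - \mu_x/\mu_y\big) \xrightarrow{d} N(0, \sigma_0^2)$ with $\sigma_0^2 = \nabla g^{\top} \Sigma \, \nabla g$.

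The only remaining task is to evaluate the quadratic form, which is routine: expanding $\nabla g^{\top} \Sigma \, \nabla g$ produces the three terms $\sigma_x^2/\mu_y^2$, $\mu_x^2\sigma_y^2/\mu_y^4$, and $-2\rho\mu_x\sigma_x\sigma_y/\mu_y^3$, which combine over the common denominator $\mu_y^4$ into the claimed expression for $\sigma_0^2$. There is no genuine obstacle in this argument; the one point requiring care is verifying the differentiability hypothesis of the delta method at the limiting point, which is exactly where $\mu_y \ne 0$ enters, and confirming that the finite variances $\sigma_x^2, \sigma_y^2$ supply the finite-second-moment condition needed for the CLT.
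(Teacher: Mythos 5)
Your proof is correct, and the consistency step is essentially the paper's (WLLN plus continuous mapping, though the paper is slightly more careful in defining the map as $(x,y)\mapsto \frac{x}{y}\bone\{y\ne 0\}$ so that the ratio is well-defined on the event $\sum_i Y_i = 0$ — a technicality you gloss over but which is easily patched). For the asymptotic normality, however, you take a genuinely different route. The paper avoids the bivariate CLT and the delta method entirely: it introduces the auxiliary variables $Z_i = X_i - \frac{\mu_x}{\mu_y}Y_i$, observes that $\E Z_i = 0$ and $\var(Z_i) = \sigma_x^2 + (\mu_x/\mu_y)^2\sigma_y^2 - 2(\mu_x/\mu_y)\rho\sigma_x\sigma_y$, applies the univariate CLT to $\sqrt{n}\,\bar{Z}_n$, and then writes $\sqrt{n}\big(\frac{\sum_i X_i}{\sum_i Y_i} - \frac{\mu_x}{\mu_y}\big) = \sqrt{n}\,\bar{Z}_n/\bar{Y}_n$ and finishes with Slutsky. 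This is the classical hand-linearization of the ratio; it needs only the one-dimensional CLT and yields the variance $V/\mu_y^2 = \sigma_0^2$ directly without evaluating a quadratic form. Your delta-method argument is the more systematic and general one (and is in fact the technique the paper itself deploys for the four-dimensional statistic in Lemma~\ref{lem:pest}), at the mild cost of invoking the multivariate CLT and computing $\nabla g^{\top}\Sigma\nabla g$, which you do correctly. Both arguments are valid and give the same $\sigma_0^2$.
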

\begin{proof}
By the weak law of large numbers,
\[
    (\bar{X}_n, \bar{Y}_n) := \bigg(\frac{1}{n}\sum_{i = 1}^n X_i, \frac{1}{n}\sum_{i = 1}^n Y_i\bigg) \xrightarrow{p} (\mu_x, \mu_y).
\]
Applying the continuous mapping theorem with the map
\[
    (x,y) \mapsto \frac{x}{y}\bone\{y\ne0\},
\]
which has only finitely many discontinuity points, we get 
\[
    \frac{\sum_{i = 1}^n X_i}{\sum_{i = 1}^n Y_i} \bone\bigg\{\sum_{i = 1}^n Y_i \ne 0\bigg\} = \frac{\bar{X}_n}{\bar{Y}_n}\bone\bigg\{\bar{Y}_n \ne 0\bigg\} \xrightarrow{p} \frac{\mu_{x}}{\mu_{y}}.
\]
Let $Z_i = X_i - \frac{\mu_x}{\mu_y}Y_i$. Note that $\{Z_i\}_{i \ge 1}$ are i.i.d. as $\{(X_{i}, Y_{i})\}_{i \ge 1}$ are so.
Now 
\[
    \E(Z_i) = \E(X_i) - \frac{\mu_x}{\mu_y}\E(Y_i) = \mu_x - \frac{\mu_x}{\mu_y}\mu_y = 0
\]
and
\begin{align*}
    V := \var(Z_i) &= \var(X_i) + \bigg(\frac{\mu_x}{\mu_y}\bigg)^2 \var(Y_i) - 2\frac{\mu_x}{\mu_y} \cov(X_i, Y_i) \\
                    &= \sigma_x^2 + \bigg(\frac{\mu_x}{\mu_y}\bigg)^2\sigma_y^2 - 2\frac{\mu_x}{\mu_y}\rho\sigma_x\sigma_y \\
                    &= \frac{1}{\mu_y^2}\big(\mu_y^2 \sigma_x^2 + \mu_x^2 \sigma_y^2 - 2 \rho \mu_x \mu_y \sigma_x \sigma_y\big).
\end{align*}
Therefore, using the central limit theorem, we get that
\begin{align*}
    \sqrt{n}\bar{Z}_n \xrightarrow{d} Z \sim N(0, V).
\end{align*}
By Slutsky's Theorem and using the fact that $\mu_y \ne 0$, we get that
\[
    \sqrt{n}\bigg(\frac{\sum_{i = 1}^n X_i}{\sum_{i = 1}^n Y_i} - \frac{\mu_x}{\mu_y}\bigg) = \frac{\sqrt{n}}{n\bar{Y}_n}\bigg(\sum_{i = 1}^n\bigg(X_i - \frac{\mu_x}{\mu_y}Y_i\bigg)\bigg) = \sqrt{n}\frac{\bar{Z}_n}{\bar{Y}_n} \xrightarrow{d} \frac{Z}{\mu_y} \sim N(0, V / \mu_y^2).
\]
This completes the proof.
\end{proof}

\begin{lemma}\label{lem:pest}
Suppose that $(X_i, Y_i), i \ge 1$, are i.i.d. bivariate random vectors with $\E(X_1) = \mu_x$, $\E(Y_1) = \mu_y$, $\var(X_1) = \sigma_x^2$, $\var(Y_1) = \sigma_y^2 > 0$, $\corr(X_1, Y_1) = \rho$ and 
\[
    \Sigma = ((\sigma_{ij})) := \cov(X_1, Y_1, Y_1^2, X_1 Y_1).
\]
Then we have the following:
\begin{itemize}
    \item[(a)] Let $\beta = \frac{\rho\sigma_x}{\sigma_y}$ and
    \[
        \hat{\beta} = \frac{\frac{1}{n}\sum_{i = 1}^n X_i Y_i - (\frac{1}{n}\sum_{i = 1}^n X_i) (\frac{1}{n}\sum_{i = 1}^n Y_i)}{\frac{1}{n}\sum_{i = 1}^n Y_i^2 - (\frac{1}{n}\sum_{i = 1}^n Y_i)^2}.
    \]
    Then 
    \[
        \hat{\beta} \xrightarrow{p} \beta \quad \text{and} \quad \sqrt{n}(\hat{\beta} - \beta) \xrightarrow{d} N(0, \sigma_1^2),
    \]
    where
    \begin{align*}
        \sigma_1^2 &= \frac{\mu_y^2\sigma_{11}}{\sigma_y^4} + \frac{2\alpha\mu_y\sigma_{12}}{\sigma_y^4} + \frac{2\beta\mu_y\sigma_{13}}{\sigma_y^4} - \frac{2\mu_y\sigma_{14}}{\sigma_y^4} + \frac{\alpha^2\sigma_{22}}{\sigma_y^4} \\
                   &\qquad\qquad + \frac{2\alpha\beta\sigma_{23}}{\sigma_y^4} - \frac{2\alpha\sigma_{24}}{\sigma_y^4} + \frac{\beta^2\sigma_{33}}{\sigma_y^4} - 2\frac{2\beta\sigma_{34}}{\sigma_y^4} + \frac{\sigma_{44}}{\sigma_y^4}.
    \end{align*}

    \item[(b)] Let $\alpha = \mu_x - \beta\mu_y$ and 
    \[
        \hat{\alpha} = \frac{1}{n}\sum_{i = 1}^n X_i - \frac{\hat{\beta}}{n}\sum_{i = 1}^n Y_i.
    \]
    Then 
    \[
        \hat{\alpha} \xrightarrow{p} \alpha \quad \text{and} \quad \sqrt{n}(\hat{\alpha} - \alpha) \xrightarrow{d} N(0, \sigma_2^2),
    \]
    where
    \begin{align*}
        \sigma_2^2 &= \bigg(1 + \frac{\mu_y^2}{\sigma_y^2}\bigg)^2 \sigma_{11} - 2\bigg(1 + \frac{\mu_y^2}{\sigma_y^2}\bigg)\bigg(\frac{\beta\sigma_y^2 + 2\beta\mu_y^2 - \mu_x\mu_y}{\sigma_y^2}\bigg) \sigma_{12} \\
                   &\qquad + 2\bigg(1 + \frac{\mu_y^2}{\sigma_y^2}\bigg)\frac{\beta\mu_y\sigma_{13}}{\sigma_y^2} - 2\bigg(1 + \frac{\mu_y^2}{\sigma_y^2}\bigg)\frac{\mu_y\sigma_{14}}{\sigma_y^2} + \bigg(\frac{\beta\sigma_y^2 + 2\beta\mu_y^2 - \mu_x\mu_y}{\sigma_y^2}\bigg)^2 \sigma_{22} \\
                   &\qquad - 2\bigg(\frac{\beta\sigma_y^2 + 2\beta\mu_y^2 - \mu_x\mu_y}{\sigma_y^2}\bigg)\frac{\beta\mu_y\sigma_{23}}{\sigma_y^2} \\
                   &\qquad + 2\bigg(\frac{\beta\sigma_y^2 + 2\beta\mu_y^2 - \mu_x\mu_y}{\sigma_y^2}\bigg)\frac{\mu_y\sigma_{24}}{\sigma_y^2} + \frac{\beta^2\mu_y^2\sigma_{33}}{\sigma_y^4} - \frac{2\beta\mu_y^2\sigma_{34}}{\sigma_y^4} + \frac{\mu_y^2\sigma_{44}}{\sigma_y^4}.
    \end{align*}

    \item[(c)] Let $i \in [2]$. There exists an estimator $\hat{\sigma}^2_i$ of $\sigma^2_i$ such that $\hat{\sigma}^2_i \xrightarrow{p} \sigma^2_i$.
\end{itemize}
\end{lemma}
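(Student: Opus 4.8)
The plan is to realise both $\hat\beta$ and $\hat\alpha$ as smooth functions of the sample mean of the i.i.d.\ vectors $\mathbf{W}_i := (X_i, Y_i, Y_i^2, X_iY_i)$, and then to combine the weak law of large numbers with the continuous mapping theorem for the consistency claims, and the multivariate central limit theorem with the delta method for the asymptotic normality claims. Concretely, I would first set $m := \E\mathbf{W}_1 = (\mu_x, \mu_y, \sigma_y^2 + \mu_y^2, \rho\sigma_x\sigma_y + \mu_x\mu_y)$ and $\bar{\mathbf{W}}_n := \frac1n\sum_{i=1}^n \mathbf{W}_i$. Since $\Sigma = \cov(\mathbf{W}_1)$ is assumed finite, the WLLN gives $\bar{\mathbf{W}}_n \xrightarrow{p} m$, while the CLT gives $\sqrt n(\bar{\mathbf{W}}_n - m) \xrightarrow{d} N(0, \Sigma)$; these two facts drive everything that follows.

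For part (a), I would observe that $\hat\beta = g(\bar{\mathbf{W}}_n)$ for $g(w_1,w_2,w_3,w_4) = (w_4 - w_1w_2)/(w_3 - w_2^2)$. Because $\sigma_y^2 > 0$, the denominator takes the value $\sigma_y^2 \ne 0$ at $w = m$, so $g$ is continuously differentiable near $m$ and $g(m) = \cov(X_1,Y_1)/\var(Y_1) = \beta$. Consistency then follows from the continuous mapping theorem, and the delta method gives $\sqrt n(\hat\beta - \beta) \xrightarrow{d} N(0, \nabla g(m)^\top \Sigma\, \nabla g(m))$. Computing the gradient yields
\[
    \nabla g(m) = \frac{1}{\sigma_y^2}\big(-\mu_y,\; 2\beta\mu_y - \mu_x,\; -\beta,\; 1\big),
\]
and expanding the quadratic form $\nabla g(m)^\top \Sigma\, \nabla g(m)$ as a weighted sum of the entries $\sigma_{ij}$ is what produces the displayed expression for $\sigma_1^2$.

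Part (b) I would treat identically: writing $\phi(w) := w_1 - g(w)\,w_2$, one has $\hat\alpha = \phi(\bar{\mathbf{W}}_n)$ and $\phi(m) = \mu_x - \beta\mu_y = \alpha$, so the continuous mapping theorem and the delta method (now with $\nabla\phi(m) = e_1 - \mu_y\,\nabla g(m) - \beta\,e_2$, where $e_1, e_2$ denote standard basis vectors) give consistency and $\sqrt n(\hat\alpha - \alpha) \xrightarrow{d} N(0, \sigma_2^2)$ with $\sigma_2^2 = \nabla\phi(m)^\top\Sigma\,\nabla\phi(m)$. A cleaner route is to apply the delta method jointly to $w \mapsto (g(w), \phi(w))$, obtaining joint asymptotic normality of $(\hat\beta, \hat\alpha)$ and both marginal variances at once. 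For part (c), I would note that each $\sigma_i^2$ is a fixed continuous function of $(m, \Sigma)$ on the region where $\sigma_y^2 > 0$; estimating $m$ by $\bar{\mathbf{W}}_n$ and $\Sigma$ by the sample covariance matrix $\hat{\Sigma}_n$ of the $\mathbf{W}_i$ (both consistent, the latter since the relevant moments are finite---indeed $X_1, Y_1$ are bounded in the intended application), the plug-in estimators $\hat\sigma_i^2$ are consistent by one more use of the continuous mapping theorem.

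The main obstacle is not conceptual but computational: the crux is the gradient calculation and the ensuing expansion of $\nabla g(m)^\top\Sigma\,\nabla g(m)$ and $\nabla\phi(m)^\top\Sigma\,\nabla\phi(m)$ into the long weighted sums of the $\sigma_{ij}$. This is routine multivariable calculus but is bookkeeping-heavy and error-prone (the authors themselves describe the resulting formulae as unpleasant), so care is needed to track every cross term. The one analytic subtlety to verify is the differentiability of $g$ at $m$, which is precisely where the standing hypothesis $\sigma_y^2 > 0$ is used.
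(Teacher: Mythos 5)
Your proposal is correct and follows essentially the same route as the paper's proof: both realise $\hat{\beta}$ and $\hat{\alpha}$ as smooth functions of the sample mean of the vectors $(X_i, Y_i, Y_i^2, X_iY_i)$, obtain consistency from the weak law of large numbers and the continuous mapping theorem, obtain asymptotic normality from the multivariate CLT and the delta method, and handle part (c) by plug-in estimation of $\nabla g(\btheta)^\top\Sigma\,\nabla g(\btheta)$. One remark: your gradient $\nabla g(m)=\sigma_y^{-2}\big(-\mu_y,\;2\beta\mu_y-\mu_x,\;-\beta,\;1\big)$ is the correct one, whereas the paper records the second coordinate as $-\alpha/\sigma_y^2=(\beta\mu_y-\mu_x)/\sigma_y^2$ (the paper's $\nabla g_2$ is, by contrast, consistent with your gradient); expanding the quadratic form with the correct gradient therefore produces $\alpha-\beta\mu_y$ wherever $\alpha$ appears in the displayed $\sigma_1^2$, so your expansion would not literally reproduce that formula but would instead correct it.
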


\begin{proof}
Let
\[
    \bZ^{\top}_i = (X_i, Y_i, Y_i^2, X_i Y_i) \quad \text{and} \quad \btheta^{\top} = (\mu_x, \mu_y, \sigma_y^2, \beta\sigma_y^2).
\] 
By the weak law of large numbers and the multivariate central limit theorem,
\begin{align*}
    \bar{\bZ}_n \xrightarrow{p} \btheta, \quad \sqrt{n}(\bar{\bZ}_n - \btheta) \xrightarrow{d} N(0, \Sigma),
\end{align*}
where $\Sigma = ((\sigma_{ij})) := \cov(\bZ_1)$.
\begin{itemize}
    \item[(a)] Consider the function $g_1 : \RR^4 \rightarrow \RR$ given by
    \[
        g_1(x_1, x_2, x_3, x_4) = \begin{cases}
            \frac{x_4 - x_1 x_2}{x_3 - x_2^2} & \text{if } (x_1, x_2, x_3, x_4) \in A, \\
            0 & \text{otherwise},
        \end{cases}
    \]
    where $A = \{(x_1, x_2, x_3, x_4) : x_3 - x_2^2 > 0\}$. As $\sigma_y^2 > 0$, $g_1(\bar{\bZ}_n) = \hat{\beta}$ with probability $1$ for all sufficiently large $n$ and $g_1(\btheta) = \beta$. As $g_1$ is continuous at $\btheta$, the continuous mapping theorem gives that
    \[
        \hat{\beta} = g_1(\bar{\bZ}_n) \xrightarrow{p} g_1(\btheta) = \beta.
    \]
    Also, as $g_1$ is differentiable at $\btheta$, an application of the delta method yields 
    \[
        \sqrt{n}(\hat{\beta} - \beta) = \sqrt{n}(g_1(\bar{\bZ}_n) - g_1(\btheta)) \xrightarrow{d} N(0, \nabla g_1(\btheta)^\top \Sigma \nabla g_1(\btheta)).
    \]
    Now, since
    \[
        \nabla g_1(\btheta) = \bigg(-\frac{\mu_y}{\sigma_y^2}, -\frac{\alpha}{\sigma_y^2}, -\frac{\beta}{\sigma_y^2}, \frac{1}{\sigma_y^2}\bigg),
    \]
    an explicit computation shows that $\nabla g_1(\btheta)^\top \Sigma \nabla g_1(\btheta) = \sigma_1^2$. 

    \vskip5pt
    \item[(b)] Consider the function $g_2 : \RR^4 \rightarrow \RR$ given by:
    \[
        g_2(x_1, x_2, x_3, x_4) = \begin{cases}
            x_1 - \frac{x_2(x_4 - x_1 x_2)}{x_3 - x_2^2} & \text{if } (x_1, x_2, x_3, x_4) \in A, \\
            0 & \text{otherwise},
        \end{cases}
    \]
    where $A = \{(x_1, x_2, x_3, x_4) : x_3 - x_2^2 > 0\}$. As $\sigma_y^2 > 0$, $g_2(\bar{\bZ}_n) = \hat{\alpha}$ with probability $1$ for all sufficiently large $n$ and $g_2(\btheta) = \alpha$. As $g_2$ is continuous at $\btheta$, the continuous mapping theorem implies that
    \[
        \hat{\alpha} = g_2(\bar{\bZ}_n) \xrightarrow{p} g_2(\btheta) = \alpha.
    \]
    Also, as $g_2$ is differentiable at $\btheta$, an application of the delta method yields
    \[
        \sqrt{n}(\hat{\alpha} - \alpha) = \sqrt{n}(g_2(\bar{\bZ}_n) - g_2(\btheta)) \xrightarrow{d} N(0, \nabla g_2(\btheta)^\top \Sigma \nabla g_2(\btheta)).
    \]
    Now, since
    \[
        \nabla g_2(\btheta) = \bigg(1 + \frac{\mu_y^2}{\sigma_y^2}, -\frac{\beta\sigma_y^2 + 2\beta\mu_y^2 - \mu_x\mu_y}{\sigma_y^2}, \frac{\beta\mu_y}{\sigma_y^2}, -\frac{\mu_y}{\sigma_y^2}\bigg),
    \]
    an explicit computation shows that $\nabla g_2 (\btheta)^\top \Sigma\nabla g_2(\btheta) = \sigma_2^2$.

    \vskip5pt
    \item[(c)] Fix $i \in [2]$. As $\nabla g_i$ is continuous at $\btheta$, the continuous mapping theorem implies that
    \[
        \nabla g_i(\bar{\bZ}_n) \xrightarrow{p} \nabla g_i(\btheta).
    \]
    By the weak law of large numbers,
    \[
        \hat{\Sigma} := \frac{1}{n}\sum_{i = 1}^n \bZ_i \bZ_i^\top - \bar{\bZ}_n \bar{\bZ}_n^\top \xrightarrow{p} \Sigma.
    \]
    Hence
    \[
        \hat{\sigma}_i^2 := \nabla g_i(\bar{\bZ}_n)^\top \hat{\Sigma} \nabla g_i(\bar{\bZ}_n) \xrightarrow{p} \nabla g_i(\btheta)^\top \Sigma \nabla g_i(\btheta) = \sigma^2_i.
    \]
\end{itemize}
\end{proof}
\end{document}